\newtheorem{thm}{Theorem}
\newtheorem{prop}[thm]{Proposition}
\newtheorem{lem}[thm]{Lemma}
\newtheorem{cor}[thm]{Corollary}
\theoremstyle{definition}
\newtheorem{defn}[thm]{Definition}
\newtheorem{exam}[thm]{Example}
\theoremstyle{remark}
\newtheorem{rem}[thm]{Remark}
\renewcommand{\leq}{\leqslant}
\renewcommand{\geq}{\geqslant}
\numberwithin{equation}{section}
\numberwithin{thm}{section}
\newcommand{\resetstep}{\setcounter{stepcounter}{0}}
\newcounter{stepcounter}
\newcommand{\step}[1]{\refstepcounter{stepcounter} \medskip {\sl {Step \thestepcounter} --- #1}}
\newcommand{\ZZ}{\mathbb{Z}}
\newcommand{\CC}{\mathbb{C}}
\newcommand{\RR}{\mathbb{R}}
\newcommand{\FF}{\mathbb{F}}
\newcommand{\Acal}{\mathcal A}
\newcommand{\Scal}{\mathcal S}
\newcommand{\Lset}{\mathcal{L}}
\newcommand{\Pset}{\mathcal{P}}
\newcommand{\Ralg}{\mathcal{R}}
\newcommand{\divides}{\mathrel|}
\newcommand{\ndivides}{\mathrel{\nmid}}
\DeclareMathOperator{\ord}{ord}
\DeclareMathOperator{\lcm}{LCM}
\DeclareMathOperator{\MM}{\mathsf{M}}
\DeclareMathOperator{\Cpoly}{\mathsf{C}_{\mathrm{poly}}}
\DeclareMathOperator{\Cpolystar}{\mathsf{C}^\star_{\mathrm{poly}}}
\DeclareMathOperator{\Crefpoly}{\tilde{\mathsf{C}}_{\mathrm{poly}}}
\DeclareMathOperator{\Crefpolystar}{\tilde{\mathsf{C}}^\star_{\mathrm{poly}}}
\DeclareMathOperator{\Cad}{\mathsf{C}_{\mathrm{ad}}}
\DeclareMathOperator{\Cadstar}{\mathsf{C}^\star_{\mathrm{ad}}}
\DeclareMathOperator{\Crefad}{\tilde{\mathsf{C}}_{\mathrm{ad}}}
\DeclareMathOperator{\Crefadstar}{\tilde{\mathsf{C}}^\star_{\mathrm{ad}}}
\DeclareMathOperator{\Cint}{\mathsf{C}_{\mathrm{int}}}
\DeclareMathOperator{\Cintstar}{\mathsf{C}^\star_{\mathrm{int}}}
\DeclareMathOperator{\Tcyc}{\mathsf{T}}
\begin{document}

\title[Faster integer and polynomial multiplication]{Faster integer and polynomial multiplication using cyclotomic coefficient rings}
\author{David Harvey}
\email{d.harvey@unsw.edu.au}
\address{School of Mathematics and Statistics, University of New South Wales, Sydney NSW 2052, Australia}
\author{Joris van der Hoeven}
\email{vdhoeven@lix.polytechnique.fr}
\address{Laboratoire d'informatique, UMR 7161 CNRS, \'Ecole polytechnique, 91128 Palaiseau Cedex, France}

\begin{abstract}
We present an algorithm that computes the product of two $n$-bit integers in $O(n \log n \, (4\sqrt 2)^{\log^* n})$ bit operations.
Previously, the best known bound was $O(n \log n \, 6^{\log^* n})$.
We also prove that for a fixed prime~$p$, polynomials in $\FF_p[X]$ of degree $n$ may be multiplied in $O(n \log n \, 4^{\log^* n})$ bit operations; the previous best bound was $O(n \log n \, 8^{\log^* n})$.
\end{abstract}

\maketitle

\thispagestyle{empty}

\section{Introduction}
\label{sec:intro}

In this paper we present new complexity bounds for multiplying integers and polynomials over finite fields.
Our focus is on theoretical bounds rather than practical algorithms.
We work in the deterministic multitape Turing model~\cite{Pap-complexity}, in which time complexity is defined by counting the number of steps, or equivalently, the number of `bit operations', executed by a Turing machine with a fixed, finite number of tapes.
The main results of the paper also hold in the Boolean circuit model.

The following notation is used throughout.
For $x \in \RR$, we denote by $\log^* x$ the iterated logarithm, that is, the least non-negative integer $k$ such that $\log^{\circ k} x \leq 1$, where $\log^{\circ k} x := \log \cdots \log x$ (iterated $k$ times).
For a positive integer $n$, we define $\lg n := \max(1, \lceil \log_2 n \rceil)$; in particular, expressions like $\lg \lg \lg n$ are defined and take positive values for all $n \geq 1$.
We denote the $n$-th cyclotomic polynomial by $\phi_n(X) \in \ZZ[X]$, and the Euler totient function by $\varphi(n)$.

All absolute constants in this paper are in principle effectively computable.
This includes the implied constants in all uses of $O(\cdot)$ notation.

\subsection{Integer multiplication}

Let $\MM(n)$ denote the number of bit operations required to multiply two $n$-bit integers.
For over 35 years, the best known bound for $\MM(n)$ was that achieved by the Sch\"onhage--Strassen algorithm \cite{SS-multiply}, namely
\begin{equation}
\label{eq:ss-int}
  \MM(n) = O(n \lg n \lg \lg n).
\end{equation}
In 2007, F\"urer described an asymptotically faster algorithm that achieves
\begin{equation}
\label{eq:furer}
  \MM(n) = O(n \lg n \, K_\ZZ^{\log^* n})
\end{equation}
for some unspecified constant $K_\ZZ > 1$ \cite{Fur-faster1,Fur-faster2}.
His algorithm reduces a multiplication of size $n$ to a large collection of multiplications of size exponentially smaller than~$n$; these smaller multiplications are handled recursively.
The $K_\ZZ^{\log^* n}$ term may be understood roughly as follows: the number of recursion levels is $\log^* n + O(1)$, and the constant $K_\ZZ$ measures the amount of `data expansion' that occurs at each level, due to phenomena such as zero-padding.

Immediately following F\"urer's work, De, Kurur, Saha and Saptharishi described a variant based on modular arithmetic \cite{DKSS-intmult}, instead of the approximate complex arithmetic used by F\"urer.
Their algorithm also achieves \eqref{eq:furer}, again for some unspecified $K_\ZZ > 1$.

The first explicit value for $K_\ZZ$ was given by Harvey, van der Hoeven and Lecerf, who described an algorithm that achieves \eqref{eq:furer} with $K_\ZZ = 8$ \cite{HvdHL-mul}.
Their algorithm borrows some important ideas from F\"urer's work, but also differs in several respects.
In particular, their algorithm has no need for the `fast' roots of unity that were the cornerstone of F\"urer's approach (and of the variant of \cite{DKSS-intmult}).
The main presentation in \cite{HvdHL-mul} is based on approximate complex arithmetic, and the paper includes a sketch of a variant based on modular arithmetic that also achieves $K_\ZZ = 8$.

In a recent preprint, the first author announced that in the complex arithmetic case, the constant may be reduced to $K_\ZZ = 6$, by taking advantage of new methods for truncated integer multiplication~\cite{Har-truncmul}.
This improvement does not seem to apply to the modular variants.

The first main result of this paper is the following further improvement.
\begin{thm}
\label{thm:int}
There is an integer multiplication algorithm that achieves
 \[ \MM(n) = O(n \lg n \, (4\sqrt 2)^{\log^* n}). \]
\end{thm}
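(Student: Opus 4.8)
\emph{Proof plan.}\quad The strategy is the one pioneered by F\"urer \cite{Fur-faster1,Fur-faster2} and developed in \cite{DKSS-intmult,HvdHL-mul,Har-truncmul}, as summarised in Section~\ref{sec:intro}: a multiplication of size $n$ is reduced to a collection of multiplications of size $m = (\lg n)^{O(1)}$, which are handled recursively. Concretely, one aims for a recursion of the schematic shape
\[
  \MM(n) \;\leq\; K_\ZZ\cdot\frac{n\lg n}{m\lg m}\,\MM(m)\;+\;O(n\lg n),\qquad m = (\lg n)^{O(1)},
\]
whose unrolling over the $\log^* n + O(1)$ levels yields $\MM(n) = O(n\lg n\,K_\ZZ^{\log^* n})$. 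The entire task is therefore to realise this with the data-expansion constant $K_\ZZ$ as small as possible, and in particular equal to $4\sqrt 2$ rather than the $6$ of \cite{Har-truncmul} or the $8$ of \cite{HvdHL-mul}.

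The first and decisive choice is the coefficient ring. Instead of approximate complex arithmetic, or the rings $(\ZZ/q\ZZ)[x]/(x^r\pm 1)$ used in the modular variants, I would work over a \emph{cyclotomic} ring $R = \ZZ[\zeta_m] = \ZZ[X]/\phi_m(X)$ — or a quotient of it by a small modulus — for an index $m$ that is a product of small primes, so that $R$ and its small quotients carry roots of unity of many orders (all divisors of $m$, and more after splitting the modulus). The payoff is twofold: transforms along the cyclotomic direction are essentially free, because multiplication by an appropriate root of unity is a sparse, shift-like linear map in the natural basis of $R$; and, because the supply of roots of unity is so rich, the auxiliary encodings (padding a transform length up to an admissible value, fitting coefficients into $R$) waste far fewer bits than in earlier schemes. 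One then encodes the two $n$-bit inputs — by a Kronecker-type substitution tailored to $R$ — as polynomials in $R[Y]$ whose degree $N$ is an integer with only small prime factors for which $R$ (or the working quotient) supplies the $N$-th roots of unity, and whose coefficient size is just large enough that the coefficients of the product, which live in $R$, can be read off exactly.

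With this in place the core of the algorithm is a single cyclic convolution in $R[Y]/(Y^N-1)$. I would evaluate it by a Cooley--Tukey DFT of length $N$ over $R$, factored through the prime factorisation of $N$; the resulting short DFTs are turned into multiplications — via Bluestein's algorithm, or a comparable reduction of short DFTs to multiplication — of size $(\lg n)^{O(1)}$, which, together with the $N$ pointwise products in $R$, furnish the recursive calls, while the cyclotomic transforms contribute only to the $O(n\lg n)$ term. On top of this skeleton I would install the truncated- and short-product machinery of \cite{Har-truncmul}, both in the final reconstruction of the product and inside the Bluestein steps, so as to avoid fully doubling transform lengths; a joint optimisation of the splitting base $2^b$, the cyclotomic index $m$ and the transform length $N$ then drives the per-level expansion down to $2^{5/2}=4\sqrt 2$. (Heuristically, a factor $2$ is the irreducible cost of realising a linear convolution cyclically, truncation reclaims roughly half of what is left, and the residual $\sqrt 2$ by which the integer bound exceeds the $\FF_p$ bound of $4$ reflects the carry handling and the absence of a fixed finite base ring in the integer setting.) It then remains to verify the recurrence displayed above with $K_\ZZ = 4\sqrt 2$ and unroll it, using a crude bound such as \eqref{eq:ss-int} at the base case.

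The step I expect to be the main obstacle is the constant bookkeeping. Every conversion in the pipeline — base-$2^b$ splitting, the Kronecker substitution into $R$, rounding $N$ up to an admissible value, Bluestein padding, the passage from the product polynomial back to an integer with carries — contributes a multiplicative factor to the level overhead, and one must show these compound to \emph{exactly} $4\sqrt 2$: a spurious factor $1+o(1)$ per level is harmless (it contributes only $(1+o(1))^{\log^* n}$), but a spurious constant $>1$ per level would be fatal, so at all but finitely many levels these losses must be shown to vanish or to be absorbed into the $\sqrt 2$ of slack. A secondary difficulty is purely number-theoretic: for every $n$ one must exhibit an index $m$ (a product of small primes) and a transform length $N$ such that $R$, or the small quotient of it that one actually computes in, contains the required roots of unity, $N$ factors into the prescribed small pieces, and all the size inequalities forced by the encoding hold simultaneously. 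I would expect this to follow from a counting or averaging argument over admissible $m$, in the spirit of the prime-selection lemmas of \cite{HvdHL-mul,DKSS-intmult}, but carrying it out with explicit, effective constants is where some real work lies.
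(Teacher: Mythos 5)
Your proposal takes a genuinely different route, and I don't think it leads to the stated bound. The paper does \emph{not} work over a cyclotomic ring $\ZZ[\zeta_m]$ or a small quotient of it, and it does \emph{not} use the truncated-multiplication ideas of \cite{Har-truncmul} --- the paper even remarks that the $K_\ZZ = 6$ improvement from truncated products ``does not seem to apply to the modular variants,'' and the new algorithm is entirely modular. Instead, the paper reduces multiplication in $\ZZ/(2^n-1)\ZZ$ to a polynomial product in $(\ZZ/P\ZZ)[X]/(X^N-1)$ via Crandall--Fagin, where $P$ is a product of $\lg n$ small primes $p$ of about $2\lg\lg n$ bits each, and then uses the Chinese Remainder Theorem to split this into one instance of $\FF_p[X]/(X^N-1)$ per prime $p$. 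This CRT step achieves an exponential drop in coefficient size with \emph{no} zero-padding, which is precisely the saving you can't get from a Kronecker encoding into a fixed cyclotomic ring over $\ZZ$. Each residue instance is then handled by running \emph{two} recursion levels of the $\FF_p$-polynomial algorithm of Theorem~\ref{thm:poly} (which is where the cyclotomic coefficient rings $\FF_p[Y]/\phi_\alpha$ live, not over $\ZZ$), followed by a Kronecker substitution back to a doubly-exponentially smaller cyclic integer product.

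The arithmetic of the constant is also different from your heuristic. You never identify a mechanism that actually produces $4\sqrt 2$ as opposed to some nearby number; attributing the $\sqrt 2$ to ``carry handling'' is not a quantitative argument. In the paper, Proposition~\ref{prop:int-bound2} assembles five factors of $(2 + o(1))$ --- Crandall--Fagin/CRT, two rounds of \textsc{RefinedAdmissibleMultiply} (each $2+o(1)$ by Proposition~\ref{prop:Cad-bound}), one intermediate \textsc{RefinedPolynomialMultiply} round (another $2+o(1)$ by the analogue of Proposition~\ref{prop:Cpoly-bound}), and a final Kronecker-style return to integers (Proposition~\ref{prop:Cpoly-bound2}) --- giving $K = 32 + o(1)$ with recursion depth shrinking by $\Psi\circ\Psi$. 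Feeding $K=32$, $\kappa=2$ into the master theorem, Proposition~\ref{prop:master}, yields the per-level constant $K^{1/\kappa} = 32^{1/2} = 4\sqrt 2$. That ``half-level'' structure, and the dependence on the already-proved $K_\FF = 4$ result of Theorem~\ref{thm:poly}, are the essential ingredients your plan is missing, and without them I don't see how your scheme would be made to compound to $4\sqrt 2$ rather than something larger.
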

In other words, \eqref{eq:furer} holds with $K_\ZZ = 4\sqrt 2 \approx 5.657$.
The proof is given in Section~\ref{sec:int}.
The new algorithm works with modular arithmetic throughout; curiously, there is no obvious analogue based on approximate complex arithmetic.

There have been several proposals in the literature for algorithms that achieve $K_\ZZ = 4$ under various unproved number-theoretic conjectures: see \cite[\S9]{HvdHL-mul}, \cite{HvdH-vanilla}, and \cite{CT-zmult}.
Whether $K_\ZZ = 4$ can be reached unconditionally remains an important open question.

\subsection{Polynomial multiplication over finite fields}

For a prime $p$, let $\MM_p(n)$ denote the number of bit operations required to multiply two polynomials in $\FF_p[X]$ of degree less than $n$.
The optimal choice of algorithm for this problem depends very much on the relative size of $n$ and $p$.

If $n$ is not too large compared to $p$, say $\lg n = O(\lg p)$, then a reasonable choice is \emph{Kronecker substitution}: one lifts the polynomials to $\ZZ[X]$, packs the coefficients of each polynomial into a large integer (i.e., evaluates at $X = 2^b$ for $b := 2 \lg p + \lg n$), multiplies these large integers, unpacks the resulting coefficients to obtain the product in $\ZZ[X]$, and finally reduces the output modulo $p$.
This leads to the bound
\begin{equation}
\label{eq:ks-bound}
 \MM_p(n) = O(\MM(n \lg p)) = O(n \lg p \lg(n \lg p) K_\ZZ^{\log^*(n \lg p)}),
\end{equation}
where $K_\ZZ$ is any admissible constant in \eqref{eq:furer}.
To the authors' knowledge, this is the best known asymptotic bound for $\MM_p(n)$ in the region $\lg n = O(\lg p)$.

When $n$ is large compared to $p$, the situation is starkly different.
The Kronecker substitution method leads to poor results, due to coefficient growth in the lifted product: for example, when $p$ is fixed, Kronecker substitution yields
 \[ \MM_p(n) = O(\MM(n \lg n)) = O(n (\lg n)^2 K_\ZZ^{\log^* n}). \]
For many years, the best known bound in this regime was that achieved
by the algebraic version of the Sch\"onhage--Strassen algorithm \cite{SS-multiply,Sch-char2}, namely
\begin{equation}
\label{eq:ss-poly}
 \MM_p(n) = O(n \lg n \lg \lg n \lg p + n \lg n \MM(\lg p)).
\end{equation}
The first term arises from performing $O(n \lg n \lg \lg n)$ additions in $\FF_p$, and the second term from $O(n \lg n)$ multiplications in $\FF_p$.
(In fact, this sort of bound holds for polynomial multiplication over quite general rings \cite{CK-fastmult}.)
For fixed $p$, this is faster than the Kronecker substitution method by a factor of almost $\lg n$.
The main reason for its superiority is that it exploits the modulo $p$ structure throughout the algorithm, whereas the Kronecker substitution method forgets this structure in the very first step.

After the appearance of F\"urer's algorithm, it was natural to ask whether a F\"urer-type bound could be proved for $\MM_p(n)$, in the case that $n$ is large compared to $p$.
This question was answered in the affirmative by Harvey, van der Hoeven and Lecerf, who gave an algorithm that achieves
 \[ \MM_p(n) = O(n \lg p \lg(n \lg p) \, 8^{\log^*(n \lg p)}), \]
\emph{uniformly} for all $n$ and $p$ \cite{HvdHL-ffmul}.
This is a very elegant bound; however, written in this way, it obscures the fact that the constant $8$ plays two quite different roles in the complexity analysis.
One source of the value $8$ is the constant ${K_\ZZ = 8}$ arising from the \emph{integer} multiplication algorithm mentioned above, but there is also a separate constant $K_\FF = 8$ arising from the \emph{polynomial} part of the algorithm.
There is no particular reason to expect that $K_\ZZ = K_\FF$, and it is somewhat of a coincidence that they have the same numerical value in \cite{HvdHL-ffmul}.

To clarify the situation, we mention that one may derive a complexity bound for the algorithm of \cite{HvdHL-ffmul} under the assumption that one has available an integer multiplication algorithm achieving \eqref{eq:furer} for some $K_\ZZ \geq 1$, where possibly $K_\ZZ \neq 8$.
Namely, one finds that
\begin{equation}
\label{eq:ffmul}
 \MM_p(n) = O(n \lg p \lg(n \lg p) \, K_\FF^{\max(0, \log^* n - \log^* p)} K_\ZZ^{\log^* p})
\end{equation}
where $K_\FF = 8$ (we omit the proof).
The second main result of this paper, proved in Section \ref{sec:poly}, is the following improvement in the value of $K_\FF$.
\begin{thm}
\label{thm:poly}
Let $K_\ZZ \geq 1$ be any constant for which \eqref{eq:furer} holds (for example, by Theorem \ref{thm:int}, one may take $K_\ZZ = 4\sqrt 2$).
Then there is a polynomial multiplication algorithm that achieves
\begin{equation}
 \MM_p(n) = O(n \lg p \lg(n \lg p) \, 4^{\max(0, \log^* n - \log^* p)} K_\ZZ^{\log^* p}),
\end{equation}
uniformly for all $n \geq 1$ and all primes~$p$.
\end{thm}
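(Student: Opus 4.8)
The plan is to argue by induction on $n$, following the two-phase recursive architecture of F\"urer's algorithm as developed in \cite{HvdHL-ffmul}. In the \emph{polynomial phase} one reduces a product of degree-$n$ polynomials over $\FF_p$ to a bounded number of products of the same type but of exponentially smaller degree; once the degree has dropped so that $\lg n = O(\lg p)$ one enters the \emph{integer phase}, handling the remaining (comparatively small) multiplications via Kronecker substitution together with the integer multiplication algorithm of Theorem~\ref{thm:int}. The sole source of the improvement from the constant $8$ of \cite{HvdHL-ffmul} to the constant $4$ is that each step of the polynomial phase is carried out over a \emph{cyclotomic coefficient ring} $\Ralg = \FF_p[Y]/\phi_r(Y)$, for a carefully chosen index $r$, in which the image $\zeta_r$ of $Y$ is a primitive $r$-th root of unity: a Fourier transform whose length divides $r$ can then be computed by ``fast'' butterflies, the twiddle factors being powers of $\zeta_r$, which act on $\Ralg$ by cheap $\FF_p$-linear shifts; and $\Ralg$ has $\FF_p$-dimension only $\varphi(r)$, which — unlike the coefficient rings used previously — need be no larger than the block size into which the inputs are cut.

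For the inductive step of the polynomial phase I would proceed as follows. Given $f,g \in \FF_p[X]$ of degree $< n$ with $\lg n$ large compared to $\lg p$, reduce first to computing their product in $\FF_p[X]/(X^N-1)$ with $N$ a little larger than $2n$. Next select the cyclotomic index $r$: exponentially smaller than $n$, with a factorisation into few enough prime powers that $\phi_r$ has tiny coefficients (so that reduction modulo $\phi_r$ is negligible) while still smooth enough that a length-$r$ transform over $\Ralg$ factors, via the usual Cooley--Tukey and Bluestein devices, into fast butterflies, and with $\varphi(r)$ as large relative to $r$ as these constraints allow. Cut $f$ and $g$ into blocks of degree $< \varphi(r)$, regard the block sequences as polynomials in a new variable $Z$ of degree $\approx n/\varphi(r)$, and compute their product modulo $Z^{N'}-1$ (with $N' \approx 2n/\varphi(r)$) by a Fourier transform carried out over a slightly larger cyclotomic ring that can hold the degree-$< 2\varphi(r)$ products of blocks, decomposing this transform so that its inner part is fast over $\Ralg$ and its outer part is reduced, by Bluestein's method, to further convolutions. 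The net effect is that the whole computation reduces to $N' \approx 2n/\varphi(r)$ multiplications in this ring — each a multiplication in $\FF_p[Y]$ of degree $< 2\varphi(r)$, so handled by the inductive hypothesis — together with $O(N \lg N)$ additions and fast shifts, costing $O(n \lg n \lg p)$ bit operations in total. Writing the costs in terms of a normalised complexity function of the shape $\MM_p(n)/(n\lg p\,\lg(n\lg p))$, one reads off a recurrence $C(n) \leq 4\,C(n'') + O(1)$ with $n'' \approx 2\varphi(r) \ll n$: the constant $4$ is the product of a factor $2$ from $N \approx 2n$ (a full product needs a transform of twice the data length) and a factor $2$ from the recursion being applied at degree $2\varphi(r)$ rather than $\varphi(r)$ (the coefficient ring must accommodate products of blocks), and the point of the cyclotomic construction is that no third factor of $2$ intrudes here, as it did in \cite{HvdHL-ffmul}.

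For the base of the recursion, once the degree has been driven down to some $n'$ with $\lg n' = O(\lg p)$ — which happens after at most $\log^* n - \log^* p + O(1)$ polynomial-phase levels when $\lg n$ is large compared to $\lg p$, and after none at all otherwise — I would lift to $\ZZ[X]$, evaluate at a power of two, multiply the resulting $O(n'\lg p)$-bit integers using Theorem~\ref{thm:int}, and reduce modulo $p$. Since $\log^*(n'\lg p) = \log^* p + O(1)$ here, one such integer multiplication costs $O(n'\lg p\,\lg(n'\lg p)\,K_\ZZ^{\log^* p})$, and, there being $O(n/n')$ of them with $\lg(n'\lg p) \leq \lg(n\lg p)$, they contribute $O(n\lg p\,\lg(n\lg p)\,K_\ZZ^{\log^* p})$. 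Unrolling the per-level recurrence multiplies this base estimate by $4$ for each polynomial-phase level, producing the factor $4^{\log^* n - \log^* p + O(1)} = O(4^{\max(0,\log^* n - \log^* p)})$ (the additive $O(1)$ disappearing into the implied constant), while the geometric sum of the per-level overheads stays within the same bound; one then checks that the recursion is well-founded and that all of this is uniform in $p$.

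I expect the main obstacle to be the number-theoretic construction and analysis behind the choice of the index $r$ (and of the auxiliary transform lengths) at each level. One needs, simultaneously: $r$ of the prescribed order of magnitude; a factorisation of $r$ smooth enough that the length-$r$ transform over $\Ralg$ is genuinely fast; $\varphi(r)$ large enough relative to $r$ that the per-level expansion is held to $4$ and not more; $\phi_r$ with coefficients small enough that reduction modulo it is negligible; and all the roots of unity required by the inner and outer transforms available in the relevant rings — the whole package valid uniformly over every prime $p$ and every degree that arises, and robust enough to survive iteration through all $\log^* n - \log^* p + O(1)$ levels. Proving that these competing demands can always be met, and that they combine to yield exactly the constant $4$, is the technical heart of the matter; by comparison, the $\log^*$ bookkeeping and the handover to the integer phase are routine given the corresponding arguments in \cite{HvdHL-ffmul} and in the proof of Theorem~\ref{thm:int}.
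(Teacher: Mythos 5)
Your high-level architecture (recursive polynomial phase driving $\log^* n$ down to $\log^* p$, then a Kronecker-plus-integer-multiplication base case, then unrolling a per-level recurrence) does match the paper, but the mechanism you propose for the polynomial phase is not the one the paper uses, and your accounting of the constant $4$ is wrong in a way that cannot be repaired within your framework. You imagine the cyclotomic ring $\Ralg = \FF_p[Y]/\phi_\alpha$ being used in the F\"urer style, with the image of $Y$ serving as a ``fast'' root of unity whose powers act by cheap $\FF_p$-linear shifts, so that the transform decomposes into a fast inner part and a Bluestein outer part. The paper does the opposite: it emphatically avoids fast roots of unity. The principal root $\omega \in \Ralg$ that drives the DFTs has enormous order $q_1 \cdots q_e$ (comparable to $N/\alpha$), is in no way the image of $Y$ (whose order is merely $\alpha$, which is tiny compared with the transform length $m$), and multiplications by powers of $\omega$ are full ring multiplications performed by Sch\"onhage--Strassen. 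All the DFT work over $\Ralg$ is pushed through Bluestein and recursed. The whole point of choosing $\phi_\alpha$ is not to get cheap twiddles but to control the \emph{factorisation} of $\phi_\alpha$ over $\FF_p$ (via Proposition~\ref{prop:decomposition}) so that $\Ralg$ actually contains the needed principal root of huge smooth order, while simultaneously $\varphi(\alpha)/\alpha$ is close to $1$ so that passing between $\phi_\alpha$ and $Y^\alpha-1$ costs essentially nothing.

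That second point is also where your breakdown of $4 = 2\times 2$ goes astray. You attribute one factor of $2$ to the zero-padding $N \approx 2r$ (correct) and the other to ``the recursion being applied at degree $2\varphi(r)$ rather than $\varphi(r)$ because the coefficient ring must accommodate products of blocks.'' But that is precisely the factor of $2$ that the cyclotomic construction \emph{eliminates}: the Bluestein convolution over $\Ralg[Z]/(Z^{m_i}-1)$ is lifted to $\FF_p[Y,Z]/(Y^\alpha-1, Z^{m_i}-1) \cong \FF_p[X]/(X^{m_i\alpha}-1)$ by zero-padding in $Y$ only from degree $\varphi(\alpha)$ up to $\alpha$, and this is negligible since $\varphi(\alpha)/\alpha > 1 - 1/\lg N$; the ring does not have to be doubled to hold products because the convolution is already cyclic. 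The genuine second factor of $2$ in the paper comes instead from the transform count in the Bluestein reduction — one incurs roughly two recursive transforms per outgoing DFT — and holding that factor to $2$ rather than $3$ depends essentially on the amortisation device (the parameter $t$ and the $(2t+1)$ normalisation, $t+1$ forward plus $t$ inverse transforms sharing a fixed operand), which your proposal omits entirely. Without identifying this source of the second factor, and without the amortisation, the recurrence you would actually obtain from your sketch does not close at $4$.
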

In other words, \eqref{eq:ffmul} holds with $K_\FF = 4$.
In particular, for fixed $p$, one can multiply polynomials in $\FF_p[X]$ of degree $n$ in $O(n \lg n \, 4^{\log^* n})$ bit operations.

Theorem \ref{thm:poly} may be generalised in various ways.
We briefly mention a few possibilities along the lines of \cite[\S8]{HvdHL-ffmul} (no proofs will be given).
First, we may obtain analogous bit complexity bounds for multiplication in $\FF_{p^a}[X]$ and $(\ZZ/p^a \ZZ)[X]$ for $a \geq 1$, and in $(\ZZ/m\ZZ)[X]$ for arbitrary $m \geq 1$ (see Theorems 8.1--8.3 in \cite{HvdHL-ffmul}).
We may also obtain complexity bounds for polynomial multiplication in various algebraic complexity models.
For example, we may construct a straight-line program that multiplies two polynomials in $\Acal[X]$ of degree less than~$n$, for any $\FF_p$-algebra~$\Acal$, using $O(n \lg n \, 4^{\log^* n})$ additions and scalar multiplications and $O(n \, 2^{\log^* n})$ nonscalar multiplications (compare with \cite[Thm.~8.4]{HvdHL-ffmul}).

\subsection{Overview of the new algorithms}
\label{sec:overview}

To explain the new approach, let us first recall the idea behind the polynomial multiplication algorithm of \cite{HvdHL-ffmul}.

Consider a polynomial multiplication problem in $\FF_p[X]$, where the degree $n$ is very large compared to $p$.
By splitting the inputs into chunks, we convert this to a bivariate multiplication problem in $\FF_p[Y,Z]/(f(Y), Z^m - 1)$, for a suitable integer~$m$ and irreducible polynomial $f \in \FF_p[Y]$.
This bivariate product is handled by means of DFTs (discrete Fourier transforms) of length $m$ over $\FF_p[Y]/f$.
The key innovation of \cite{HvdHL-ffmul} was to choose $\deg f$ so that $p^{\deg f} - 1$ is divisible by many small primes, so many, in fact, that their product is comparable to $n$, even though $\deg f$ itself is exponentially smaller.
This is possible thanks to a number-theoretic result of Adleman, Pomerance and Rumely~\cite{APR-primes}, building on earlier work of Prachar~\cite{Pra-divisors}.
Taking $m$ to be a product of many of these primes, we obtain $m \divides p^{\deg f} - 1$, and hence $\FF_p[Y]/f$ contains a root of unity of order $m$.
As $m$ is highly composite, each DFT of length $m$ may be converted to a collection of much smaller DFTs via the Cooley--Tukey method.
These in turn are converted into multiplication problems using Bluestein's algorithm.
These multiplications, corresponding to exponentially smaller values of~$n$, are handled recursively.

The recursion continues until $n$ becomes comparable to $p$.
The number of recursion levels during this phase is $\log^* n - \log^* p + O(1)$, and the constant $K_\FF = 8$ represents the expansion factor at each recursion level.
When $n$ becomes comparable to $p$, the algorithm switches strategy to Kronecker substitution combined with ordinary integer multiplication.
This phase contributes the $K_\ZZ^{\log^* p}$ term.

It was pointed out in \cite[\S8]{HvdHL-ffmul-preprint} that the value of $K_\FF$ can be improved to $K_\FF = 4$ if one is willing to accept certain unproved number-theoretic conjectures, including Artin's conjecture on primitive roots.
More precisely, under these conjectures, one may find an irreducible $f$ of the form $f(Y) = Y^{\alpha-1} + \cdots + Y + 1$, where~$\alpha$ is prime, so that $\FF_p[Y,Z]/(f(Y), Z^m - 1)$ is a direct summand of $\FF_p[Y,Z]/(Y^\alpha - 1, Z^m - 1)$.
This last ring is isomorphic to $\FF_p[X]/(X^{\alpha m} - 1)$, and one may use this isomorphism to save a factor of two in zero-padding at each recursion level.
These savings lead directly to the improved value for $K_\FF$.

To prove Theorem \ref{thm:poly}, we will pursue a variant of this idea.
We will take $f$ to be a cyclotomic polynomial $\phi_\alpha(Y)$ for a judiciously chosen integer $\alpha$ (not necessarily prime).
Since $\phi_\alpha \divides Y^\alpha - 1$, we may use the above isomorphism to realise the same economy in zero-padding as in the conjectural construction of \cite[\S8]{HvdHL-ffmul-preprint}.
However, unlike \cite{HvdHL-ffmul-preprint}, we do not require that $f$ be irreducible in $\FF_p[Y]$.
Thus $\FF_p[Y]/f$ is no longer in general a field, but a direct sum of fields.
The situation is reminiscent of F\"urer's algorithm, in which the coefficient ring $\CC[Y]/(Y^{2^r}+1)$ is not a field, but a direct sum of copies of $\CC$.
The key technical contribution of this paper is to show that we have enough control over the factorisation of $\phi_\alpha$ in $\FF_p[Y]$ to ensure that $\FF_p[Y]/\phi_\alpha$ contains suitable principal roots of unity.
This approach avoids Artin's conjecture and other number-theoretic difficulties, and enables us to reach $K_\FF = 4$ unconditionally.
The construction of $\alpha$ is the subject of Section \ref{sec:cyclotomic}, and the main polynomial multiplication algorithm is presented in Section \ref{sec:poly}.

Let us now outline how we go about proving Theorem \ref{thm:int} (the integer case).
The algorithm is heavily dependent on the polynomial multiplication algorithm just sketched.
We take the basic problem to be multiplication in $\ZZ/(2^n - 1)\ZZ$, for arbitrary positive $n$.
We choose a collection of small primes~$p$, each having around $2 \lg \lg n$ bits, and whose product~$P$ has $(\lg n)^{1+o(1)}$ bits.
By cutting the input integers into many small chunks, we convert to a multiplication in $(\ZZ/P\ZZ)[X]/(X^N - 1)$ for a suitable $N \approx 2n/\lg P$.
One technical headache is that~$n$ is not necessarily divisible by $N$; following \cite{HvdHL-mul}, we deal with this by adapting an idea of Crandall and Fagin~\cite{CF-DWT}.
Next, by the Chinese remainder theorem, we reduce to multiplying in $\FF_p[X]/(X^N - 1)$ for each~$p$ separately.
This is reminisicent of Pollard's algorithm~\cite{Pol-ntt}, but instead of using three primes, here the number of primes grows with~$n$.
At this stage, the coefficient size $\lg p$ is doubly exponentially smaller than~$N$.
We perform these multiplications in $\FF_p[X]/(X^N - 1)$ by applying \emph{two recursion levels} of the polynomial multiplication algorithm of Theorem \ref{thm:poly}.
This reduces the problem to a collection of multiplication problems in $\FF_p[X]$, each doubly exponentially smaller than the original problem.
Using Kronecker substitution, these are converted back to multiplications in $\ZZ/(2^{n'} - 1)\ZZ$, where $n'$ is doubly exponentially smaller than $n$, and the algorithm is applied recursively.

In effect, each recursive call in the new integer multiplication algorithm corresponds to two recursion levels of the existing F\"urer-type algorithms.
The speedup relative to \cite{HvdHL-mul} may be understood as follows.
In the algorithm of \cite{HvdHL-mul}, at each recursion level we incur a factor of two in overhead due to the zero-padding that occurs when we split the inputs into small chunks.
In the new algorithm, the passage from $\ZZ/P\ZZ$ to $\FF_p$ manages the same exponential size reduction without any zero-padding.
This roughly corresponds to saving a factor of two at every second recursion level of the algorithm of \cite{HvdHL-mul}, and explains the factor of $(\sqrt 2)^{\log^* n}$ overall speedup.


\section{Preliminaries}

\subsection{Logarithmically slow functions}
\label{sec:slow}
Let $x_0 \in \RR$, and let $\Phi : (x_0, \infty) \to \RR$ be a smooth increasing function.
We recall from \cite[\S5]{HvdHL-mul} that $\Phi$ is said to be \emph{logarithmically slow} if there exists an integer $\ell \geq 0$ such that
 \[ (\mathord{\log^{\circ \ell}} \circ \Phi \circ \exp^{\circ \ell})(x) = \log x + O(1) \]
as $x \to \infty$.
For example, the functions $\log(5x)$, $5 \log x$, $(\log x)^5$, and $2^{(\log \log x)^5}$ are logarithmically slow, with $\ell = 0, 1, 2, 3$ respectively.

We will always assume that $x_0$ is chosen large enough to ensure that $\Phi(x) \leq x - 1$ for all $x > x_0$.
According to \cite[Lemma 2]{HvdHL-mul}, this is possible for any logarithmically slow function, and it implies that the iterator $\Phi^*(x) := \min\{k \geq 0 : \Phi^{\circ k}(x) \leq x_0\}$ is well-defined on $\RR$.
It is shown in \cite[Lemma 3]{HvdHL-mul} that this iterator satisfies
\begin{equation}
\label{eq:iterator}
 \Phi^*(x) = \log^* x + O(1)
\end{equation}
as $x \to \infty$.
In other words, logarithmically slow functions are more or less indistinguishable from $\log x$, as far as iterators are concerned.

As in \cite{HvdHL-mul} and \cite{HvdHL-ffmul}, we will use logarithmically slow functions to measure \emph{size reduction} in multiplication algorithms.
The typical situation is that we have a function $T(n)$ measuring the (normalised) cost of a certain multiplication algorithm for inputs of size $n$; we reduce the problem to a collection of problems of size $n_i < \Phi^{\circ\kappa}(n)$ for some $\kappa \geq 1$, leading to a bound for $T(n)$ in terms of the various $T(n_i)$.
Applying the reduction recursively, we wish to convert these bounds into an explicit asymptotic estimate for $T(n)$.
This is achieved via the following `master theorem'.
\begin{prop}
\label{prop:master}
Let $K > 1$, $B \geq 0$, and let $\ell \geq 0$ and $\kappa \geq 1$ be integers.
Let $x_0 \geq \exp^{\circ \ell}(1)$, and let $\Phi : (x_0, \infty) \to \RR$ be a logarithmically slow function such that $\Phi(x) \leq x-1$ for all $x > x_0$.
Assume that $x_1 \geq x_0$ is chosen so that $\Phi^{\circ\kappa}(x)$ is defined for all $x > x_1$.
Then there exists a positive constant $C$ (depending on $x_0$, $x_1$, $\Phi$, $K$, $B$, $\ell$ and $\kappa$) with the following property.

Let $\sigma \geq x_1$ and $L > 0$.
Let $\Scal \subseteq \RR$, and let $T : \Scal \to \RR^{\geq}$ be any function satisfying the following recurrence.
First, $T(y) \leq L$ for all $y \in \Scal$, $y \leq \sigma$.
Second, for all $y \in \Scal$, $y > \sigma$, there exist $y_1, \ldots, y_d \in \Scal$ with $y_i \leq \Phi^{\circ \kappa}(y)$, and weights $\gamma_1, \ldots, \gamma_d \geq 0$ with $\sum_i \gamma_i = 1$, such that
 \[ T(y) \leq K \left(1 + \frac{B}{\log^{\circ \ell} y}\right) \sum_{i=1}^d \gamma_i T(y_i) + L. \]
For all $y \in \Scal$, $y > \sigma$, we then have
 \[ T(y) \leq CL(K^{1/\kappa})^{\log^* y - \log^* \sigma}. \]
\end{prop}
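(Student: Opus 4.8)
The plan is to prove the bound by induction on the number of recursion steps, i.e. on $\Phi^{*}(y)$, after first replacing $T$ by a suitably normalised function so that the multiplicative factors $K(1 + B/\log^{\circ\ell} y)$ telescope into a convergent product. First I would invoke the hypothesis $x_0 \geq \exp^{\circ\ell}(1)$ to guarantee that $\log^{\circ\ell} y$ is positive and bounded below (by $1$, say) for all $y > x_0$, so that the correction factors $1 + B/\log^{\circ\ell} y$ are well-defined and lie in $[1, 1+B]$ on the whole range of interest; this is where the constraint on $x_0$ is used. Then, since $\Phi$ is logarithmically slow with parameter $\ell$, one has $\log^{\circ\ell}(\Phi^{\circ j}(y)) \geq c \cdot \log^{\circ\ell}(y)^{1/2}$ or some comparable lower bound coming from the defining estimate $\log^{\circ\ell}(\Phi(\exp^{\circ\ell} x)) = \log x + O(1)$; more simply, iterating $\Phi$ drives $\log^{\circ\ell}(\Phi^{\circ j}(y))$ down towards $\log^{\circ\ell}(x_0)$ in a controlled geometric-in-$j$ fashion after passing through $\log^{\circ\ell}$, so that $\sum_{j \geq 0} 1/\log^{\circ\ell}(\Phi^{\circ \kappa j}(y))$ converges, with a bound depending only on $x_0$, $\Phi$, $\ell$, $\kappa$ (and not on $y$). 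Consequently $\prod_{j\geq 0}\bigl(1 + B/\log^{\circ\ell}(\Phi^{\circ\kappa j}(y))\bigr)$ is bounded above by an absolute constant $C_1 = C_1(x_0,\Phi,K,B,\ell,\kappa)$.

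With that in hand, the main step is the induction. For $y \in \Scal$ with $y \leq \sigma$ we have $T(y) \leq L$, and since $\log^* y - \log^* \sigma$ may be negative here I would simply note that for $y > \sigma$ the quantity $\log^* y - \log^* \sigma$ is bounded below, or — cleaner — run the induction not on $\log^*$ directly but on the integer-valued iterator $\Phi^{*}$. Define $n(y) := \Phi^{\circ\kappa\,*}(y) := \min\{k \geq 0 : \Phi^{\circ\kappa k}(y) \leq \sigma\}$ (finite because $\Phi(x)\leq x-1$ forces $\Phi^{\circ m}(y) \to -\infty$, in particular it eventually drops below $\sigma$; here one uses $\sigma \geq x_1$ so that the iterates are defined). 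For $y \leq \sigma$ we have $n(y) = 0$ and $T(y) \leq L$. For $y > \sigma$, the recurrence gives $T(y) \leq K(1 + B/\log^{\circ\ell} y)\sum_i \gamma_i T(y_i) + L$ with each $y_i \leq \Phi^{\circ\kappa}(y)$, hence $n(y_i) \leq n(y) - 1$; one may also assume $y_i \in \Scal$ lies above the threshold or is handled by the base case. By the induction hypothesis $T(y_i) \leq \widetilde{C} L\, K^{n(y_i)} P(y_i)$ where $P(y) := \prod_{j\geq 0}(1 + B/\log^{\circ\ell}(\Phi^{\circ\kappa j}(y))) \leq C_1$, and since $P(y) = (1 + B/\log^{\circ\ell} y)\,P(\Phi^{\circ\kappa}(y)) \geq (1 + B/\log^{\circ\ell} y)\,P(y_i)$ by monotonicity of $\log^{\circ\ell}$ and of $\Phi$ — so the correction factor at the current level is absorbed exactly — a short computation gives $T(y) \leq K\,(1 + B/\log^{\circ\ell} y)\,\widetilde{C} L\, K^{n(y)-1} P(y_i) + L \leq \widetilde{C} L\, K^{n(y)} P(y) + L$. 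Choosing $\widetilde C$ so that $\widetilde C \geq 1 + \widetilde C/K$ (possible since $K > 1$, e.g. $\widetilde C := K/(K-1)$, possibly enlarged to handle the additive $L$ uniformly via $K^{n(y)} \geq 1$) closes the induction: $T(y) \leq \widetilde{C} L\, K^{n(y)} P(y)$.

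Finally I would convert back from $n(y)$ to $\log^*$. Since $\Phi^{\circ\kappa}$ is itself logarithmically slow, \eqref{eq:iterator} applied to it gives $n(y) = (\Phi^{\circ\kappa})^{*}(y) = \tfrac{1}{\kappa}\log^* y + O(1)$; strictly, $(\Phi^{\circ\kappa})^{*}(y)$ counts $\kappa$-fold applications, so it equals $\lceil \Phi^{*}(y)/\kappa\rceil$ up to $O(1)$, and $\Phi^{*}(y) = \log^* y + O(1)$ by \eqref{eq:iterator}. Hence $K^{n(y)} = K^{(\log^* y)/\kappa + O(1)} = O\bigl((K^{1/\kappa})^{\log^* y}\bigr)$. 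Combining with $P(y) \leq C_1$ and subtracting $\log^* \sigma$ (a constant, since $\sigma$ is fixed once and for all — but note the statement wants the bound uniform in $\sigma \geq x_1$, so here I must be careful: the $O(1)$ in $n(y) - (\log^* y - \log^* \sigma)/\kappa$ should be taken uniform in $\sigma$, which follows because $n(y)$ is exactly the number of $\kappa$-steps from $y$ down to the $\sigma$-threshold and $\log^* y - \log^* \sigma$ measures the same thing up to a $\sigma$-independent $O(1)$ by a relative version of \eqref{eq:iterator}) yields $T(y) \leq C L (K^{1/\kappa})^{\log^* y - \log^* \sigma}$ for a constant $C$ depending only on $x_0, x_1, \Phi, K, B, \ell, \kappa$, as required.

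The main obstacle I anticipate is precisely this last uniformity-in-$\sigma$ point: one needs that the discrepancy between the discrete step-count $n(y)$ and $(\log^* y - \log^* \sigma)/\kappa$ is $O(1)$ with a constant independent of $\sigma \in [x_1,\infty)$. This should follow from the quantitative form of \cite[Lemma 3]{HvdHL-mul} — namely that for a logarithmically slow $\Phi$, the number of iterations of $\Phi$ needed to go from $y$ down to any threshold $t \geq x_0$ is $\log^* y - \log^* t + O(1)$ with an absolute $O(1)$ — but it must be checked that the cited lemma is stated (or can be re-derived) in this relative, two-threshold form rather than only for a single fixed $x_0$. Everything else is bookkeeping with convergent products and a routine induction.
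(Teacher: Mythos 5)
Your proposal follows essentially the same strategy as the paper: induction on the number of $\kappa$-step $\Phi$-iterations required to bring $y$ below $\sigma$, with the per-level correction factors $1+B/\log^{\circ\ell}(\cdot)$ collected into a product that is bounded by an absolute constant, followed by a conversion from the iterator count to $\log^*y - \log^*\sigma$ via (the quantitative, two-threshold form of) \cite[Lemma 3]{HvdHL-mul}. The paper's own proof simply cites \cite[Prop.~8]{HvdHL-mul}, changes the inductive statement to $T(y) \leq E_1\cdots E_j L\,(K^{\lceil j/\kappa\rceil}+\cdots+K+1)$ with $j := \Phi^*_\sigma(y)$, and adjusts the induction step in the cases $0<j<\kappa$ and $j\geq\kappa$; so the structure you propose matches.

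There are, however, two concrete defects worth naming. First, the inductive hypothesis $T(y) \leq \widetilde C\, L\, K^{n(y)} P(y)$ with a single fixed constant $\widetilde C$ does not close: the recurrence gives $T(y) \leq \widetilde C L K^{n(y)} P(y) + L$, and the extra $+L$ cannot be absorbed into the same $\widetilde C$, no matter how you choose it; the condition $\widetilde C \geq 1 + \widetilde C/K$ that you write down does not fix this. What actually closes is an inductive hypothesis that carries the geometric partial sum, e.g.\ $T(y) \leq L\,(1 + K + \cdots + K^{n(y)})\,P(y)$, whose step is $c_n = Kc_{n-1}+1$ and whose base case is $c_0=1$; this is exactly the $K^{\lceil j/\kappa\rceil}+\cdots+K+1$ in the paper's version, and one then bounds $c_n \leq \tfrac{K}{K-1}K^n$. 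Your final constant $K/(K-1)$ is correct, but the route you give to it is not. Second, the assertion that $\Phi^{\circ\kappa}$ is logarithmically slow is false for $\kappa\geq 2$: after conjugation by $\exp^{\circ\ell}$ the map $\Phi$ behaves like $\log + O(1)$, so $\Phi^{\circ\kappa}$ behaves like $\log^{\circ\kappa}+O(1)$, which is not $\log + O(1)$ under any further conjugation. Fortunately you do not actually need this: the relation $n(y) = \lceil \Phi^*_\sigma(y)/\kappa\rceil$ together with $\Phi^*_\sigma(y) = \log^* y - \log^*\sigma + O(1)$ (uniformly in $\sigma \geq x_1$, which you rightly flag as the delicate point; it is handled in the first paragraphs of the cited \cite[Prop.~8]{HvdHL-mul}) gives $n(y) = (\log^* y - \log^*\sigma)/\kappa + O(1)$ directly. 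Minor additional clean-up: the ``infinite'' product $P(y)$ is really a finite product, since $\Phi^{\circ\kappa j}(y)$ drops below $x_0$ after finitely many steps, and one should set $P(y):=1$ for $y\leq\sigma$ so the base case reads correctly.
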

\begin{proof}
The special case $\kappa = 1$, $x_1 = x_0$ is exactly \cite[Prop.~8]{HvdHL-mul}.
We indicate briefly how the proof of \cite[Prop.~8]{HvdHL-mul} must be modified to obtain this more general statement.

The first two paragraphs of the proof of \cite[Prop.~8]{HvdHL-mul} may be read verbatim.
In the third paragraph, the inductive statement is changed to
 \[ T(y) \leq E_1 \cdots E_j L (K^{\lceil j/\kappa \rceil} + \cdots + K + 1), \]
where $j := \Phi^*_\sigma(y)$.
The inductive step is modified slightly: for $0 < j < \kappa$ we use the fact that $y_i \leq \sigma$, and for $j \geq \kappa$ the fact that $\Phi^*_\sigma(y_i) \leq \Phi^*_\sigma(\Phi^{\circ\kappa}(y)) = \Phi^*_\sigma(y) - \kappa$.
With these changes, the proof given in \cite{HvdHL-mul} goes through without difficulty.
\end{proof}

\subsection{Discrete Fourier transforms}
\label{sec:dft}

Let $n \geq 1$ and let $\Ralg$ be a commutative ring in which~$n$ is invertible.
A \emph{principal $n$-th root of unity} is an element $\omega \in \Ralg$ such that $\omega^n = 1$ and such that $\sum_{j=0}^{n-1} \omega^{ij} = 0$ for $i = 1, 2, \ldots, n-1$.
If $m$ is a divisor of~$n$, then $\omega^{n/m}$ is easily seen to be a principal $m$-th root of unity.

Fix a principal $n$-th root of unity $\omega$.
The \emph{discrete Fourier transform} (DFT) of the sequence $(a_0, \ldots, a_{n-1}) \in \Ralg^n$ with respect to $\omega$ is the sequence $(\hat a_0, \ldots, \hat a_{n-1}) \in \Ralg^n$ defined by $\hat a_j := \sum_{i=0}^{n-1} \omega^{ij} a_i$.
Equivalently, $\hat a_j = A(\omega^j)$ where $A = \sum_{i=0}^{n-1} a_i X^i \in \Ralg[X]/(X^n - 1)$.

The \emph{inverse DFT} recovers $(a_0, \ldots, a_{n-1})$ from $(\hat a_0, \ldots, \hat a_{n-1})$.
Computationally it corresponds to a DFT with respect to $\omega^{-1}$, followed by a division by $n$, because
 \[ \frac 1n \sum_{j=0}^{n-1} \omega^{-kj} \hat a_j = \frac 1n \sum_{i=0}^{n-1} \sum_{j=0}^{n-1} \omega^{(i-k)j} a_i = a_k, \qquad k = 0, \ldots, n-1. \]

DFTs may be used to implement cyclic convolutions.
Suppose that we wish to compute $C := AB$ where $A, B \in \Ralg[X]/(X^n - 1)$.
We first perform DFTs to compute $A(\omega^i)$ and $B(\omega^i)$ for $i = 0, \ldots, n-1$.
We then compute $C(\omega^i) = A(\omega^i) B(\omega^i)$ for each $i$, and finally perform an inverse DFT to recover $C \in \Ralg[X]/(X^n - 1)$.

This strategy may be generalised to handle a multidimensional cyclic convolution, that is, to compute $C := AB$ for
 \[ A, B \in \Ralg[X_1, \ldots, X_d]/(X_1^{n_1} - 1, \ldots, X_d^{n_d} - 1). \]
For this, we require that each $n_k$ be invertible in $\Ralg$, and that $\Ralg$ contain a principal $n_k$-th root of unity $\omega_k$ for each $k$.
Let $n := n_1 \cdots n_d$.
We first perform multidimensional DFTs to evaluate $A$ and $B$ at the $n$ points $\{(\omega_1^{j_1}, \ldots, \omega_d^{j_d}) : 0 \leq j_k < n_k\}$.
We then multiply pointwise, and finally recover~$C$ via a multidimensional inverse DFT.

Each multidimensional DFT may be reduced to a collection of one-dimensional DFTs as follows.
We first compute $A(X_1, \ldots, X_{d-1}, \omega_d^j) \in \Ralg[X_1, \ldots, X_{d-1}]$ for each $j = 0, \ldots, n_d - 1$; this involves $n/n_d$ DFTs of length~$n_d$.
We then recursively evaluate each of these polynomials at the $n/n_d$ points $(\omega_1^{j_1}, \ldots, \omega_{d-1}^{j_{d-1}})$.
Altogether, this strategy involves computing $n/n_k$ DFTs of length $n_k$ for each $k = 1, \ldots, d$.

Finally, we briefly recall Bluestein's method \cite{Blu-dft} for reducing a (one-dimensional) DFT to a convolution problem (see also \cite[\S2.5]{HvdHL-mul}).
Let $n \geq 1$ be odd and let $\omega \in \Ralg$ be a principal $n$-th root of unity.
Set $\xi := \omega^{(n+1)/2}$, so that $\xi^2 = \omega$ and $\xi^n = 1$.
Then computing the DFT of a given sequence $(a_0, \ldots, a_{n-1}) \in \Ralg^n$ with respect to~$\omega$ reduces to computing the product of the polynomials
 \[ f(Z) := \sum_{i=0}^{n-1} \xi^{i^2} a_i Z^i, \qquad g(Z) := \sum_{i=0}^{n-1} \xi^{-i^2} Z^i \]
in $\Ralg[Z]/(Z^n - 1)$, plus $O(n)$ auxiliary multiplications in $\Ralg$.
Notice that $g(Z)$ is fixed and does not depend on the input sequence.

\subsection{The Crandall--Fagin algorithm}
\label{sec:crandall-fagin}

Consider the problem of computing a `cyclic' integer product of length $n$, that is, a product $uv$ where $u, v \in \ZZ/(2^n - 1)\ZZ$.
If $N$ and $P$ are positive integers such that $N \divides n$ and $\lg P > 2n/N + \lg N$, then we may reduce the given problem to multiplication in $(\ZZ/P\ZZ)[X]/(X^N - 1)$, by cutting up the integers into chunks of $n/N$ bits.
In this section we briefly recall a variant \cite[\S9.2]{HvdHL-mul} of an algorithm due to Crandall and Fagin \cite{CF-DWT} that achieves the same reduction \emph{without} the assumption that $N \divides n$.

Assume that $N \leq n$ and $\lg P > 2 \lceil n/N \rceil + \lg N + 1$, and that we have available some $\theta \in \ZZ/P\ZZ$ with $\theta^N = 2$.
(This~$\theta$ plays the same role as the real $N$-th root of~$2$ in the original Crandall--Fagin algorithm.)
Set $e_i := \lceil ni / N \rceil$ and $c_i := Ne_i - ni$.
Observe that $e_{i+1} - e_i = \lfloor n/N \rfloor$ or $\lceil n/N \rceil$ for each $i$.
Decompose the inputs as $u = \sum_{i=0}^{N-1} 2^{e_i} u_i$ and $v = \sum_{i=0}^{N-1} 2^{e_i} v_i$ where $0 \leq u_i, v_i < 2^{e_{i+1} - e_i}$ (i.e., a decomposition with respect to a `variable base').
Set $U(X) := \sum_{i=0}^{N-1} \theta^{c_i} u_i$ and $V(X) := \sum_{i=0}^{N-1} \theta^{c_i} v_i$, regarded as polynomials in $(\ZZ/P\ZZ)[X]/(X^N - 1)$, and compute the product $W(X) := U(X) V(X)$.
Then one finds (see \cite[\S9.2]{HvdHL-mul}) that the product $uv$ may be recovered by the formula $uv = \sum_{i=0}^{N-1} 2^{e_i} w_i \pmod{2^n - 1}$, where the $w_i$ are integers in $[0, P)$ defined by $W(X) = \sum_{i=0}^{N-1} \theta^{c_i} w_i$.

To summarise, the problem of computing $uv$ reduces to computing a product in $(\ZZ/P\ZZ)[X]/(X^N - 1)$, together with $O(N)$ auxiliary multiplications in $\ZZ/P\ZZ$, and $O(N (\lg n)^2 + N \lg P)$ bit operations to compute the $e_i$ and to handle the final overlap-add phase (again, see \cite[\S9.2]{HvdHL-mul} for details).

\subsection{Data layout}

In this section we discuss several issues relating to the layout of data on the Turing machine tapes.

Integers will always be stored in the standard binary representation.
If $n$ is a positive integer, then elements of $\ZZ/n\ZZ$ will always be stored as residues in the range $0 \leq x < n$, occupying $\lg n$ bits of storage.

If $\Ralg$ is a ring and $f \in \Ralg[X]$ is a polynomial of degree $n \geq 1$, then an element of $\Ralg[X]/f(X)$ will always be represented as a sequence of~$n$ coefficients in the standard monomial order.
This convention is applied recursively, so for rings of the type $(\Ralg[Y]/f(Y))[X]/g(X)$, the coefficient of $X^0$ is stored first, as an element of $\Ralg[Y]/f(Y)$, followed by the coefficient of $X^1$, and so on.

A multidimensional array of size $n_d \times \cdots \times n_1$, whose entries occupy $b$ bits each, will be stored as a linear array of $b n_1 \cdots n_d$ bits.
The entries are ordered lexicographically in the order $(0, \ldots, 0, 0), (0, \ldots, 0, 1), \ldots, (n_d - 1, \ldots, n_1 - 1)$.
In particular, an element of $(\cdots (\Ralg[X_1]/f_1(X_1)) \cdots )[X_d]/f_d(X_d)$ is represented as an $n_d \times \cdots \times n_1$ array of elements of~$\Ralg$.
We will generally prefer the more compact notation $\Ralg[X_1, \ldots, X_d]/(f_1(X_1), \ldots, f_d(X_d))$.

There are many instances where an $n \times m$ array must be transposed so that its entries can be accessed efficiently either `by columns' or `by rows'.
Using the algorithm of \cite[Lemma 18]{BGS-recurrences}, such a transposition may be achieved in $O(b n m \lg \min(n, m))$ bit operations, where $b$ is the bit size of each entry.
(The idea of the algorithm is to split the array in half along the short dimension, and transpose each half recursively.)

One important application is the following result, which estimates the data rearrangement cost associated to the the Agarwal--Cooley method \cite{AC-convolution} for converting between one-dimensional and multidimensional convolution problems (this is closely related to the Good--Thomas DFT algorithm \cite{Goo-fourier, Tho-physics}).
\begin{lem}
\label{lem:permute}
Let $n, m \geq 2$ be relatively prime, and let $\Ralg$ be a ring whose elements are represented using $b$ bits.
There exists an isomorphism
 \[ \Ralg[X]/(X^{nm} - 1) \cong \Ralg[Y,Z]/(Y^n - 1, Z^m - 1) \]
that may be evaluated in either direction in $O(b n m \lg \min(n, m))$ bit operations.
\end{lem}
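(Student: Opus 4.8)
The plan is to exhibit the isomorphism explicitly via the Chinese Remainder Theorem and then account for the cost of rearranging coefficients on the Turing tape. Since $n$ and $m$ are coprime, the map $\ZZ/nm\ZZ \to \ZZ/n\ZZ \times \ZZ/m\ZZ$ sending $k \mapsto (k \bmod n, k \bmod m)$ is a bijection; dually, there is a unique $k \in \{0,\ldots,nm-1\}$ with $k \equiv i \pmod n$ and $k \equiv j \pmod m$ for each pair $(i,j)$. Define $\Psi : \Ralg[X]/(X^{nm}-1) \to \Ralg[Y,Z]/(Y^n-1,Z^m-1)$ by sending the class of $X$ to the class of $YZ$; equivalently, the coefficient of $X^k$ in the input is placed into the coefficient of $Y^{k \bmod n} Z^{k \bmod m}$ in the output. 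One checks that $\Psi$ is a well-defined ring homomorphism (because $(YZ)^{nm} = (Y^n)^m (Z^m)^n \equiv 1$) and that it is bijective, with inverse given by the CRT reconstruction just described. This is the standard Agarwal--Cooley/Good--Thomas reindexing.

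Next I would bound the cost of evaluating $\Psi$ or $\Psi^{-1}$ as a data-movement operation. Writing $k$ via the coprime factorisation, the permutation $k \leftrightarrow (k \bmod n, k \bmod m)$ on indices is, up to a fixed relabelling computable in $O(nm \lg(nm))$ bit operations, exactly the transpose of an $n \times m$ array (or $m \times n$, depending on direction): reading the linear array $X^0, X^1, \ldots$ in blocks and reinterpreting the index pair amounts to swapping the two coordinates of a two-dimensional array whose entries are the $b$-bit elements of $\Ralg$. By the transposition algorithm of \cite[Lemma~18]{BGS-recurrences} cited above, an $n \times m$ array with $b$-bit entries can be transposed in $O(b\,nm\,\lg\min(n,m))$ bit operations, which gives the claimed bound. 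The precomputation of the index correspondence (a table of size $nm$) costs $O(nm\,\lg(nm))$ bit operations, and since each entry has $b \geq 1$ bits this is absorbed into $O(b\,nm\,\lg\min(n,m))$ provided $\lg(nm) = O(\lg\min(n,m))$; when this fails, i.e.\ when one of $n,m$ is tiny, the rearrangement is trivially doable within the stated bound by a direct pass. The same analysis applies verbatim to $\Psi^{-1}$.

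I expect the only genuinely delicate point to be confirming that the abstract index permutation $k \mapsto (k \bmod n, k \bmod m)$ really does coincide, after an explicitly computable and cheap relabelling of each axis, with a plain two-dimensional array transpose in the Turing model, so that \cite[Lemma~18]{BGS-recurrences} applies with the stated constants. Concretely: the source layout stores the $b$-bit blocks in the order $k = 0,1,\ldots,nm-1$, whereas the target layout stores them lexicographically by $(k \bmod n, k \bmod m)$; one must check that composing the source order with the inverse CRT map yields a permutation that factors as (row-major $\to$ column-major on an $n\times m$ grid) composed with two one-dimensional permutations of the axes, each of which is a relabelling by $\times m \bmod n$ and $\times n \bmod m$ respectively and hence realisable in $O(b\,nm)$ bit operations by table lookup. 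Everything else — well-definedness and bijectivity of the ring map, the $O(\cdot)$ bookkeeping — is routine.
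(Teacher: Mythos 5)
Your high-level plan is right, and you correctly identify the dangerous step: rewriting the index permutation so that the transposition lemma applies. But the factorisation you propose at that step is false, and this is exactly the point where the choice of isomorphism matters.

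With your map $\Psi : X \mapsto YZ$, the coefficient $F_k$ (writing $k = am+b$ with $0 \leq a < n$, $0 \leq b < m$, so the source is an $n \times m$ array) must land at position $(j,i) = (k \bmod m,\, k \bmod n) = (b,\, (am+b) \bmod n)$ in the $m \times n$ target. You claim this permutation factors as a transpose composed with two axis relabellings ($\times m \bmod n$ on one axis, $\times n \bmod m$ on the other). It cannot: any such composition sends $(a,b)$ to something of the form $(\sigma(b), \tau(a))$ with $\sigma, \tau$ fixed permutations, so the second coordinate of the image would depend on $a$ alone, whereas here it depends on both $a$ and $b$ through $(am+b) \bmod n$. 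Moreover, even ignoring the shift, the factor $a \mapsto am \bmod n$ is a general stride permutation of $n$ items, not a cyclic rotation; realising it within each of the $m$ rows on a Turing tape costs $\Theta(bn \lg n)$ per row, for a total of $\Theta(bnm \lg n)$, which exceeds $O(bnm \lg \min(n,m))$ when $n > m$. The ``table lookup'' argument for $O(bnm)$ does not apply in the Turing model.

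The paper avoids both problems by choosing the isomorphism $\beta : X \mapsto Y^{c}Z$ with $c := m^{-1} \bmod n$ rather than $X \mapsto YZ$. After transposing the $n \times m$ source into an $m \times n$ array, the residual permutation of row $j$ is the \emph{cyclic shift} by $jc \bmod n$ slots: row-dependent but each a rotation, so it costs $O(bn)$ per row and $O(bnm)$ in total. Concretely, the entry at $(j,i)$ after the transpose-plus-shift is $F_{(i - jc \bmod n)m + j}$, and a short computation using $cm \equiv 1 \pmod n$ confirms this is precisely the coefficient of $Y^i Z^j$ in $\beta(F)$. So the fix is to replace $YZ$ by $Y^c Z$ and to replace your proposed axis-relabelling step with a per-row cyclic shift.
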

\begin{proof}
Let $c := m^{-1} \bmod n$, and let
 \[ \beta : \Ralg[X]/(X^{nm} - 1) \to \Ralg[Y,Z]/(Y^n - 1, Z^m - 1) \]
denote the homomorphism that maps $X$ to $Y^c Z$, and acts as the identity on $\Ralg$.
Suppose that we wish to compute $\beta(F)$ for some input polynomial $F = \sum_{k=0}^{nm-1} F_k X^k \in \Ralg[X]/(X^{nm} - 1)$.
Interpreting the list $(F_0, \ldots, F_{nm-1})$ as an $n \times m$ array, the $(i, j)$-th entry corresponds to $F_{im+j}$.
After transposing the array, which costs $O(b n m \lg \min(n, m))$ bit operations, we have an $m \times n$ array, whose $(j,i)$-th entry is $F_{im+j}$.
Now for each $j$, cyclically permute the $j$-th row by $(jc \bmod n)$ slots; altogether this uses only $O(b n m)$ bit operations.
The result is an $m \times n$ array whose $(j, i)$-th entry is $F_{(i - jc \bmod n)m+j}$, which is exactly the coefficient of $Y^{((i-jc)m+j)c} Z^{(i-jc)m+j} = Y^i Z^j$ in $\beta(F)$.
The inverse map $\beta^{-1}$ may be computed by reversing this procedure.
\end{proof}

\begin{cor}
\label{cor:permute}
Let $n_1, \ldots, n_d \geq 2$ be pairwise relatively prime, let $n := n_1 \cdots n_d$, and let $\Ralg$ be a ring whose elements are represented using $b$ bits.
There exists an isomorphism
 \[ \Ralg[X]/(X^n - 1) \cong \Ralg[X_1, \ldots, X_d]/(X_1^{n_1} - 1, \ldots, X_d^{n_d} - 1) \]
that may be evaluated in either direction in $O(b n \lg n)$ bit operations.
\end{cor}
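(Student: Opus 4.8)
The plan is to deduce the multidimensional isomorphism from the bivariate case in Lemma~\ref{lem:permute} by induction on $d$, being careful to account for the cost of each step so that the total remains $O(b n \lg n)$ rather than the naive $O(b n (\lg n)^2)$ that a careless telescoping would produce. The base case $d = 1$ is trivial (the identity map), and $d = 2$ is exactly Lemma~\ref{lem:permute}, since $\lg \min(n_1, n_2) \leq \lg n$.

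For the inductive step, suppose the result holds for $d - 1$ factors. Write $n = n_1 m$ where $m := n_2 \cdots n_d$; since the $n_k$ are pairwise relatively prime, $\gcd(n_1, m) = 1$. First apply Lemma~\ref{lem:permute} with the pair $(n_1, m)$ to obtain an isomorphism
\[
 \Ralg[X]/(X^n - 1) \cong \Ralg[Y, Z]/(Y^{n_1} - 1, Z^m - 1),
\]
evaluable in either direction in $O(b n \lg \min(n_1, m)) = O(b n \lg n)$ bit operations. Now regard $\Ralg' := \Ralg[Y]/(Y^{n_1} - 1)$ as a ring whose elements occupy $b' := b n_1$ bits, and apply the inductive hypothesis to $\Ralg'[Z]/(Z^m - 1)$ with the $d - 1$ pairwise coprime factors $n_2, \ldots, n_d$, yielding an isomorphism
\[
 \Ralg'[Z]/(Z^m - 1) \cong \Ralg'[X_2, \ldots, X_d]/(X_2^{n_2} - 1, \ldots, X_d^{n_d} - 1)
\]
evaluable in $O(b' m \lg m) = O(b n_1 m \lg m) = O(b n \lg n)$ bit operations. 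Composing the two isomorphisms and renaming $Y$ as $X_1$ gives the desired isomorphism, and the two cost contributions add to $O(b n \lg n)$.

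The main point requiring care — and the only place the argument is not completely routine — is the bookkeeping of the data layout across the composition. When the inductive hypothesis is invoked over $\Ralg'$, each "element of $\Ralg'$" is itself a block of $n_1$ elements of $\Ralg$ laid out contiguously (per the data layout conventions of the preceding subsection), so the transpositions and cyclic shifts performed inside that invocation move these blocks around wholesale; this is exactly what justifies charging $b' = b n_1$ bits per entry. One should also check that the representation of the output of the first isomorphism, as an $n_1 \times m$ array of $\Ralg$-elements, coincides with the representation expected as the input of the second isomorphism, namely a length-$m$ list of $\Ralg'$-elements — but this is immediate from the recursive array-storage convention. Finally, since $d$ is a fixed constant, the implied constant in $O(b n \lg n)$ may depend on $d$ without affecting the statement. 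I expect no genuine obstacle here; the result is essentially a formal consequence of Lemma~\ref{lem:permute} together with an associativity-of-CRT observation.
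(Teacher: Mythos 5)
Your approach is essentially the same as the paper's (peel off one variable at a time via Lemma~\ref{lem:permute}), and the data-layout observations are correct, but there is a genuine gap in the cost analysis. You bound the cost of the first split by $O(bn\lg\min(n_1,m)) = O(bn\lg n)$ and the cost of the recursive call by $O(bn\lg m) = O(bn\lg n)$, discarding the finer structure, and then you control the accumulation over the recursion by asserting that ``$d$ is a fixed constant.'' That assertion is false in the way this corollary is actually used: in the proof of Proposition~\ref{prop:Cad-bound}, the number of factors is $d = \lfloor u/w\rfloor$, which grows with $N$ (roughly like $\lg N / (\lg\lg N)^6$). If the implied constant depended on $d$, the bound extracted in Step~2 of that proof would degrade to $O(d\cdot N\lg N\lg p)$ and the whole complexity argument would break. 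So the statement must be read with a $d$-independent constant, and your write-up does not establish that.

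The fix is simply to refrain from weakening the per-step bounds before summing. Since $\min(n_1,m)\leq n_1$, the first split costs $O(bn\lg n_1)$; by a correctly stated inductive hypothesis the recursive call over $\Ralg' = \Ralg[Y]/(Y^{n_1}-1)$ costs $O\bigl(bn\sum_{i\geq 2}\lg n_i\bigr)$; adding, the total is $O\bigl(bn\sum_{i=1}^d\lg n_i\bigr)$. Then, because each $n_i\geq 2$ forces $d\leq\log_2 n$, one has $\sum_i\lg n_i\leq\sum_i(\log_2 n_i+1)=\log_2 n+d\leq 2\log_2 n$, so the total is $O(bn\lg n)$ with an absolute constant. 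This telescoping is exactly what the paper does (it records the $i$-th isomorphism as costing $O(bn\lg n_i)$ and then sums); your induction implements the same idea but loses it in the last line.
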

\begin{proof}
Using Lemma \ref{lem:permute}, we may construct a sequence of isomorphisms
\begin{align*}
 \Ralg[X]/(X^{n_1 \cdots n_d} - 1)
   & \cong \Ralg[X_1, W_2]/(X_1^{n_1} - 1, W_2^{n_2 \cdots n_d} - 1) \\
   & \cong \Ralg[X_1, X_2, W_3]/(X_1^{n_1} - 1, X_2^{n_2} - 1, W_3^{n_3 \cdots n_d} - 1) \\
                    & \cdots \\
                    & \cong \Ralg[X_1, \ldots, X_d]/(X_1^{n_1} - 1, \ldots, X_d^{n_d} - 1),
\end{align*}
the $i$-th of which may be computed in $O(b n \lg n_i)$ bit operations.
The overall cost is $O(\sum_i b n \lg n_i) = O(b n \lg n)$ bit operations.
\end{proof}


\section{Cyclotomic coefficient rings}
\label{sec:cyclotomic}

The aim of this section is to construct certain coefficient rings that play a central role in the multiplication algorithms described later.
The basic idea is as follows.
Suppose that we want to multiply two polynomials in $\FF_p[X]$, and that the degree of the product is known to be at most $n$.
If $N$ is an integer with $N > n$, then by appropriate zero-padding, we may embed the problem in $\FF_p[X]/(X^N - 1)$.
Furthermore, if we have some factorisation $N = \alpha m$, where $\alpha$ and $m$ are relatively prime, then there is an isomorphism
 \[ \FF_p[X]/(X^N - 1) \cong \FF_p[Y,Z]/(Y^\alpha - 1, Z^m - 1), \]
and the latter ring is closely related to
 \[ \FF_p[Y,Z]/(\phi_\alpha(Y), Z^m - 1) \cong (\FF_p[Y]/\phi_\alpha)[Z]/(Z^m - 1) \]
(recall that $\phi_\alpha(Y)$ denotes the $\alpha$-th cyclotomic polynomial).
In particular, computing the product in $(\FF_p[Y]/\phi_\alpha)[Z]/(Z^m - 1)$ recovers `most' of the information about the product in $\FF_p[X]/(X^N - 1)$.

In this section we show how to choose $N$, $\alpha$ and $m$ with the following properties:
\begin{enumerate}
\item $N$ is not much larger than $n$, so that not too much space is `wasted' in the initial zero-padding step;
\item $\varphi(\alpha)$ ($= \deg \phi_\alpha$) is not much smaller than $\alpha$, so that we do not lose much information by working modulo $\phi_\alpha(Y)$ instead of modulo $Y^\alpha - 1$ (this missing information must be recovered by other means);
\item the coefficient ring $\FF_p[Y]/\phi_\alpha$ contains a principal $m$-th root of unity, so that we can multiply in $(\FF_p[Y]/\phi_\alpha)[Z]/(Z^m - 1)$ efficiently by means of DFTs over $\FF_p[Y]/\phi_\alpha$;
\item $m$ is a product of many integers that are exponentially smaller than $n$, so that the DFTs of length $m$ may be reduced to many small DFTs; and
\item $\alpha$ is itself exponentially smaller than $n$.
\end{enumerate}
The last two items ensure that the small DFTs can be converted to multiplication problems of degree exponentially smaller than $n$, to allow the recursion to proceed.

\begin{defn}
\label{defn:admissible}
An \emph{admissible tuple} is a sequence $(q_0, q_1, \ldots, q_e)$ of distinct primes ($e \geq 1$) satisfying the following conditions.
First,
\begin{equation}
\label{eq:q-bound}
 (\lg N)^3 < q_i < 2^{(\lg \lg N)^2}, \qquad i = 0, \ldots, e,
\end{equation}
where $N := q_0 \cdots q_e$.
Second, $q_i - 1$ is squarefree for $i = 1, \ldots, e$, and
\begin{equation}
\label{eq:lambda-bound}
 \lambda(q_0, \ldots, q_e) := \lcm(q_1 - 1, \ldots, q_e - 1) < 2^{(\lg \lg N)^2}.
\end{equation}
(Note that $q_0 - 1$ need not be squarefree, and $q_0$ does not participate in \eqref{eq:lambda-bound}.)

An \emph{admissible length} is a positive integer $N$ of the form $N = q_0 \cdots q_e$ where $(q_0, \ldots, q_e)$ is an admissible tuple.
\end{defn}

If $N$ is an admissible length, we treat $(q_0, \ldots, q_e)$ and $\lambda(N) := \lambda(q_0, \ldots, q_e)$ as auxiliary data attached to $N$.
For example, if an algorithm takes~$N$ as input, we implicitly assume that this auxiliary data is also supplied as part of the input.

\newcommand{\thindot}{\hspace{1pt}\mathord{\cdot}\hspace{1pt}}
\begin{exam}
\label{exam:admissible}
For $n = 10^{100000}$, there is a nearby admissible length
\begin{align*}
 N & = 1000000000000000000156121\ldots \text{(99971 digits omitted)} \ldots26353 \\
   & = q_0 q_1 \cdots q_{6035}
\end{align*}
where
\begin{align*}
 q_0 & = 206658761261792645783, \\
 q_1 & = 36658226833235899 = 1 + 2 \thindot 3 \thindot 11 \thindot 17 \thindot 23 \thindot 29 \thindot 37 \thindot 53 \thindot 59 \thindot 67 \thindot 71 \thindot 89, \\
 q_2 & = 36658244723486119 = 1 + 2 \thindot 3 \thindot 17 \thindot 29 \thindot 47 \thindot 59 \thindot 67 \thindot 73 \thindot 83 \thindot 101 \thindot 109, \\
 q_3 & = 36658319675739343 = 1 + 2 \thindot 3 \thindot 7 \thindot 17 \thindot 29 \thindot 31 \thindot 41 \thindot 47 \thindot 53 \thindot 61 \thindot 89 \thindot 103, \\
 q_4 & = 36658428883190467 = 1 + 2 \thindot 3 \thindot 11 \thindot 31 \thindot 43 \thindot 61 \thindot 71 \thindot 73 \thindot 107 \thindot 109 \thindot 113, \\
     & \cdots \\
 q_{6035} & = 37076481100386859 = 1 + 2 \thindot 3 \thindot 13 \thindot 29 \thindot 31 \thindot 59 \thindot 83 \thindot 97 \thindot 101 \thindot 103 \thindot 107
\end{align*}
and
 \[ \lambda(N) = 2 \thindot 3 \thindot 5 \cdots 113 = 31610054640417607788145206291543662493274686990. \]
\end{exam}

\begin{defn}
Let $p$ be a prime.
An admissible length $N$ is called \emph{$p$-admissible} if $N > p^2$ and $p \ndivides N$ (i.e., $p$ is distinct from $q_0, \ldots, q_e$).
\end{defn}
The following result explains how to choose a $p$-admissible length close to any prescribed target.
\begin{prop}
\label{prop:admissible}
There is an absolute constant $z_1 > 0$ with the following property.
Given as input a prime $p$ and an integer $n > \max(z_1, p^2)$, in $2^{O((\lg \lg n)^2)}$ bit operations we may compute a $p$-admissible length $N$ in the interval
\begin{equation}
\label{eq:N-bound}
 n < N < \left(1 + \frac{1}{\lg n}\right) n.
\end{equation}
\end{prop}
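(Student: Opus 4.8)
The plan is to realise $N$ in the shape $N = q_0 q_1 \cdots q_e$, where $q_1, \dots, q_e$ are distinct primes whose predecessors $q_i - 1$ all divide one fixed squarefree, smooth modulus, and $q_0$ is a lone auxiliary prime whose only job is to nudge $N$ into the narrow window \eqref{eq:N-bound}. The first step is to pin down the modulus. I would take $\Lambda := \prod_{\ell \le s} \ell$ (product over primes $\ell \le s$), with $s$ of order $(\lg \lg n)^2$, as large as possible subject to $\Lambda < 2^{(\lg \lg n)^2 / 2}$. Since the output will satisfy $N > n$ (hence $\lg \lg N \ge \lg \lg n$), this choice forces $\Lambda < 2^{(\lg \lg N)^2}$, makes every $q_i - 1 \mid \Lambda$ automatically squarefree, and makes every prime $q$ with $q - 1 \mid \Lambda$ satisfy $q \le \Lambda + 1 < 2^{(\lg \lg N)^2}$. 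Call such a $q$ a \emph{smooth-shifted prime}; I keep only those with $q > (\lg n)^4$, which for $n$ large exceeds $(\lg N)^3$, so that the admissibility bounds \eqref{eq:q-bound} hold for every $q_i$ with $i \ge 1$ and, since $\lambda(N) \mid \Lambda$, so does \eqref{eq:lambda-bound}.

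The one genuinely non-elementary ingredient is that there are \emph{enough} smooth-shifted primes, and here I would invoke the theorem of Prachar \cite{Pra-divisors} and Adleman--Pomerance--Rumely \cite{APR-primes}: it permits keeping $s$ within the budget $O((\lg \lg n)^2)$ while guaranteeing that the product of all primes $q$ with $q - 1 \mid \Lambda$ exceeds $n$ (indeed any fixed power of $n$), and discarding the few smooth-shifted primes below $(\lg n)^4$ leaves the product still above $n$. Granting this, the next step is a greedy ``making change'' argument. One enumerates the retained smooth-shifted primes by running over the $2^{\pi(s)} = 2^{O((\lg \lg n)^2)}$ divisors of $\Lambda$ and primality-testing each successor by trial division; as $\Lambda$ has only $O((\lg \lg n)^2)$ bits, this costs $2^{O((\lg \lg n)^2)}$ altogether. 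I then pass through these primes, multiplying each into a running product provided it stays $\le n/(\lg n)^6$. Every such prime is $\le \Lambda < 2^{(\lg \lg n)^2/2}$ while the full product of them exceeds $n$, so the standard knapsack estimate delivers a subset $\{q_1, \dots, q_e\}$ ($e \ge 1$, since one prime alone is $\le n/(\lg n)^6$) whose product $\Pi$ satisfies $n / ((\lg n)^6 \Lambda) < \Pi \le n/(\lg n)^6$; equivalently $M := n/\Pi$ lies in the real interval $[(\lg n)^6,\, (\lg n)^6 \Lambda) \subseteq [(\lg n)^6,\, 2^{(\lg \lg n)^2})$.

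The final step is to produce $q_0$ and close the argument. I look for a prime $q_0 \in (M, M(1 + 1/\lg n))$ distinct from $p$ and from $q_1, \dots, q_e$; then $N := q_0 \Pi = (q_0/M)\, n$ lies in $(n, (1 + 1/\lg n)n)$, while $N > n > p^2$ and $p \notin \{q_0, \dots, q_e\}$ certify that $N$ is $p$-admissible. Since $M \ge (\lg n)^6$, the interval $(M, M + M/\lg n)$ has length at least $M^{5/6}$, so standard results on primes in short intervals furnish $\gg \lg n$ primes there --- comfortably more than the $O(\lg n / \log \lg n)$ forbidden values --- and they are found by trial division in $2^{O((\lg \lg n)^2)}$ operations. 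What remains is routine bookkeeping: verifying that the $q_i$ for $0 \le i \le e$ are distinct primes in $((\lg N)^3, 2^{(\lg \lg N)^2})$, that $q_i - 1$ is squarefree for $i \ge 1$, that $\lambda(N) = \lcm(q_1 - 1, \dots, q_e - 1) \mid \Lambda < 2^{(\lg \lg N)^2}$, and that the total cost is $2^{O((\lg \lg n)^2)}$. I expect the main obstacle to be exactly the input cited in the second paragraph: nothing elementary forces the smooth-shifted primes below $2^{(\lg \lg n)^2}$ to multiply past $n$, and it is precisely the Prachar/APR bound that permits the doubly-logarithmic choice of $s$ --- and hence of $\lambda(N)$ and of the $q_i$ --- on which the entire construction rests.
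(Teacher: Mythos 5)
The overall skeleton of your plan (find a squarefree, doubly-exponentially small modulus; collect the primes $q$ with $q-1$ dividing it; multiply in enough of them; finish with one auxiliary prime $q_0$ chosen in a short interval to land in the target window) is the same as the paper's. But there is a genuine gap in the one step you yourself flag as the non-elementary ingredient, and it cannot be patched by the reference you cite.

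You propose to \emph{fix} the modulus in advance as a primorial $\Lambda = \prod_{\ell \le s} \ell$ and then appeal to Prachar / Adleman--Pomerance--Rumely to conclude that the primes $q$ with $q-1 \mid \Lambda$ multiply past $n$. That is not what those results say. The Adleman--Pomerance--Rumely statement (the paper's Lemma~\ref{lem:APR}) is purely existential: for a given $x$, \emph{some} positive squarefree $\lambda_0 < x^2$ has many primes $q$ with $q-1 \mid \lambda_0$. It gives essentially no information about which $\lambda_0$ works --- in particular it does not assert that an initial-segment primorial works, and the statement ``the product of all primes $q$ with $q-1 \mid \Lambda$ exceeds $n$'' for a fixed primorial $\Lambda$ of the stated size is, to my knowledge, open unconditionally. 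The paper draws out exactly this point in Remark~3.6: Example~\ref{exam:admissible} happens to use a primorial, but ``one cannot guarantee that $\lambda$ will be a product of an initial segment of primes'', and the proofs of Prachar and of APR ``yield very little information at all about the prime factorisation of $\lambda$.'' Correspondingly, the paper's proof does not pick $\lambda$ in advance; it sieves over \emph{all} $\lambda \le \lambda_{\max}$ and uses the APR existence statement only to guarantee that the search succeeds, with the algorithmic cost of the search absorbed into the $2^{O((\lg\lg n)^2)}$ budget. To repair your argument you must do the same: let $\Lambda$ range over candidates, deduce from Lemma~\ref{lem:APR} that some candidate yields $>\lg n$ suitable primes, and locate it by exhaustive counting, rather than asserting that the primorial works.

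The remaining bookkeeping in your proposal (squarefree-ness of $q_i - 1$ from squarefree-ness of the modulus, the size bounds $(\lg N)^3 < q_i < 2^{(\lg\lg N)^2}$, the divisibility $\lambda(N) \mid \Lambda$, the greedy ``making change'' product with a BHP-style short-interval prime for $q_0$, and the cost estimates) is reasonable and could be made rigorous; it differs in minor details from the paper, which instead multiplies the $q_i$ in order until the product first exceeds $n/2^{\frac12(\lg\lg n)^2}$ and then takes $q_0$ as the least admissible prime above $n/(q_1\cdots q_e)$, so that $q_0$ automatically exceeds all the other $q_i$ and no separate collision check is needed. But none of that rescues the argument without the fix above.
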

The key ingredient in the proof is the following number-theoretic result of Adleman, Pomerance and Rumely.
\begin{lem}[\protect{\cite[Prop.~10]{APR-primes}}]
\label{lem:APR}
There is an absolute constant $C_1 > 0$ with the following property.
For all $x > 10$, there exists a positive squarefree integer $\lambda_0 < x^2$ such that
 \[ \sum_{\substack{\text{\rm $q$ prime} \\ q-1 \divides \lambda_0}} 1 > \exp(C_1 \log x / \log \log x). \]
\end{lem}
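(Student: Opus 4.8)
Since this lemma is quoted from \cite[Prop.~10]{APR-primes} (which refines earlier work of Prachar \cite{Pra-divisors}), strictly speaking no new proof is required; I will instead outline the structure of the argument, since it is the number-theoretic engine behind Proposition~\ref{prop:admissible}.

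The plan is to take for $\lambda_0$ the primorial $\prod_{\ell \le y}\ell$, the product of all primes up to a threshold $y$; this is squarefree by construction. One chooses $y$ as large as possible subject to $\sum_{\ell \le y}\log\ell < 2\log x$, so that $\lambda_0 < x^2$. By the prime number theorem, $\sum_{\ell \le y}\log\ell = (1+o(1))\,y$ and $\pi(y) = (1+o(1))\,y/\log y$; combined with the maximality of $y$ this gives $y = (2+o(1))\log x$ and $\log y = (1+o(1))\log\log x$ as $x \to \infty$, so that $\tau(\lambda_0) = 2^{\pi(y)} = \exp\bigl((2\log 2 + o(1))\log x/\log\log x\bigr)$. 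With this $\lambda_0$, a prime $q$ satisfies $q - 1 \divides \lambda_0$ precisely when $q - 1$ is squarefree and $y$-smooth (has no prime factor exceeding $y$); equivalently, the primes to be counted are exactly those $q$ for which $q - 1$ is one of the $\tau(\lambda_0)$ divisors of $\lambda_0$.

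The problem thus reduces to showing that $\#\{\, d \divides \lambda_0 : d + 1 \text{ is prime} \,\} \ge \tau(\lambda_0)^{\delta}$ for some absolute constant $\delta > 0$; anything of this strength suffices, since $\tau(\lambda_0)^{\delta} = \exp\bigl((2\delta\log 2 + o(1))\log x/\log\log x\bigr)$ already exceeds $\exp(C_1\log x/\log\log x)$ once $C_1 < 2\delta\log 2$ and $x$ is large. I expect this lower bound on the number of shifted primes among the divisors of $\lambda_0$ to be the main obstacle, and it is the genuinely hard content of \cite{APR-primes}. A naive first-moment argument is hopeless: the smoothness threshold $y$ is only logarithmic in the size of the divisors $d$, so the average number of prime factors of $d + 1$ exceeding $y$ diverges and the associated inclusion--exclusion lower bound goes negative. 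The route of \cite{APR-primes} is instead to discard the divisors $d$ for which $d + 1$ is composite by means of Brun--Titchmarsh-type sieve upper bounds, and to combine this with finer information about the distribution of the divisors of $\lambda_0$ in residue classes, forcing a fixed power of them to be shifted primes; I would take this step as a black box from \cite{APR-primes} rather than import the underlying analytic number theory.

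Granting the reduced estimate, I would finish by routine bookkeeping. For $x$ larger than some absolute $x_2$ the count exceeds $\exp(C_1\log x/\log\log x)$ with $C_1 := \delta\log 2$, say. For $10 < x \le x_2$ the count is at least $2$ (both $q = 2$, with $q - 1 = 1$, and $q = 3$, with $q - 1 = 2$ dividing $\lambda_0$, qualify), while $(\log\log x)/\log x$ is bounded below on $(10, x_2]$, so after shrinking $C_1$ if necessary the bound holds there too; one then takes the smaller of the two constants. Apart from the cited estimate, every step --- the choice of $y$, and the asymptotics for $\pi(y)$ and $\tau(\lambda_0)$ --- is entirely routine.
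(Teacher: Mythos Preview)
The paper does not give its own proof of this lemma; it is simply quoted from \cite[Prop.~10]{APR-primes} and used as a black box in the proof of Proposition~\ref{prop:admissible}. You correctly recognise this, and your proposal goes beyond what the paper does by sketching how such a result might be proved. Strictly speaking, then, there is nothing in the paper to compare your argument against, and your acknowledgement that ``no new proof is required'' already matches the paper exactly.

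One word of caution about the sketch itself. You commit to the explicit witness $\lambda_0 = \prod_{\ell \le y}\ell$ and then reduce to a lower bound on shifted primes among its divisors, which you attribute to \cite{APR-primes} as a black box. But the paper's own Remark following Proposition~\ref{prop:admissible} says that ``the proof of \cite[Prop.~10]{APR-primes} (and of its predecessor~\cite{Pra-divisors}) yields very little information at all about the prime factorisation of~$\lambda$,'' and refers the reader to \cite[Remark~6.2]{APR-primes} for discussion. This strongly suggests that the actual APR argument is \emph{not} the explicit primorial construction you describe, but a less constructive existence proof. So while your outline is a plausible route to the stated inequality and is close in spirit to what one might expect, your description of what \cite{APR-primes} actually does---in particular the claim about Brun--Titchmarsh bounds and distribution of divisors in residue classes---is speculative and likely does not match the cited proof. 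Since you are in any case importing the hard step wholesale from \cite{APR-primes}, this does not affect the validity of citing the lemma, only the accuracy of your exposition of its proof.
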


\newcommand{\lambdamax}{\lambda_{\text{max}}}

\begin{proof}[Proof of Proposition \ref{prop:admissible}]
Let $\lambdamax := \lceil 2^{\frac29(\lg \lg n)^2} \rceil$, and for $\lambda \geq 1$ define $f(\lambda)$ to be the number of primes $q$ in the interval $(\lg n)^4 < q \leq \lambdamax + 1$ such that $q - 1 \divides \lambda$ and~${q \neq p}$.
We claim that, provided $n$ is large enough, there exists some squarefree $\lambda_0 \in \{1, \ldots, \lambdamax\}$ such that $f(\lambda_0) > \lg n$.
To see this, apply Lemma \ref{lem:APR} with $x := 2^{\frac19(\lg \lg n)^2}$; for large $n$ we then have
 \[ C_1 \log x / \log \log x > 15 (\log_2 x)^{1/2} = 5 \lg \lg n, \]
so Lemma \ref{lem:APR} implies that there exists a positive squarefree integer $\lambda_0 < x^2 \leq \lambdamax$ for which
 \[ \sum_{\substack{\text{$q$ prime} \\ q-1 \divides \lambda_0}} 1 > \exp(5 \lg \lg n) > (\lg n)^5 \]
and hence
 \[ f(\lambda_0) = \sum_{\substack{(\lg n)^4 < q \leq \lambdamax + 1 \\ \text{$q$ prime, $q \neq p$} \\ q-1 \divides \lambda_0}} 1 > (\lg n)^5 - (\lg n)^4 - 1 > \lg n. \]

We may locate one such $\lambda_0$ by means of the following algorithm (adapted from the proof of \cite[Lemma~4.5]{HvdHL-ffmul}).
First use a sieve to enumerate the primes $q$ in the interval $(\lg n)^4 < q \leq \lambdamax + 1$, and to determine which $\lambda = 1, \ldots, \lambdamax$ are squarefree, in $(\lambdamax)^{1+o(1)}$ bit operations.
Now initialise an array of integers $c_\lambda := 0$ for $\lambda = 1, \ldots, \lambdamax$.
For each $q \neq p$, scan through the array, incrementing those~$c_\lambda$ for which $\lambda$ is squarefree and divisible by $q-1$, and stop as soon as one of the~$c_\lambda$ reaches $\lg n$.
We need only allocate $O(\lg \lg n)$ bits per array entry, so each pass through the array costs $O(\lambdamax \lg \lg n)$ bit operations.
The number of passes is $O(\lambdamax)$, so the total cost of finding a suitable $\lambda_0$ is $O(\lambdamax^2 \lg \lg n) = 2^{O((\lg \lg n)^2)}$ bit operations.
Within the same time bound, we may also easily recover a list of primes $q_1, q_2, \ldots, q_{\lg n}$ for which $q_i - 1 \divides \lambda_0$.

Next, compute the partial products $q_1$, $q_1 q_2$, \ldots, $q_1 q_2 \cdots q_{\lg n}$, and determine the smallest integer $e \geq 1$ for which $q_1 \cdots q_e > n / 2^{\frac12(\lg \lg n)^2}$.
Such an $e$ certainly exists, as $q_1 \cdots q_{\lg n} \geq 2^{\lg n} \geq n$.
Since each $q_i$ occupies $O((\lg \lg n)^2)$ bits, this can all be done in $(\lg n)^{O(1)}$ bit operations.
Also, as
 \[ q_e \leq \lambda_0 + 1 \leq 2^{\frac29(\lg \lg n)^2} + 1 < 2^{\frac14(\lg \lg n)^2} \]
and $q_1 \cdots q_{e-1} \leq n/2^{\frac12(\lg \lg n)^2}$, we find that
 \[ 2^{\frac14(\lg \lg n)^2} < \frac{n}{q_1 \cdots q_e} < 2^{\frac12(\lg \lg n)^2} \]
for large $n$.

Let $q_0$ be the least prime that exceeds $n/(q_1 \cdots q_e)$ and that is distinct from~$p$.
According to \cite{BHP-diff}, the interval $[x - x^{0.525}, x]$ contains at least one prime for all sufficiently large $x$; therefore
\begin{align*}
 q_0 & < \frac{n}{q_1 \cdots q_e} + \left(\frac{n}{q_1 \cdots q_e}\right)^{0.6} \\
     & < \left( 1 + (2^{\frac14(\lg \lg n)^2})^{-0.4}\right) \frac{n}{q_1 \cdots q_e} < \left( 1 + \frac{1}{\lg n} \right) \frac{n}{q_1 \cdots q_e}
\end{align*}
for $n$ sufficiently large.
We may find $q_0$ in $2^{O((\lg \lg n)^2)}$ bit operations, by using trial division to test successive integers for primality.

Set $N := q_0 q_1 \cdots q_e$.
Then \eqref{eq:N-bound} holds, and certainly $N > p^2$ and $p \ndivides N$.
Let us check that $(q_0, \ldots, q_e)$ is admissible, provided $n$ is large enough.
For $i = 1, \ldots, e$ we have
 \[ (\lg N)^3 < (\lg n)^4 < q_i \leq \lambda_0 + 1 < 2^{\frac14(\lg \lg n)^2} < 2^{(\lg \lg N)^2}, \]
and also
 \[ (\lg N)^3 < 2^{\frac14(\lg \lg n)^2} < q_0 < \left(1 + \frac{1}{\lg n}\right) 2^{\frac12(\lg \lg n)^2} < 2^{(\lg \lg N)^2}; \]
this establishes \eqref{eq:q-bound}.
Also, as $q_0 > 2^{\frac14(\lg \lg n)^2} > q_i$ for $i = 1, \ldots, e$, we see that $q_0$ is distinct from $q_1, \ldots, q_e$.
Finally, \eqref{eq:lambda-bound} holds because
 \[ \lcm(q_1 - 1, \ldots, q_e - 1) \divides \lambda_0 \leq 2^{\frac29 (\lg \lg n)^2} < 2^{(\lg \lg N)^2}. \]
This also shows that we may compute the auxiliary data $\lambda(q_0, \ldots, q_e)$ in $2^{O((\lg \lg n)^2)}$ bit operations.
\end{proof}

\begin{rem}
Example \ref{exam:admissible} was constructed by enumerating the smallest primes $q_1, q_2, \ldots$ exceeding $(\lg n)^3$ for which $q_i - 1 \divides 2 \cdot 3 \cdot 5 \cdots 113$, halting just before their product reached~$n$, and then choosing $q_0$ to make~$N$ as close to~$n$ as possible.
The proof of Proposition~\ref{prop:admissible} goes a different way: rather than choosing $\lambda$ first, the proof constructs $q_1, \ldots, q_e$ and $\lambda$ simultaneously.
In particular, one cannot guarantee that~$\lambda$ will be a product of an initial segment of primes, as occurred in the example.
Indeed, the proof of \cite[Prop.~10]{APR-primes} (and of its predecessor~\cite{Pra-divisors}) yields very little information at all about the prime factorisation of $\lambda$.
For further discussion, see \cite[Remark 6.2]{APR-primes}.
\end{rem}

\begin{defn}
\label{defn:decomposition}
Let $p$ be a prime and let $N = q_0 \cdots q_e$ be a $p$-admissible length.
A \emph{$p$-admissible divisor} of $N$ is a positive divisor $\alpha$ of $N$, with $q_0 \divides \alpha$, such that the ring $\FF_p[Y]/\phi_\alpha(Y)$ contains a principal $(q_1 \cdots q_e)$-th root of unity, and such that
\begin{equation}
\label{eq:alpha-bound}
  \lg N < \alpha < 2^{(\lg \lg N)^4}
\end{equation}
and
\begin{equation}
\label{eq:phi-alpha-bound}
 \varphi(\alpha) > \left(1 - \frac{1}{\lg N}\right) \alpha.
\end{equation}
\end{defn}

The next result shows how to construct a $p$-admissible divisor for any sufficiently large $p$-admissible length $N$.
The idea behind the construction is as follows.
Let $\ord_n p$ denote the order of $p$ in the multiplicative group of integers modulo $n$.
For any $\alpha \geq 1$, not divisible by $p$, the ring $\FF_p[Y]/\phi_\alpha(Y)$ is a direct sum of fields of order $p^r$, where $r = \ord_\alpha p$ \cite[Lemma~14.50]{vzGG-compalg3}.
Our goal is to ensure that $p^r - 1$ is divisible by $q_1 \cdots q_e$, so that $\FF_p[Y]/\phi_\alpha(Y)$ contains the desired principal root of unity.
One way to force $q_i$ to divide $p^r - 1$ is simply to choose $\alpha$ divisible by~$q_i$, as this implies that $\ord_{q_i} p \divides r$.
The difficulty is that we cannot do this for \emph{all} $q_i$, because then $\alpha$ would become too large, violating \eqref{eq:alpha-bound}.
Fortunately, we can take advantage of the fact that the $q_i - 1$ share a small common multiple $\lambda = \lambda(N)$; this enables us to take $\alpha$ to be a product of a \emph{small} subset of the $q_i$, in such a way that still every one of $q_1, \ldots, q_e$ divides $p^r - 1$.

\begin{prop}
\label{prop:decomposition}
There is an absolute constant $z_2 > 0$ with the following property.
Given as input a prime $p$ and a $p$-admissible length $N > z_2$, we may compute a $p$-admissible divisor $\alpha$ of $N$, together with the cyclotomic polynomial $\phi_\alpha \in \FF_p[Y]$ and a principal $(q_1 \cdots q_e)$-th root of unity in $\FF_p[Y]/\phi_\alpha$, in $2^{O((\lg \lg N)^4)} p^{1+o(1)}$ bit operations.
\end{prop}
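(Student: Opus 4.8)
The plan is to take $\alpha$ equal to $q_0$ times a small, carefully chosen subset of $q_1,\dots,q_e$ — just enough of them so that $r:=\ord_\alpha p$ becomes a multiple of $\lcm(\ord_{q_1}p,\dots,\ord_{q_e}p)$ — and then to extract the required root of unity by factoring $\phi_\alpha$ over $\FF_p$. In detail: since $p\ndivides N$, the orders $d_i:=\ord_{q_i}p$ are defined for $i=1,\dots,e$, and since $d_i\divides q_i-1$ with $q_i-1$ squarefree, every $d_i$, and hence $\lcm(d_1,\dots,d_e)$, is squarefree. Let $P$ be the set of primes dividing at least one $d_i$; for each $\ell\in P$ fix an index $i(\ell)$ with $\ell\divides d_{i(\ell)}$, put $S:=\{i(\ell):\ell\in P\}$, and set $\alpha:=q_0\prod_{i\in S}q_i$, so that $q_0\divides\alpha\divides N$. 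Every prime $\ell$ dividing $\lcm(d_1,\dots,d_e)$ lies in $P$, hence divides $d_{i(\ell)}=\ord_{q_{i(\ell)}}p$, which divides $r$ because $q_{i(\ell)}\divides\alpha$ (so $\ord_{q_{i(\ell)}}p\divides\ord_\alpha p$); as $\lcm(d_1,\dots,d_e)$ is squarefree it therefore divides $r$, whence $d_i\divides r$ and $q_i\divides p^r-1$ for all $i$, and as the $q_i$ are distinct primes, $q_1\cdots q_e\divides p^r-1$. By \cite[Lemma~14.50]{vzGG-compalg3}, $\FF_p[Y]/\phi_\alpha$ is a direct sum of $\varphi(\alpha)/r$ copies of $\FF_{p^r}$; each summand has cyclic unit group of order divisible by $q_1\cdots q_e$, so it contains an element of order exactly $q_1\cdots q_e$, and any element of $\FF_p[Y]/\phi_\alpha$ restricting to such an element in every summand is a principal $(q_1\cdots q_e)$-th root of unity (the relevant order is coprime to $p$, hence invertible). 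This establishes the root-of-unity condition of Definition~\ref{defn:decomposition}.

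To verify the remaining conditions of Definition~\ref{defn:decomposition}, I would note that $\alpha\geq q_0>(\lg N)^3\geq\lg N$, while $\alpha<(2^{(\lg\lg N)^2})^{|S|+1}$, where $|S|\leq|P|$ is at most the number of distinct prime factors of $\lcm(d_1,\dots,d_e)$, a divisor of $\lambda(N)<2^{(\lg\lg N)^2}$. Since any integer $m\geq 3$ has $O(\log m/\log\log m)$ distinct prime factors, $|S|=o((\lg\lg N)^2)$, so $\alpha<2^{(|S|+1)(\lg\lg N)^2}<2^{(\lg\lg N)^4}$ for $N$ large; this is \eqref{eq:alpha-bound}. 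As $\alpha$ has exactly $|S|+1$ prime factors, all larger than $(\lg N)^3$, we get $\varphi(\alpha)/\alpha=\prod_{q\divides\alpha}(1-1/q)>1-(|S|+1)/(\lg N)^3>1-1/\lg N$ for $N$ large, which is \eqref{eq:phi-alpha-bound}. So $\alpha$ is a $p$-admissible divisor of $N$.

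For the complexity, the algorithm would proceed as follows. Reducing $p$ modulo each $q_i$, trial-factoring $q_i-1<2^{(\lg\lg N)^2}$, and a few modular exponentiations with an $O((\lg\lg N)^2)$-bit modulus yield all the $d_i$, hence $S$ and $\alpha$, in $2^{O((\lg\lg N)^2)}(\lg N)^{O(1)}$ bit operations. One computes $\phi_\alpha\bmod p$ from $\phi_\alpha\equiv\prod_{d\divides\alpha}(Y^d-1)^{\mu(\alpha/d)}\pmod p$, performing the divisions exactly in $\FF_p[Y]$, in time polynomial in $\alpha<2^{(\lg\lg N)^4}$ and $\lg p$. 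To produce the root of unity, factor $\phi_\alpha$ over $\FF_p$ into irreducible factors $g_1,\dots,g_t$ (each of degree $r$) by a deterministic algorithm, at cost $(\deg\phi_\alpha)^{O(1)}p^{1+o(1)}=2^{O((\lg\lg N)^4)}p^{1+o(1)}$ (see, e.g., \cite[\S14]{vzGG-compalg3}); then for each $k$ and each $i$ find a root $\zeta_{i,k}$ of $\phi_{q_i}$ in $\FF_p[Y]/g_k\cong\FF_{p^r}$ — it exists because $q_i\divides p^r-1$, and has order exactly $q_i$ — by applying the same machinery to $\phi_{q_i}$, a polynomial of degree $q_i-1<2^{(\lg\lg N)^2}$ over $\FF_{p^r}$, at cost polynomial in $q_i$, $r$, $\lg p$ and $p$ (crucially, $p$ and not $p^r$); set $\omega_k:=\prod_{i=1}^e\zeta_{i,k}$, which has order exactly $q_1\cdots q_e$, and recombine the $\omega_k$ into $\omega\in\FF_p[Y]/\phi_\alpha$ by the Chinese remainder theorem. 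Summed over the $2^{O((\lg\lg N)^4)}$ pairs $(i,k)$, the total remains $2^{O((\lg\lg N)^4)}p^{1+o(1)}$ bit operations. All the estimates above hold once $N$ exceeds a suitable absolute constant $z_2$.

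The substantive point is the choice of $S$: it suffices to include only about $\log\lambda(N)/\log\log\lambda(N)=o((\lg\lg N)^2)$ of the primes $q_i$ inside $\alpha$ — few enough to keep $\alpha$ below $2^{(\lg\lg N)^4}$ — because once every prime dividing some $\ord_{q_i}p$ has been captured by a single $q_{i'}\divides\alpha$, divisibility of $p^r-1$ by each of $q_1,\dots,q_e$ comes for free. The most delicate part of the write-up will be the complexity accounting for the root of unity: one must check that the deterministic finite-field factorizations, both over $\FF_p$ and over the extension $\FF_{p^r}$, run within $2^{O((\lg\lg N)^4)}p^{1+o(1)}$ bit operations — in particular that the extension-field computation is polynomial in $p$ rather than in $p^r$ — and that the integer coefficients arising while computing $\phi_\alpha$ do not spoil the bound.
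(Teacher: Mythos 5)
Your proof is correct and takes essentially the same approach as the paper's: take $\alpha=q_0$ times a small subset of the $q_i$ chosen so that every prime dividing some $\ord_{q_i}p$ divides $\ord_{q_{i'}}p$ for a $q_{i'}\mid\alpha$, then use squarefreeness (of $\lambda(N)$, equivalently of each $q_i-1$) to conclude $q_1\cdots q_e\mid p^r-1$; verify the size bounds from $|S|+1\leq(\lg\lg N)^2$; factor $\phi_\alpha$ over $\FF_p$ and combine per-summand roots of unity by CRT. The only cosmetic differences are that you index over the primes dividing $\lcm(\ord_{q_1}p,\ldots,\ord_{q_e}p)$ rather than those dividing $\lambda(N)$ (equivalent, and slightly tighter), and you build the order-$(q_1\cdots q_e)$ element as a product of order-$q_i$ elements rather than directly invoking Shoup's primitive-root construction as the paper does.
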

\begin{proof}
We are given as input an admissible tuple $(q_0, \ldots, q_e)$ with $N = q_0 \cdots q_e$, and the squarefree integer $\lambda := \lambda(q_0, \ldots, q_e)$.
Let $\Lset$ be the set of primes dividing~$\lambda$.
By \eqref{eq:lambda-bound} we have $|\Lset| \leq \log_2 \lambda < (\lg \lg N)^2$, and we may compute $\Lset$ in $\lambda^{O(1)} = 2^{O((\lg \lg N)^2)}$ bit operations.

We start by computing a table of values of $\ord_{q_i} p$ for $i = 1, \ldots, e$; note that $p \neq q_i$ by hypothesis, so $\ord_{q_i} p$ is well-defined.
We have $q_i - 1 \divides \lambda$ and hence $\ord_{q_i} p \divides \lambda$ for each $i$.
To compute $\ord_{q_i} p$, we first compute $p \bmod q_i$ in $O(\lg q_i \lg p)$ bit operations, and then repeatedly multiply by $p$ modulo $q_i$ until reaching $1$.
Since $\ord_{q_i} p \leq \lambda$, and there are $e = O(\lg N)$ primes $q_i$, the total cost to compute the table is
 \[ O((\lambda \lg^2 q_i + \lg q_i \lg p) \lg N) = (2^{(\lg \lg N)^2} \lg p)^{O(1)} \]
bit operations.

Using the above table, we construct a certain vector $\sigma = (\sigma_1, \ldots, \sigma_e) \in \{0, 1\}^e$ as follows.
Initialise the vector as $\sigma := (0, \ldots, 0)$.
For each $\ell \in \Lset$, search for the smallest $i = 1, \ldots, e$ such that $\ell \divides \ord_{q_i} p$.
If such an $i$ is found, set $\sigma_i := 1$; if no~$i$ is found, ignore this $\ell$.
The cost of computing $\sigma$ is $O(|\Lset| e (\lg \lambda)^2) = (\lg N)^{O(1)}$ bit operations.

Set $\alpha := q_0 \prod_{i:\sigma_i = 1} q_i$.
To establish \eqref{eq:alpha-bound}, note that the number of $i$ for which $\sigma_i = 1$ is at most $|\Lset|$, so \eqref{eq:q-bound} implies that
 \[ \lg N < q_0 \leq \alpha < (2^{(\lg \lg N)^2})^{|\Lset| + 1} \leq (2^{(\lg \lg N)^2})^{(\lg \lg N)^2} = 2^{(\lg \lg N)^4}. \]
For \eqref{eq:phi-alpha-bound}, first observe that
 \[ \frac{\varphi(\alpha)}{\alpha} = \left(1 - \frac{1}{q_0} \right) \prod_{i : \sigma_i = 1} \left(1 - \frac{1}{q_i} \right) > \left(1 - \frac{1}{(\lg N)^3}\right)^{(\lg \lg N)^2}. \]
Since $-\log(1-\varepsilon) < 2\varepsilon$ for any $\varepsilon \in (0, \frac12)$, we obtain
 \[ -\log\frac{\varphi(\alpha)}{\alpha} < \frac{2(\lg \lg N)^2}{(\lg N)^3} < \frac{1}{\lg N} \]
and hence $\varphi(\alpha)/\alpha > \exp(-1/\lg N) > 1 - 1/\lg N$ for sufficiently large $N$.

Now compute the cyclotomic polynomial $\phi_\alpha \in \FF_p[Y]$ (i.e., the reduction modulo~$p$ of $\phi_\alpha(Y) \in \ZZ[Y]$).
This can be done in $(\alpha \lg p)^{O(1)}$ bit operations, using for example \cite[Algorithm 14.48]{vzGG-compalg3}.
We may then determine the factorisation of $\phi_\alpha$ into irreducibles in $\FF_p[Y]$, say $\phi_\alpha = f_1 \cdots f_k$, in $\alpha^{O(1)} p^{1/2+o(1)}$ bit operations \cite[Thm.~1]{Sho-deterministic}.
Since $p \ndivides \alpha$, the $f_j$ are distinct, and each $f_j$ has degree $r := \ord_\alpha p$ \cite[Lemma~14.50]{vzGG-compalg3}.
In other words, $\FF_p[Y]/\phi_\alpha$ is isomorphic to a direct sum of $k$ copies of $\FF_{p^r}$.

We claim that $q_h \divides p^r - 1$ for all $h = 1, \ldots, e$.
For this, it suffices to prove that $\ord_{q_h} p \divides r$ for each $h$.
Since $\lambda$ is squarefree, it suffices in turn to show that every prime $\ell$ dividing $\ord_{q_h} p$ also divides~$r$.
But for every such $\ell$, the procedure for constructing $\sigma$ must have succeeded in finding \emph{some} $i$ for which $\ell \divides \ord_{q_i} p$ (since at least one value of $i$ works, namely $i=h$).
Then $\sigma_i = 1$ for this $i$, so $q_i \divides \alpha$.
This implies that $\ord_{q_i} p \divides \ord_\alpha p = r$, and hence that $\ell \divides r$.

We conclude that $q_1 \cdots q_e \divides p^r - 1$, so each $\FF_p[Y]/f_j$ contains a primitive root of unity of order $q_1 \cdots q_e$.
As the factorisation of $q_1 \cdots q_e$ is known, we may locate one such primitive root in each $\FF_p[Y]/f_j$ in $\alpha^{O(1)} p^{1+o(1)}$ bit operations \cite{Sho-primitive} (see also \cite[Lemma 3.3]{HvdHL-mul}).
Combining these primitive roots via the Chinese remainder theorem, we obtain the desired principal $(q_1 \cdots q_e)$-th root of unity in $\FF_p[Y]/\phi_\alpha$ in another $(\alpha \lg p)^{O(1)}$ bit operations.
\end{proof}

\begin{rem}
The $p^{1+o(1)}$ term in Proposition \ref{prop:decomposition} arises from the best known deterministic complexity bounds for factoring polynomials and finding primitive roots.
If we permit randomised algorithms, then $p^{1+o(1)}$ may be replaced by $(\lg p)^{O(1)}$.
This has no effect on the main results of this paper.
\end{rem}

\begin{exam}
\label{exam:decomposition}
Continuing with Example \ref{exam:admissible}, let us take $p = 3$.
In the notation of the proof of Proposition \ref{prop:decomposition}, we have $\Lset = \{2, 3, 5, \ldots, 113\}$.
For each $\ell \in \Lset$, let us write $q^{(\ell)}$ for the smallest $q_i$ for which $\ord_{q_i} 3$ is divisible by $\ell$.
Then we have
\begin{align*}
 q^{(2)}   & = q_1,   & q^{(3)}   & = q_1,    & q^{(5)}   & = q_5,   & q^{(7)}   & = q_9,    & q^{(11)}  & = q_1, \\
 q^{(13)}  & = q_5,   & q^{(17)}  & = q_1,    & q^{(19)}  & = q_5,   & q^{(23)}  & = q_1,    & q^{(29)}  & = q_1, \\
 q^{(31)}  & = q_3,   & q^{(37)}  & = q_1,    & q^{(41)}  & = q_3,   & q^{(43)}  & = q_4,    & q^{(47)}  & = q_2, \\
 q^{(53)}  & = q_1,   & q^{(59)}  & = q_1,    & q^{(61)}  & = q_3,   & q^{(67)}  & = q_1,    & q^{(71)}  & = q_1, \\
 q^{(73)}  & = q_2,   & q^{(79)}  & = q_6,    & q^{(83)}  & = q_2,   & q^{(89)}  & = q_1,    & q^{(97)}  & = q_5, \\
 q^{(101)} & = q_2,   & q^{(103)} & = q_3,    & q^{(107)} & = q_4,   & q^{(109)} & = q_2,    & q^{(113)} & = q_4.
\end{align*}
Therefore $\sigma_i = 1$ for $i = 1, 2, 3, 4, 5, 6, 9$, and we have
\begin{align*}
          \alpha & = q_0 q_1 q_2 q_3 q_4 q_5 q_6 q_9 \\
                 & \approx 1.8385309928916569681 \times 10^{136}, \\
 \varphi(\alpha) & \approx 1.8385309928916566171 \times 10^{136}, \\
               r & = \ord_\alpha 3 = 2 \cdot 3 \cdot 5 \cdots 109 \cdot 113 \cdot 883 \cdot 9041 \cdot 327251 \cdot 39551747.
\end{align*}
The ring $\FF_3[Y]/\phi_\alpha$ is isomorphic to a direct sum of $\varphi(\alpha) / r$ copies of $\FF_{3^r}$.
The extraneous factors in $r$ (namely $883$, $9041$, $327251$ and $39551747$) arise from the auxiliary prime $q_0$.
Let
 \[ m := N / \alpha = q_7 q_8 q_{10} q_{11} \cdots q_{6035} \approx 5.439125 \times 10^{99863}; \]
then since $m \divides q_1 \cdots q_e \divides 3^r - 1$, each copy of $\FF_{3^r}$ contains a primitive $m$-th root of unity, so $\FF_3[Y]/\phi_\alpha$ contains a principal $m$-th root of unity.
Thus it is possible to multiply in the ring $\FF_3[Y,Z]/(\phi_\alpha(Y), Z^m - 1)$ by using DFTs over $\FF_3[Y]/\phi_\alpha$.
\end{exam}

\begin{rem}
In Example \ref{exam:decomposition}, every $\ell \in \Lset$ divides $\ord_{q_i} p$ for some~$i$.
It seems likely that this always occurs (at least for large $n$), but we do not know how to prove this.
If it fails for some $\ell$, then $r$ may turn out not to be divisible by $\ell$, but the proof of Proposition \ref{prop:decomposition} shows that we still have $q_h \divides p^r - 1$ for every $h = 1, \ldots, e$.
\end{rem}


\section{Faster polynomial multiplication}
\label{sec:poly}

The goal of this section is to prove Theorem \ref{thm:poly}.
We assume that $K_\ZZ \geq 1$ is a constant for which we have available an integer multiplication algorithm achieving $\MM(n) = O(n \lg n \, K_\ZZ^{\log^* n})$.

We will describe a recursive routine \textsc{PolynomialMultiply}, that takes as input integers $r, t \geq 1$, a prime $p$, and polynomials $U_1, \ldots, U_t, V \in \FF_p[X]/(X^r - 1)$, and computes the products $U_1 V, \ldots, U_t V$.
Its running time is denoted by $\Cpoly(t, r, p)$.
Note that the input polynomials $U_1, \ldots, U_t, V$ are expected to be supplied consecutively on the input tape (first $U_1$, then $U_2$, and so on), and the outputs $U_1 V, \ldots, U_t V$ should also be written consecutively to the output tape.

The role of the parameter $t$ is to allow us to amortise the cost of transforming the fixed operand $V$ across $t$ products.
This optimisation (borrowed from \cite{HvdHL-mul} and \cite{HvdHL-ffmul}) saves a constant factor in time at each recursion level of the main algorithm.
Altogether the algorithm will perform $2t+1$ transforms: $t+1$ forward transforms for $U_1, \ldots, U_t$ and~$V$, followed by $t$ inverse transforms to recover the products $U_1 V, \ldots, U_t V$.

To simplify the analysis, it is convenient to introduce the normalisation
 \[ \Cpolystar(r, p) := \sup_{t \geq 1} \frac{\Cpoly(t,r,p)}{(2t+1) r \lg p \lg(r \lg p)}. \]
We certainly have $\MM_p(n) < \Cpoly(1, 2n, p) + O(n \lg p)$, so to prove Theorem \ref{thm:poly} it is enough to show that
\begin{equation}
\label{eq:Cpolystar-goal}
 \Cpolystar(r, p) = O(4^{\max(0, \log^* r - \log^* p)} K_\ZZ^{\log^* p}).
\end{equation}

The algorithms presented in this section perform many auxiliary multiplications and divisions involving `small' integers and polynomials.
We assume that all auxiliary divisions are reduced to multiplication via Newton's method \cite[Ch.~9]{vzGG-compalg3}, so that the cost of a division (by a monic divisor) is at most a constant multiple of the cost of a multiplication of the same bit size.
We also assume that, unless otherwise specified, all auxiliary multiplications are handled using the integer and polynomial variants of the Sch\"onhage--Strassen algorithm, whose complexities are given by \eqref{eq:ss-int} and \eqref{eq:ss-poly}.

We first discuss a subroutine \textsc{Transform} that handles DFTs over rings of the form $\Ralg_{p,\alpha} := \FF_p[Y]/\phi_\alpha(Y)$, where~$p$ is a prime and $\alpha \geq 1$.
It takes as input $p$ and~$\alpha$, positive integers $t$ and $n$ such that $n$ is odd and relatively prime to $\alpha$, a principal $n$-th root of unity $\omega \in \Ralg_{p,\alpha}$, and $t$ input sequences $(a_{s,0}, \ldots, a_{s,n-1}) \in \Ralg_{p,\alpha}^n$ for $s = 1, \ldots, t$.
Its output is the sequence of transforms $(\hat a_{s,0}, \ldots, \hat a_{s,n-1}) \in \Ralg_{p,\alpha}^n$ with respect to $\omega$, for $s = 1, \ldots, t$.
Just like \textsc{PolynomialMultiply}, the input and output sequences are stored consecutively on the tape.

Let $\Tcyc(t, n, \alpha, p)$ denote the running time of \textsc{Transform}.
The following result shows how to reduce the DFT problem to an instance of \textsc{PolynomialMultiply}.
\begin{prop}
\label{prop:bluestein}
We have
 \[ \Tcyc(t, n, \alpha, p) < \Cpoly(t, n \alpha, p) + O(t n \alpha \lg \alpha \lg \lg \alpha \lg p \lg \lg p \lg \lg \lg p). \]
\end{prop}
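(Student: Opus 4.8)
The plan is to use Bluestein's method, as recalled at the end of Section~\ref{sec:dft}, which applies over any commutative ring equipped with a principal $n$-th root of unity, in particular over $\Ralg_{p,\alpha}$ (which is in general not a field). Setting $\xi := \omega^{(n+1)/2}$, well-defined since $n$ is odd, the $s$-th transform is recovered by a pointwise rescaling from the cyclic convolution of $f_s(Z) := \sum_{i=0}^{n-1}\xi^{i^2}a_{s,i}Z^i$ with the fixed polynomial $g(Z) := \sum_{i=0}^{n-1}\xi^{-i^2}Z^i$ in $\Ralg_{p,\alpha}[Z]/(Z^n-1)$. To carry out these $t$ convolutions with a single invocation of \textsc{PolynomialMultiply}, observe that $\Ralg_{p,\alpha}[Z]/(Z^n-1) = \FF_p[Y,Z]/(\phi_\alpha(Y),Z^n-1)$, that $\phi_\alpha(Y) \divides Y^\alpha-1$, and that $\gcd(\alpha,n)=1$ by hypothesis. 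Hence $\FF_p[Y,Z]/(\phi_\alpha(Y),Z^n-1)$ is a quotient ring of $\FF_p[Y,Z]/(Y^\alpha-1,Z^n-1)$, and by Lemma~\ref{lem:permute} this last ring is isomorphic to $\FF_p[X]/(X^{n\alpha}-1)$ via a map computable in either direction in $O(n\alpha\lg\alpha\lg p)$ bit operations (here $b = \lg p$ and $\lg\min(\alpha,n)\le\lg\alpha$). We therefore lift the convolution to $\FF_p[X]/(X^{n\alpha}-1)$, perform it by calling \textsc{PolynomialMultiply} with parameters $(t,n\alpha,p)$ at cost $\Cpoly(t,n\alpha,p)$, and descend again.

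Concretely, I would first compute $\xi$ by binary powering and, as in Bluestein's method, form the twisted polynomials $f_1,\ldots,f_t$ and the fixed polynomial $g$, using $O(tn)$ auxiliary multiplications in $\Ralg_{p,\alpha}$ in total (using $\xi^{-1}=\xi^{n-1}$ to avoid inverting in $\Ralg_{p,\alpha}$). Next, lift each coefficient of $f_s$ and of $g$ from its representative of degree $<\varphi(\alpha)$ to one of degree $<\alpha$ (zero-padding, free asymptotically), viewing $f_s$ and $g$ in $\FF_p[Y,Z]/(Y^\alpha-1,Z^n-1)$; apply the isomorphism of Lemma~\ref{lem:permute} to view them in $\FF_p[X]/(X^{n\alpha}-1)$, laid out as the consecutive operands $U_1,\ldots,U_t,V$ expected by \textsc{PolynomialMultiply}; and call that routine. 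Transport each of the $t$ outputs back through Lemma~\ref{lem:permute}, and reduce each of its $n$ coefficients in $Y$ modulo $\phi_\alpha$ (by division, reduced to multiplication via Newton's method as stipulated). Since $\FF_p[Y,Z]/(\phi_\alpha,Z^n-1)$ is a quotient ring of $\FF_p[Y,Z]/(Y^\alpha-1,Z^n-1)$, this returns exactly the products $f_sg$ in $\Ralg_{p,\alpha}[Z]/(Z^n-1)$, and Bluestein's closing pointwise rescalings ($O(tn)$ further multiplications in $\Ralg_{p,\alpha}$) then yield the transforms $(\hat a_{s,0},\ldots,\hat a_{s,n-1})$. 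When $\alpha=1$ one has $\Ralg_{p,1}=\FF_p$ and Lemma~\ref{lem:permute} is not invoked; \textsc{PolynomialMultiply} is called directly with $r=n$. The case $n=1$ is trivial.

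For the running time, each auxiliary multiplication in $\Ralg_{p,\alpha} = \FF_p[Y]/\phi_\alpha$ amounts to a product of two polynomials over $\FF_p$ of degree less than $\alpha$ followed by a reduction modulo $\phi_\alpha$; by \eqref{eq:ss-poly} and \eqref{eq:ss-int} this costs $O(\alpha\lg\alpha\lg\lg\alpha\lg p + \alpha\lg\alpha\MM(\lg p)) = O(\alpha\lg\alpha\lg\lg\alpha\lg p\lg\lg p\lg\lg\lg p)$ bit operations, and the same bound governs each of the $O(tn)$ coefficient reductions modulo $\phi_\alpha$ in the output stage. Summing the $O(tn)$ multiplications in $\Ralg_{p,\alpha}$, the $O(tn)$ output reductions, and the $O(t)$ applications of Lemma~\ref{lem:permute} at $O(n\alpha\lg\alpha\lg p)$ each, all of this overhead is absorbed into $O(tn\alpha\lg\alpha\lg\lg\alpha\lg p\lg\lg p\lg\lg\lg p)$, which together with the $\Cpoly(t,n\alpha,p)$ term gives the claimed inequality. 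I do not expect a genuine obstacle here; the only points needing care are the verification that multiplying in $\FF_p[X]/(X^{n\alpha}-1)$ and then reducing modulo $\phi_\alpha$ really computes the convolution in $\Ralg_{p,\alpha}[Z]/(Z^n-1)$ — which rests on $\phi_\alpha \divides Y^\alpha-1$ together with the explicit description of the Agarwal--Cooley isomorphism in Lemma~\ref{lem:permute} — and the somewhat fussy bookkeeping needed to keep the Turing-machine data layout consistent and to confirm that the $O(tn)$ auxiliary multiplications of Bluestein's method stay within the stated error term.
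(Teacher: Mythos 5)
Your proposal follows the paper's proof essentially verbatim: Bluestein's reduction to a cyclic convolution over $\Ralg_{p,\alpha}$, lifting from $\FF_p[Y,Z]/(\phi_\alpha(Y),Z^n-1)$ to $\FF_p[Y,Z]/(Y^\alpha-1,Z^n-1)$ by zero-padding, the Agarwal--Cooley isomorphism of Lemma~\ref{lem:permute} to pass to $\FF_p[X]/(X^{n\alpha}-1)$, a single call to \textsc{PolynomialMultiply}, and then inverting the isomorphism and reducing modulo $\phi_\alpha$, with the same cost accounting for the $O(tn)$ auxiliary multiplications and divisions in $\Ralg_{p,\alpha}$. The only additions are minor housekeeping (the $\alpha=1$ and $n=1$ edge cases, using $\xi^{-1}=\xi^{n-1}$), which do not change the argument.
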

\begin{proof}
Let $\Ralg := \Ralg_{p,\alpha}$.
We use Bluestein's method to reduce each DFT to the problem of computing a certain product $f_s(Z) g(Z)$ in $\Ralg[Z]/(Z^n - 1)$, plus $O(n)$ multiplications in $\Ralg$, where $f_s(Z)$ and $g(Z)$ are defined as in Section \ref{sec:dft}.
By \eqref{eq:ss-int} and \eqref{eq:ss-poly}, each multiplication in $\Ralg$ costs
\begin{equation}
\label{eq:Ralg-mult}
 O((\alpha \lg \alpha \lg \lg \alpha) (\lg p \lg \lg p \lg \lg \lg p))
\end{equation}
bit operations. 
To handle the products $f_s(Z) g(Z)$, we first lift the polynomials from $\FF_p[Y,Z]/(\phi_\alpha(Y), Z^n - 1)$ to $\FF_p[Y,Z]/(Y^\alpha - 1, Z^n - 1)$ (for example, by zero-padding in $Y$ up to degree $\alpha$).
We then compute their images under the isomorphism
 \[ \FF_p[Y,Z]/(Y^\alpha - 1, Z^n - 1) \cong \FF_p[X]/(X^{n \alpha} - 1) \]
provided by Lemma \ref{lem:permute}; this costs altogether $O(t n \alpha \lg \alpha \lg p)$ bit operations.
We call \textsc{PolynomialMultiply} to compute the products in $\FF_p[X]/(X^{n \alpha} - 1)$, at a cost of $\Cpoly(t, n\alpha, p)$ bit operations.
We evaluate the inverse of the above isomorphism to bring the products back to $\FF_p[Y,Z]/(Y^\alpha - 1, Z^n - 1)$.
Finally, we reduce modulo $\phi_\alpha(Y)$ to obtain the desired products in $\Ralg[Z]/(Z^n - 1)$; the cost of each of these divisions is given by \eqref{eq:Ralg-mult}.
\end{proof}

We now return to multiplication in $\FF_p[X]/(X^r - 1)$.
Our implementation of \textsc{PolynomialMultiply} chooses one of two algorithms, depending on the size of $r$ relative to $p$.
For $r \leq p^2$ it uses the straightforward Kronecker substitution method described in Section \ref{sec:intro}.
By \eqref{eq:ks-bound} this yields the bound
 \[ \Cpoly(t,r,p) = O(t \MM(r \lg p)) = O(t r \lg p \lg(r \lg p) K_\ZZ^{\log^*(r \lg p)}) \]
and hence
\begin{equation}
\label{eq:Cpolystar-small-r}
 \Cpolystar(r,p) = O(K_\ZZ^{\log^*(p^2 \lg p)}) = O(K_\ZZ^{\log^* p}), \qquad r \leq p^2.
\end{equation}
Therefore \eqref{eq:Cpolystar-goal} holds in this case.

For $r > p^2$, most of the work will be delegated to a subroutine \textsc{AdmissibleMultiply}, which is defined as follows.
It takes as input an integer $t \geq 1$, a prime~$p$, a $p$-admissible length $N$, and polynomials $U_1, \ldots, U_t, V \in \FF_p[X]/(X^N - 1)$, and computes the products $U_1 V, \ldots, U_t V$.
In other words, it has the same interface as as \textsc{PolynomialMultiply}, but it only works for $p$-admissible lengths.
We denote its running time by $\Cad(t, N, p)$.
As above we also define the normalisation
 \[ \Cadstar(N, p) := \sup_{t \geq 1} \frac{\Cad(t,N,p)}{(2t+1) N \lg p \lg(N \lg p)}. \]
The reduction from \textsc{PolynomialMultiply} to \textsc{AdmissibleMultiply} in the case $r > p^2$ is given by the following proposition.

\begin{prop}
\label{prop:Cpoly-bound}
There is an absolute constant $z_3 > 0$ with the following property.
For any prime $p$ and any integer $r > \max(z_3, p^2)$, there exists a $p$-admissible length~$N$ in the interval
\begin{equation}
\label{eq:N-bound-2}
  2r < N < \left(1 + \frac{1}{\lg r}\right) 2r
\end{equation}
such that
\begin{equation}
\label{eq:Cpolystar-bound}
 \Cpolystar(r, p) < \left(2 + \frac{O(1)}{\lg r}\right) \Cadstar(N,p) + O(1).
\end{equation}
\end{prop}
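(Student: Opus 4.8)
The plan is to reduce multiplication in $\FF_p[X]/(X^r-1)$ to multiplication in $\FF_p[X]/(X^N-1)$ for a suitable $p$-admissible $N$, losing only a factor of $2$ (plus lower-order terms) in the process. First I would invoke Proposition~\ref{prop:admissible} with the target $n := 2r$: since $2r > \max(z_1, p^2)$ for $r$ large enough, this produces, in $2^{O((\lg\lg r)^2)}$ bit operations, a $p$-admissible length $N$ satisfying $2r < N < (1 + 1/\lg(2r))\,2r$, which gives \eqref{eq:N-bound-2} after absorbing the difference between $\lg(2r)$ and $\lg r$ into the $O(1)$. The cost of locating $N$ and its auxiliary data is negligible compared to the main multiplication cost.

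Next, given inputs $U_1, \ldots, U_t, V \in \FF_p[X]/(X^r-1)$, I would zero-pad each polynomial: lift to $\FF_p[X]$, view as a polynomial of degree less than $r$, and reduce modulo $X^N - 1$. Since $N > 2r$, the product $U_s V$, computed as a genuine polynomial product of degree less than $2r - 1 < N$, is not wrapped around by the reduction modulo $X^N-1$; hence the low $2r-1$ coefficients of the product in $\FF_p[X]/(X^N-1)$ give the true product $U_s V$ in $\FF_p[X]$, from which we recover $U_s V \bmod (X^r - 1)$ by folding. (Actually a cleaner route: since $N > 2r$ we can recover the product in $\FF_p[X]/(X^r-1)$ directly, because the exact product has degree $< 2r-1 \le N$, so no information is lost.) This conversion costs $O(t N \lg p)$ bit operations for the padding and unpadding, which is dominated by the main term. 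We then call \textsc{AdmissibleMultiply} on the padded data, at cost $\Cad(t, N, p)$.

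Assembling the bound: we have $\Cpoly(t, r, p) \le \Cad(t, N, p) + O(t N \lg p) + 2^{O((\lg\lg r)^2)}$. Dividing by $(2t+1)\, r \lg p \lg(r\lg p)$ and taking the supremum over $t \ge 1$, the first term contributes $\frac{(2t+1) N \lg p \lg(N\lg p)}{(2t+1)\, r\lg p\lg(r\lg p)}\Cadstar(N,p) = \frac{N}{r}\cdot\frac{\lg(N\lg p)}{\lg(r\lg p)}\Cadstar(N,p)$. Here $N/r < 2(1 + 1/\lg r)= 2 + O(1)/\lg r$ by \eqref{eq:N-bound-2}, and $\lg(N\lg p)/\lg(r\lg p) = 1 + O(1)/\lg(r\lg p) = 1 + O(1)/\lg r$ since $N < 4r$; multiplying these two factors gives $2 + O(1)/\lg r$. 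The term $O(tN\lg p)$ divided by the normalising quantity is $O(N/(r\lg(r\lg p))) = O(1)$, and the $2^{O((\lg\lg r)^2)}$ term divided by $(2t+1) r \lg p \lg(r\lg p)$ is $o(1)$. Collecting everything yields \eqref{eq:Cpolystar-bound}.

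The main obstacle I anticipate is purely bookkeeping: one must be careful that the crossover hypothesis $r > p^2$ (hence $N > 2r > 2p^2 > p^2$) is consistent with $N$ being $p$-admissible, that the padding genuinely avoids cyclic overlap (which is why the target is $2r$ rather than $r$ — mirroring the Crandall--Fagin "double length" trick and explaining the factor $2$ in the final bound), and that all the logarithmic ratios telescope into $2 + O(1)/\lg r$ rather than leaking a larger constant. None of these steps is deep, but the factor-of-two accounting is the whole point of the proposition, so it must be tracked precisely.
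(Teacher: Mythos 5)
Your proposal is correct and follows essentially the same route as the paper: invoke Proposition~\ref{prop:admissible} with target $n := 2r$ to obtain a $p$-admissible $N$ satisfying \eqref{eq:N-bound-2}, zero-pad to $\FF_p[X]/(X^N-1)$, call \textsc{AdmissibleMultiply}, and track the ratio $\frac{N \lg(N\lg p)}{r \lg(r\lg p)}$ to extract the constant $2 + O(1)/\lg r$. One trivial remark: since $\lg(2r) \geq \lg r$, the bound $N < (1 + 1/\lg(2r))2r$ from Proposition~\ref{prop:admissible} already implies \eqref{eq:N-bound-2} with no ``absorbing'' needed.
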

\begin{proof}
Given as input $U_1, \ldots, U_t, V \in \FF_p[X]/(X^r - 1)$, our goal is to compute the products $U_1 V, \ldots, U_t V$.
For sufficiently large $r$ we may apply Proposition \ref{prop:admissible} with $n := 2r$ to find a $p$-admissible length $N$ such that \eqref{eq:N-bound-2} holds.
Since $N > 2r$, we may simply zero-pad to reduce each problem to multiplication in $\FF_p[X]/(X^N - 1)$.
This yields
 \[ \Cpoly(t, r, p) < \Cad(t, N, p) + O(tr \lg p) + 2^{O((\lg \lg r)^2)}, \]
where the $t r \lg p$ term arises from the reduction modulo $X^r - 1$, and the last term from Proposition \ref{prop:admissible}.
Dividing by $(2t+1) r \lg p \lg(r \lg p)$ and taking suprema over $t \geq 1$, we find that
 \[ \Cpolystar(r, p) < \frac{N \lg (N \lg p)}{r \lg (r \lg p)} \Cadstar(N, p) + O(1). \]
Finally, since $\lg(N \lg p) \leq \lg(r \lg p) + 2$ we obtain
 \[ \frac{N \lg (N \lg p)}{r \lg (r \lg p)} < 2 \left( 1 + \frac{1}{\lg r} \right) \left(1 + \frac{2}{\lg(r \lg p)}\right) < 2 + \frac{O(1)}{\lg r}. \qedhere \]
\end{proof}

The motivation for defining admissible lengths is the following result, which shows how to implement \textsc{AdmissibleMultiply} in terms of a large collection of exponentially smaller instances of \textsc{PolynomialMultiply}.

\begin{prop}
\label{prop:Cad-bound}
There is an absolute constant $z_4 > 0$ with the following property.
Let $p$ be a prime and let $N > z_4$ be a $p$-admissible length.
Then there exist integers $r_1, \ldots, r_d$ in the interval
\begin{equation}
\label{eq:ri-bound}
 2^{(\lg \lg N)^6} < r_i < 2^{(\lg \lg N)^7},
\end{equation}
and weights $\gamma_1, \ldots, \gamma_d > 0$ with $\sum_i \gamma_i = 1$, such that
\begin{equation}
\label{eq:Cad-bound}
 \Cadstar(N, p) < \left(2 + \frac{O(1)}{\lg \lg N}\right) \sum_{i=1}^d \gamma_i \Cpolystar(r_i, p) + O(1).
\end{equation}
\end{prop}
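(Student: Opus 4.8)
The plan is to realise \textsc{AdmissibleMultiply} by combining the Crandall--Fagin-style chunking, the cyclotomic isomorphism of Section~\ref{sec:cyclotomic}, and Bluestein's method, so that the length-$N$ problem is reduced to a collection of exponentially smaller \textsc{PolynomialMultiply} instances. First I would invoke Proposition~\ref{prop:decomposition} to obtain a $p$-admissible divisor $\alpha$ of $N$, together with $\phi_\alpha \in \FF_p[Y]$ and a principal $(q_1\cdots q_e)$-th root of unity in $\Ralg_{p,\alpha} := \FF_p[Y]/\phi_\alpha$; this costs $2^{O((\lg\lg N)^4)}p^{1+o(1)}$ bit operations, negligible compared with the $O(1)$ term in \eqref{eq:Cad-bound}. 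Write $N = \alpha m$ with $m := q_1\cdots q_e / \gcd(\ldots)$ — more precisely $m := N/\alpha$, which is coprime to $\alpha$ since the $q_i$ are distinct primes and $q_0 \divides \alpha$. Using Corollary~\ref{cor:permute} (applied to the coprime factorisation $N = \alpha \cdot q_{i_1}\cdots q_{i_k}$ of the remaining primes), pass from $\FF_p[X]/(X^N-1)$ into $\FF_p[Y,Z]/(Y^\alpha-1, Z^m-1)$; here the $Y$-direction collects the primes dividing $\alpha$ and $Z$ collects the rest, at a data-rearrangement cost of $O(b N \lg N)$ with $b = O(\lg p)$. We would then project modulo $\phi_\alpha(Y)$ to land in $\Ralg_{p,\alpha}[Z]/(Z^m-1)$. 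This projection loses information (the other cyclotomic factors of $Y^\alpha - 1$), but by \eqref{eq:phi-alpha-bound} the degree $\varphi(\alpha)$ differs from $\alpha$ by at most a $(1 + O(1/\lg N))$ factor, so the lost information can be recovered cheaply by a second, much smaller multiplication modulo $(Y^\alpha-1)/\phi_\alpha(Y)$ (or equivalently by computing the product in $\FF_p[Y,Z]/(Y^\alpha-1,Z^m-1)$ directly via a CRT over the cyclotomic factors of $Y^\alpha-1$, handling the $\phi_\alpha$ factor as the dominant one and the rest recursively or by Kronecker). The factor $2$ in \eqref{eq:Cad-bound} is precisely the zero-padding cost of embedding a product of degree $< m$ in a cyclic convolution of length $m$ over $\Ralg_{p,\alpha}$ — this is the one unavoidable doubling, and the cyclotomic trick is what prevents a second doubling in the $Y$-direction.

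Next I would multiply in $\Ralg_{p,\alpha}[Z]/(Z^m-1)$ by DFTs of length $m$ over $\Ralg_{p,\alpha}$. Since $m = q_{1}\cdots q_{e}$ (the remaining primes) is a product of distinct primes each of size between $(\lg N)^3$ and $2^{(\lg\lg N)^2}$, and each $q_i$ is odd (being $> (\lg N)^3 > 2$), Corollary~\ref{cor:permute} converts the one-dimensional length-$m$ DFT into a multidimensional DFT with side lengths $q_{i}$; by the standard reduction recalled in Section~\ref{sec:dft}, this amounts to $\sum_i (m/q_i)$ DFTs of length $q_i$ over $\Ralg_{p,\alpha}$, using the appropriate powers of the given principal root of unity. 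We perform $2t+1$ such multidimensional transforms (the $t+1$ forward transforms of $U_1,\ldots,U_t,V$ and $t$ inverse transforms) plus $tm$ pointwise multiplications in $\Ralg_{p,\alpha}$, each of the latter costing \eqref{eq:Ralg-mult}, which is absorbed into the $O(1)$ term after normalisation because $\alpha$ and $\lg p$ are so small relative to $N\lg p$. Each length-$q_i$ DFT over $\Ralg_{p,\alpha}$ is then reduced by Bluestein's method — exactly Proposition~\ref{prop:bluestein} — to an instance of \textsc{PolynomialMultiply} of length $q_i \alpha$, plus $O(q_i)$ auxiliary multiplications in $\Ralg_{p,\alpha}$. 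Setting $r_i := q_i \alpha$ (reindexed with multiplicity over all transforms and all $i$), we have $2^{(\lg\lg N)^6} < (\lg N)^3 < q_i \leq q_i \alpha < 2^{(\lg\lg N)^2} \cdot 2^{(\lg\lg N)^4} < 2^{(\lg\lg N)^7}$ for large $N$, giving \eqref{eq:ri-bound}; the lower bound uses that $q_i > (\lg N)^3$ is doubly-exponentially larger than its apparent size only after we note $(\lg N)^3 > 2^{(\lg\lg N)^6}$ fails — so in fact the lower bound in \eqref{eq:ri-bound} must come from $\alpha$, which by \eqref{eq:alpha-bound} satisfies $\alpha > \lg N$, hence $r_i = q_i\alpha > (\lg N)^3 \cdot \lg N = (\lg N)^4$; one then checks $(\lg N)^4 > 2^{(\lg\lg N)^6}$ holds for sufficiently large $N$ (and if not, one inflates $r_i$ harmlessly by a bounded factor, or absorbs these tiny instances into the $O(1)$ via \eqref{eq:Cpolystar-small-r}).

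To assemble the recurrence, collect the costs: the data rearrangements (Corollary~\ref{cor:permute}, the reductions mod $\phi_\alpha$, the overlap-add and auxiliary $\Ralg_{p,\alpha}$-arithmetic) contribute $O(N \lg N \cdot \operatorname{polylog})$ bit operations; after dividing by $(2t+1) N \lg p \lg(N\lg p)$ and taking the supremum over $t$, all of these normalise to $O(1)$, using the size bounds \eqref{eq:q-bound}, \eqref{eq:alpha-bound}, \eqref{eq:lambda-bound} to control the polylog factors against $\lg(N\lg p)$. The dominant contribution is the $2t+1$ transforms, each of which is a multidimensional DFT reducing to $\sum_i (m/q_i)$ length-$q_i$ DFTs, i.e.\ to $\sum_i (m/q_i) \cdot (2\cdot 1 + 1)$-style \textsc{PolynomialMultiply} calls of length $r_i$ via Proposition~\ref{prop:bluestein}. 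Writing the bound for $\Cpoly$ at length $r_i$ in terms of $\Cpolystar(r_i,p)$ and summing, the total \textsc{PolynomialMultiply} cost over all transforms is bounded by
\[
 (2t+1) \cdot \bigl(1 + O(1/\lg\lg N)\bigr) \sum_i \gamma_i \cdot (2 r_i \lg p \lg(r_i\lg p)) \Cpolystar(r_i,p),
\]
where $\gamma_i$ is the fraction of the total transform work devoted to the $i$-th small DFT (so $\sum_i\gamma_i = 1$), and the extra factor $2$ is the Bluestein/cyclic-padding overhead at length $r_i$; after dividing by $(2t+1)N\lg p\lg(N\lg p)$ and checking $\sum_i \gamma_i r_i \lg(r_i\lg p) / (N \lg(N\lg p)) \leq 1 + O(1/\lg\lg N)$ (this is where one needs $\sum_i (m/q_i) r_i = \alpha \sum_i (m/q_i) q_i = \alpha m e \approx N e$ against $m = N/\alpha$, so the combinatorial bookkeeping must be set up so the $e$ and the $\alpha$ cancel against the $N$ — the natural way is to define $\gamma_i$ proportional to $(m/q_i)\cdot r_i \lg(r_i\lg p)$ normalised, and bound the normalising constant), one arrives at \eqref{eq:Cad-bound} with the stated $2 + O(1/\lg\lg N)$ constant.

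The main obstacle is the bookkeeping in the last step: making the weights $\gamma_i$ and the size estimates fit together so that the constant in front of $\sum_i \gamma_i \Cpolystar(r_i,p)$ is exactly $2 + O(1/\lg\lg N)$ and not, say, $4$. Concretely, one factor of $2$ must come from the cyclic-convolution padding over $\Ralg_{p,\alpha}$ (the $Z$-direction), and the Bluestein step at the bottom contributes another factor $2$ at length $r_i$ — but that second factor is \emph{inside} $\Cpolystar(r_i,p)$'s normalisation already reflected in the ``$2t+1$'' of the recursive call, so it must not be double-counted; the cyclotomic isomorphism is what ensures the $Y$-direction padding costs only $1 + O(1/\lg N)$ rather than a third factor of $2$. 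Getting these three bookkeeping points exactly right — and verifying that $m$ being a product of distinct \emph{primes} (not just integers) does not cost anything in the Agarwal--Cooley/Good--Thomas conversion because the $q_i$ are automatically pairwise coprime — is the delicate part; everything else is routine size-counting against the admissibility bounds.
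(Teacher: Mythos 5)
Your plan starts correctly (invoke Proposition~\ref{prop:decomposition}, split $\FF_p[X]/(X^N-1)$ into the $\phi_\alpha$ and $\psi_\alpha$ branches, handle the $\psi_\alpha$ branch cheaply via Kronecker using \eqref{eq:phi-alpha-bound}, reduce the $\phi_\alpha$ branch to DFTs over $\Ralg_{p,\alpha}$ via Bluestein), and you have correctly identified where the factor $2$ in \eqref{eq:Cad-bound} comes from. However, there is a genuine gap in the central step: you convert the length-$m$ DFT into one-dimensional DFTs at the level of \emph{individual primes} $q_i$, so that the recursive calls have $r_i = q_i\alpha$. This is not what the paper does, and it cannot be made to work — as you half-suspect in your closing paragraph.

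The problem is quantitative. With $q_i < 2^{(\lg\lg N)^2}$ and $\alpha < 2^{(\lg\lg N)^4}$, your $r_i$ is at most $2^{O((\lg\lg N)^4)}$, which is far below the required lower bound $2^{(\lg\lg N)^6}$ in \eqref{eq:ri-bound}; your attempted rescue ($(\lg N)^4 > 2^{(\lg\lg N)^6}$) is false for large $N$, since $(\lg N)^4 = 2^{4\lg\lg N}$. And this is not a cosmetic issue that can be ``absorbed into $O(1)$'': the lower bound on $r_i$ is exactly what controls the weight bookkeeping. The number of primes $q_i$ dividing $m$ is $e \approx \lg N / (\lg\lg N)^2$, and for each of them $\lg(r_i\lg p) \approx \log_2\alpha + \log_2 q_i + \lg\lg p = \Theta((\lg\lg N)^4)$, so
\[
 \sum_i \frac{m}{q_i}\, r_i \lg(r_i\lg p) \;=\; N\sum_{i=1}^e \lg(r_i\lg p) \;\approx\; N \cdot \lg N \cdot (\lg\lg N)^2,
\]
whereas you need this to be $\bigl(1 + O(1/\lg\lg N)\bigr) N \lg(N\lg p)$. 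The additive term $\log_2\alpha = \Theta((\lg\lg N)^4)$ per prime swamps $\log_2 q_i = O((\lg\lg N)^2)$, so your approach loses a factor of $(\lg\lg N)^2$ in the recurrence constant, which is fatal.

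The missing idea is to \emph{group} the primes $\ell_1,\ldots,\ell_u$ dividing $m$ into blocks of size $w := \lfloor \tfrac25(\lg\lg N)^5\rfloor$, setting $m_i := \ell_{(i-1)w+1}\cdots\ell_{iw}$ (with the last block absorbing the remainder), so that $2^{(\lg\lg N)^6} < m_i < 2^{\frac45(\lg\lg N)^7}$. Then $r_i := m_i\alpha$ satisfies \eqref{eq:ri-bound}, and crucially $\log_2\alpha = O((\lg\lg N)^4)$ becomes negligible relative to $\log_2 m_i \geq (\lg\lg N)^6$, so that $\lg(r_i\lg p) = \log_2 m_i\bigl(1 + O(1/\lg\lg N)\bigr)$. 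Putting $\gamma_i := \log_2 m_i / \log_2 m$ then makes the normalisation close up cleanly with the advertised $2 + O(1/\lg\lg N)$ constant. The multidimensional conversion of Corollary~\ref{cor:permute} still applies because the $m_i$ are pairwise coprime, and Bluestein still applies because each $m_i$ is odd (being a product of odd primes). Without this grouping, the argument does not go through.
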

\begin{proof}

We are given as input a prime $p$, a $p$-admissible length $N = q_0 \cdots q_e$ and polynomials $U_1, \ldots, U_t, V \in \FF_p[X]/(X^N - 1)$.
Our goal is to compute the products $U_1 V, \ldots, U_t V$.
We will describe a series of reductions that converts this problem to a collection of exponentially smaller multiplication problems, plus overhead of $O(t N \lg N \lg p)$ bit operations incurred during the reductions.

\step{reduce to products over cyclotomic coefficient ring.}
Invoking Proposition \ref{prop:decomposition}, we compute a $p$-admissible divisor $\alpha$ of $N$, the cyclotomic polynomial $\phi_\alpha \in \FF_p[Y]$, and a principal $(q_1 \cdots q_e)$-th root of unity $\omega \in \FF_p[Y]/\phi_\alpha$.
As $p^2 < N$, this requires at most $2^{O((\lg \lg N)^4)} p^{1+o(1)} < N^{1/2+o(1)}$ bit operations.

Set $\psi_\alpha := (Y^\alpha - 1)/\phi_\alpha \in \FF_p[Y]$.
Since $Y^\alpha - 1$ has no repeated factors in $\FF_p[Y]$, we have $(\phi_\alpha, \psi_\alpha) = 1$.
Using the Euclidean algorithm, compute polynomials $\chi_1, \chi_2 \in \FF_p[Y]$ of degree at most $\alpha$ such that $\chi_1 \phi_\alpha + \chi_2 \psi_\alpha = 1$; this costs at most $(\alpha \lg p)^{O(1)} < N^{o(1)}$ bit operations.

Let $m := N/\alpha$.
As $m$ and $\alpha$ are coprime, Lemma \ref{lem:permute} provides an isomorphism
 \[ \FF_p[X]/(X^N - 1) \cong \FF_p[Y,Z]/(Y^\alpha - 1, Z^m - 1) \]
that may be evaluated in either direction in $O(m \alpha \lg \alpha \lg p)$ bit operations.
By~\eqref{eq:alpha-bound} this simplifies to $O(N (\lg \lg N)^4 \lg p) = O(N \lg N \lg p)$ bit operations.
Next, since $(\phi_\alpha, \psi_\alpha) = 1$, there is an isomorphism
 \[ \FF_p[Y]/(Y^\alpha - 1) \cong (\FF_p[Y]/\phi_\alpha) \oplus (\FF_p[Y]/\psi_\alpha). \]
Using the precomputed polynomials $\chi_1$ and $\chi_2$, we may evaluate the above isomorphism in either direction in
\begin{align*}
 O((\alpha \lg \alpha \lg \lg \alpha) (\lg p \lg \lg p \lg \lg \lg p))
   & = O(\alpha (\lg \lg N)^5 (\lg \lg \lg N)^2 \lg p) \\
   & = O(\alpha \lg N \lg p)
\end{align*}
bit operations (here we have again used \eqref{eq:alpha-bound} and the fact that $p^2 < N$).
This isomorphism induces another isomorphism
 \[ \FF_p[Y]/(Y^\alpha - 1, Z^m - 1) \cong (\FF_p[Y]/\phi_\alpha)[Z]/(Z^m - 1) \oplus (\FF_p[Y]/\psi_\alpha)[Z]/(Z^m - 1) \]
by acting on the coefficient of each $Z^i$ separately; it may be evaluated in either direction in $O(m\alpha \lg N \lg p) = O(N \lg N \lg p)$ bit operations.
Chaining these isomorphisms together, we obtain an isomorphism
 \[ \FF_p[X]/(X^N - 1) \cong (\FF_p[Y]/\phi_\alpha)[Z]/(Z^m - 1) \oplus (\FF_p[Y]/\psi_\alpha)[Z]/(Z^m - 1) \]
that may be evaluated in either direction in $O(N \lg N \lg p)$ bit operations.

We now use the following algorithm.
First, at a cost of $O(t N \lg N \lg p)$ bit operations, apply the above isomorphism to $U_1, \ldots, U_t$ and $V$ to obtain polynomials
\begin{align*}
 U'_1, \ldots, U'_t, V' & \in (\FF_p[Y]/\phi_\alpha)[Z]/(Z^m - 1), \\
 \tilde U'_1, \ldots, \tilde U'_t, \tilde V' & \in (\FF_p[Y]/\psi_\alpha)[Z]/(Z^m - 1).
\end{align*}
Second, compute the products $\tilde U'_1 \tilde V', \ldots, \tilde U'_t \tilde V'$: since $\deg \psi_\alpha < \alpha/\lg N$ by \eqref{eq:phi-alpha-bound}, each of these products may be converted, via Kronecker substitution, to a product of univariate polynomials in $\FF_p[X]$ of degree $O(m \alpha / \lg N) = O(N / \lg N)$ (i.e., map $Y$ to $X$ and $Z$ to $X^{2 \deg \psi_\alpha}$).
The cost of these multiplications is
 \[ O(t ((N / \lg N) \lg N \lg \lg N) (\lg p \lg \lg p \lg \lg \lg p)) = O(t N \lg N \lg p) \]
bit operations.
Third, compute the products $U'_1 V', \ldots, U'_t V'$, using the method explained in Step 2 below.
Finally, at a cost of $O(t N \lg N \lg p)$ bit operations, apply the inverse isomorphism to the pairs $(U'_s V', \tilde U'_s \tilde V')$ to obtain the desired products $U_1 V, \ldots, U_t V$.

\step{convert to multidimensional convolutions.}
Let $\Ralg := \FF_p[Y]/\phi_\alpha$.
In this step our goal is to compute the products $U'_1 V', \ldots, U'_t V'$, where $U'_1, \ldots, U'_t, V' \in \Ralg[Z]/(Z^m - 1)$.
We do this by converting each problem to a multidimensional convolution of size $m_d \times \cdots \times m_1$, for a suitable decomposition $m = m_1 \cdots m_d$.
For the subsequent complexity analysis, it is important that the $m_i$ are chosen to be somewhat larger than the coefficient size.
To achieve this we proceed as follows.

Let $m = \ell_1 \cdots \ell_u$ be the prime factorisation of $m$.
The $\ell_j$ form a subset of $\{q_1, \ldots, q_e\}$, so by \eqref{eq:q-bound} we have
\begin{equation}
\label{eq:ellj-bound}
 (\lg N)^3 < \ell_j < 2^{(\lg \lg N)^2}
\end{equation}
for each $j$.
Let $w := \lfloor \frac25 (\lg \lg N)^5 \rfloor$.
We certainly have $u > w$ for large enough $N$, as \eqref{eq:ellj-bound} and \eqref{eq:alpha-bound} imply that
 \[ u > \frac{\log_2 m}{(\lg \lg N)^2}
      = \frac{\log_2 N - \log_2 \alpha}{(\lg \lg N)^2}
      > \frac{\log_2 N - (\lg \lg N)^4}{(\lg \lg N)^2} \gg (\lg \lg N)^5. \]
Therefore we may take
\begin{align*}
 m_1     & := \ell_1 \cdots \ell_w, \\
 m_2     & := \ell_{w+1} \cdots \ell_{2w}, \\
         & \cdots \\
 m_{d-1} & := \ell_{(d-2)w+1} \cdots \ell_{(d-1)w}, \\
 m_d     & := \ell_{(d-1)w+1} \cdots \ell_{dw} \ell_{dw+1} \cdots \ell_u,
\end{align*}
where $d := \lfloor u/w \rfloor \geq 1$.
Each $m_i$ is a product of exactly $w$ primes, except possibly~$m_d$, which is a product of at least~$w$ and at most $2w-1$ primes.
For large $N$ we have
\begin{equation}
\label{eq:mi-upper-bound}
 m_i < (2^{(\lg \lg N)^2})^{2w} \leq 2^{\frac45(\lg \lg N)^7}
\end{equation}
and
\begin{equation}
\label{eq:mi-lower-bound}
 m_i > ((\lg N)^3)^w \geq (2^{\lg \lg N - 1})^{3w} > 2^{(\lg \lg N)^6}
\end{equation}
for all $i$, and hence
\begin{equation}
\label{eq:d-bound}
 d \leq \frac{\log_2 m}{(\lg \lg N)^6} \leq \frac{\lg N}{(\lg \lg N)^6}.
\end{equation}
Computing the decomposition $m = m_1 \cdots m_d$ requires no more than $(\lg N)^{O(1)}$ bit operations.

As the $m_i$ are pairwise relatively prime, Corollary \ref{cor:permute} furnishes an isomorphism
 \[ \Ralg[Z]/(Z^m - 1) \cong \Ralg[Z_1, \ldots, Z_d]/(Z_1^{m_1} - 1, \ldots, Z_d^{m_d} - 1) \]
that may be computed in either direction in $O((m \lg m) (\alpha \lg p)) = O(N \lg N \lg p)$ bit operations.
Therefore we may use the following algorithm.
First, at a cost of $O(t N \lg N \lg p)$ bit operations, compute the images
 \[ U''_1, \ldots, U''_t, V'' \in \Ralg[Z_1, \ldots, Z_d]/(Z_1^{m_1} - 1, \ldots, Z_d^{m_d} - 1) \]
of $U'_1, \ldots, U'_t, V'$ under the above isomorphism.
Next, as explained in Step 3 below, compute the products $U''_1 V'', \ldots, U''_t V''$.
Finally, apply the inverse isomorphism to recover the products $U'_1 V', \ldots, U'_t V'$; again this costs $O(t N \lg N \lg p)$ bit operations.

\step{reduce to DFTs over $\Ralg$.}
In this step our goal is to compute the products $U''_1 V'', \ldots, U''_t V''$, where $U''_1, \ldots, U''_t$ and $V''$ are as above.
Let $\omega_i := \omega^{q_1 \cdots q_e/m_i}$ for $i = 1, \ldots, d$, where $\omega$ is the principal $(q_1 \cdots q_e)$-th root of unity in $\Ralg$ computed in Step~1.
According to the discussion in Section \ref{sec:dft}, the desired multidimensional convolutions may be computed by performing $t+1$ multidimensional $m$-point DFTs with respect to the evaluation points $(\omega_1^{j_1}, \ldots, \omega_d^{j_d})$, followed by $tm$ pointwise multiplications in $\Ralg$, and then $t$ multidimensional $m$-point inverse DFTs and $tm$ divisions by $m$.
The total cost of the pointwise multiplications and divisions is
 \[ O(t m (\alpha \lg \alpha \lg \lg \alpha) (\lg p \lg \lg p \lg \lg \lg p)) = O(t N \lg N \lg p) \]
bit operations.

Each of the $2t+1$ multidimensional DFTs may be converted to a collection of one-dimensional DFTs of lengths $m_1, \ldots, m_d$ by the method explained in Section~\ref{sec:dft}.
Note that the inputs must be rearranged so that the data to transform along each dimension may be accessed sequentially.
Let $1 \leq i \leq d$, and consider the transforms of length $m_i$.
Treating each input vector as a sequence of $m_{i+1} \cdots m_d$ arrays of size $m_i \times (m_1 \cdots m_{i-1})$, we must transpose each array into an array of size $(m_1 \cdots m_{i-1}) \times m_i$, perform $m/m_i$ DFTs of length $m_i$, and then transpose back to the original ordering.
The total cost of all these transpositions is
 \[ O(t m \alpha \lg p \textstyle \sum_i \lg m_i) = O(t N \lg p \lg m) = O(t N \lg N \lg p) \]
bit operations.

The one-dimensional DFTs over $\Ralg$ are handled by the \textsc{Transform} subroutine.
Combining the contributions from Steps 1, 2 and 3 shows that
 \[ \Cad(t, N, p) < (2t + 1) \sum_{i=1}^d \Tcyc\Big(\frac{m}{m_i}, m_i, \alpha, p\Big) + O(t N \lg N \lg p). \]

This concludes the description of the algorithm; it remains to establish the overall complexity claim.
First, Proposition \ref{prop:bluestein} yields
\begin{multline*}
 \sum_{i=1}^d \Tcyc\Big(\frac{m}{m_i}, m_i, \alpha, p\Big) < \sum_{i=1}^d \Cpoly\Big(\frac{m}{m_i}, m_i \alpha, p\Big) \\
  + O(d m \alpha \lg \alpha \lg \lg \alpha \lg p \lg \lg p \lg \lg \lg p).
\end{multline*}
By \eqref{eq:d-bound}, the last term lies in
 \[ O(d N (\lg \lg N)^5 (\lg \lg \lg N)^2 \lg p) = O(N \lg N \lg p). \]
Setting $r_i := m_i \alpha$ for $i = 1, \ldots, d$, we obtain
 \[ \Cad(t, N, p) < (2t + 1) \sum_{i=1}^d \Cpoly\Big(\frac{N}{r_i}, r_i, p\Big) + O(t N \lg N \lg p). \]
Notice that \eqref{eq:ri-bound} follows immediately from \eqref{eq:mi-upper-bound}, \eqref{eq:mi-lower-bound} and \eqref{eq:alpha-bound} (for large $N$).
For the normalised quantities, we have
\begin{align*}
 \Cadstar(N, p) & < \sum_{i=1}^d \frac{\Cpoly\big(\frac{N}{r_i}, r_i, p\big)}{N \lg p \lg (N \lg p)} + O(1) \\
                & < \sum_{i=1}^d \Big(\frac{2N}{r_i} + 1\Big) \frac{r_i \lg (r_i \lg p)}{N \lg(N \lg p)} \Cpolystar(r_i, p) + O(1).
\end{align*}
Now observe that
 \[ \frac{\lg(r_i \lg p)}{\lg(N \lg p)} < \frac{\log_2 m_i + \log_2 \alpha + \lg \lg p + O(1)}{\log_2 N} < \frac{\log_2 m_i + O((\lg \lg N)^4)}{\log_2 m}. \]
Put $\gamma_i := \log_2 m_i / \log_2 m$, so that $\sum_i \gamma_i = 1$.
Then \eqref{eq:mi-lower-bound} implies that
 \[ \frac{\lg(r_i \lg p)}{\lg(N \lg p)} < \left(1 + \frac{O((\lg \lg N)^4)}{\log_2 m_i}\right) \gamma_i < \left(1 + \frac{O(1)}{\lg \lg N}\right) \gamma_i. \]
Moreover, from \eqref{eq:ri-bound} we certainly have
 \[ \Big(\frac{2N}{r_i} + 1\Big) \frac{r_i}{N} = 2 + \frac{r_i}{N} < 2 + \frac{O(1)}{\lg \lg N}. \]
The desired bound \eqref{eq:Cad-bound} follows immediately.
\end{proof}

Combining Proposition \ref{prop:Cpoly-bound} and Proposition \ref{prop:Cad-bound}, we obtain the following recurrence inequality for $\Cpolystar(r, p)$.
(This is identical to Theorem 7.1 of \cite{HvdHL-ffmul}, but with the constant $8$ replaced by $4$.)
\begin{prop}
\label{prop:Cpoly-recurrence}
There are absolute constants $z_5, C_2, C_3 > 0$ and a logarithmically slow function $\Phi : (z_5, \infty) \to \RR$ with the following property.
For any prime $p$ and any integer $r > \max(z_5, p^2)$, there exist positive integers $r_1, \ldots, r_d < \Phi(r)$, and weights $\gamma_1, \ldots, \gamma_d > 0$ with $\sum_i \gamma_i = 1$, such that
\begin{equation}
\label{eq:Cpoly-bound}
 \Cpolystar(r, p) < \left(4 + \frac{C_2}{\lg \lg r}\right) \sum_{i=1}^d \gamma_i \Cpolystar(r_i, p) + C_3.
\end{equation}
\end{prop}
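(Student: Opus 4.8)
The plan is to derive \eqref{eq:Cpoly-bound} by composing Proposition~\ref{prop:Cpoly-bound} with Proposition~\ref{prop:Cad-bound}. I would take $z_5$ to be at least $\max(z_3, z_4)$, and large enough for the auxiliary estimates below. Given a prime $p$ and an integer $r > \max(z_5, p^2)$, Proposition~\ref{prop:Cpoly-bound} produces a $p$-admissible length $N$ with $2r < N < (1 + 1/\lg r)\,2r$ and
\[
  \Cpolystar(r, p) < \Big(2 + \frac{A_1}{\lg r}\Big)\Cadstar(N, p) + A_2
\]
for absolute constants $A_1, A_2$. Since $N > 2r > z_4$, Proposition~\ref{prop:Cad-bound} applies to $N$ and yields integers $r_1, \ldots, r_d$ with $2^{(\lg\lg N)^6} < r_i < 2^{(\lg\lg N)^7}$ and weights $\gamma_1, \ldots, \gamma_d > 0$, $\sum_i \gamma_i = 1$, with
\[
  \Cadstar(N, p) < \Big(2 + \frac{A_3}{\lg\lg N}\Big)\sum_{i=1}^d \gamma_i\, \Cpolystar(r_i, p) + A_4 .
\]

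Substituting the second inequality into the first gives a bound of exactly the shape of \eqref{eq:Cpoly-bound}, with leading factor $(2 + A_1/\lg r)(2 + A_3/\lg\lg N)$ and additive term $(2 + A_1/\lg r)A_4 + A_2$. To tidy these up I would use that $2r < N$ forces $\lg N \geq \lg r$, hence $\lg\lg N \geq \lg\lg r$, together with the elementary inequality $\lg r \geq \lg\lg r$ valid for all $r \geq 1$. Expanding the product, the three correction terms $2A_3/\lg\lg N$, $2A_1/\lg r$ and $A_1 A_3/(\lg r\,\lg\lg N)$ are then each bounded by a constant multiple of $1/\lg\lg r$, so the leading factor is at most $4 + C_2/\lg\lg r$ with $C_2 := 2A_1 + 2A_3 + A_1 A_3$; and taking $z_5 \geq 2^{A_1}$ makes the additive term at most $3A_4 + A_2 =: C_3$. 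The weights $\gamma_i$ are carried over unchanged.

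It remains to produce the logarithmically slow $\Phi$ with $r_i < \Phi(r)$. For $r$ large enough we have $N < (1 + 1/\lg r)\,2r < 4r$, so $\lg\lg N \leq \log\log r + c_0$ for an absolute constant $c_0$, whence $r_i < 2^{(\lg\lg N)^7} < 2^{(\log\log r + c_0)^7}$. I would therefore set $\Phi(x) := 2^{(\log\log x + c_0)^7}$, which is smooth and increasing on $(z_5, \infty)$ for $z_5$ large. Just as for the example $2^{(\log\log x)^5}$ of Section~\ref{sec:slow}, a direct computation gives $\log^{\circ 3}(\Phi(\exp^{\circ 3}(x))) = \log x + O(1)$, so $\Phi$ is logarithmically slow (with $\ell = 3$); after possibly enlarging $z_5$ so that also $\Phi(x) \leq x - 1$ for $x > z_5$, we obtain $r_i < \Phi(r)$, as needed.

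The argument has no substantive obstacle beyond bookkeeping. The one point requiring care is that the error terms coming out of the two propositions are naturally expressed on different scales ($1/\lg r$ versus $1/\lg\lg N$); one must observe that, because $N$ stays within a bounded factor of $r$, these — and the cross term arising from multiplying the two ``$2 + \text{small}$'' factors — all collapse uniformly into a single $O(1/\lg\lg r)$ term, and one must check that the resulting $\Phi$ really is logarithmically slow and dominates every $r_i$. Both follow readily from the estimates already in hand.
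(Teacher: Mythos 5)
Your proof follows the same route as the paper's: compose Proposition~\ref{prop:Cpoly-bound} with Proposition~\ref{prop:Cad-bound}, multiply the two $(2 + \text{small})$ factors, and absorb all correction terms into a single $O(1/\lg\lg r)$ term using $\lg\lg N \geq \lg\lg r$ and $\lg r \geq \lg\lg r$. The bookkeeping for $C_2$ and $C_3$ is carried out correctly and in more detail than the paper, which simply says the bound ``follows immediately.''

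The one small defect is the choice $\Phi(x) := 2^{(\log\log x + c_0)^7}$. Your justification rests on the claim $\lg\lg N \leq \log\log r + c_0$ for an absolute $c_0$, but with the paper's conventions ($\lg$ is base-$2$ with a ceiling; $\log$ is the natural logarithm, as confirmed by the use of $-\log(1-\varepsilon) < 2\varepsilon$ elsewhere) this is false: $\lg\lg N \approx \log_2\log_2 N \approx (\log\log r)/\log 2$, which exceeds $\log\log r + c_0$ for all large $r$, since $1/\log 2 > 1$ is a multiplicative rather than additive gap. Consequently $r_i < 2^{(\lg\lg N)^7}$ does not follow from $r_i < \Phi(r)$ as defined. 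The repair is trivial: either insert the missing factor $1/\log 2$, or, as the paper does, raise the exponent and take $\Phi(x) := 2^{(\log\log x)^8}$, which dominates $2^{(c\log\log x)^7}$ for any constant $c$ once $x$ is large. Either way, $\Phi$ remains logarithmically slow (with $\ell = 3$) and the rest of your argument is unaffected.
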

\begin{proof}
We first apply Proposition \ref{prop:Cpoly-bound} to construct a $p$-admissible length $N$ such that \eqref{eq:N-bound-2} and \eqref{eq:Cpolystar-bound} both hold; then we apply Proposition \ref{prop:Cad-bound} to construct integers $r_1, \ldots, r_d$ and weights $\gamma_1, \ldots, \gamma_d$ satisfying \eqref{eq:ri-bound} and \eqref{eq:Cad-bound}.
Define $\Phi(x) := 2^{(\log \log x)^8}$; then certainly $r_i < 2^{(\lg \lg 3r)^7} < \Phi(r)$ for large $r$.
The bound \eqref{eq:Cpoly-bound} follows immediately by substituting \eqref{eq:Cad-bound} into \eqref{eq:Cpolystar-bound}.
\end{proof}

Now we may prove our main result for multiplication in $\FF_p[X]$.
The proof is very similar to that of \cite[Thm.~1.1]{HvdHL-ffmul}.
\begin{proof}[Proof of Theorem \ref{thm:poly}]
We have already noted that $\Cpolystar(r, p) = O(K_\ZZ^{\log^* p})$ in the region $r \leq p^2$ (see~\eqref{eq:Cpolystar-small-r}).
To handle the case $r > p^2$, let $z_5$, $C_2$, $C_3$ and $\Phi(x)$ be as in Proposition \ref{prop:Cpoly-recurrence}.
Increasing~$z_5$ if necessary, we may assume that $z_5 > \exp(\exp(1))$ and that $\Phi(x) \leq x-1$ for all $x > z_5$.
For each prime $p$, set $\sigma_p := \max(z_5, p^2)$ and
 \[ L_p := \max(C_3, \max_{2 \leq r \leq \sigma_p} \Cpolystar(r, p)) = O(K_\ZZ^{\log^* p}). \]
Now apply Proposition \ref{prop:master} with $K = 4$, $B = C_2/4$, $\Scal = \{1, 2, \ldots\}$, $\ell = 2$, $\kappa = 1$, $x_0 = x_1 = z_5$, $\sigma = \sigma_p$, $L = L_p$, and $T(r) = \Cpolystar(r, p)$.
The first part of the recurrence for $T(y)$ is satisfied due to the definition of $L_p$, and the second part due to Proposition \ref{prop:Cpoly-recurrence}.
We conclude that $\Cpolystar(r, p) = O(L_p \, 4^{\log^* r - \log^* \sigma_p})$ for $r > p^2$.
Since $\log^* \sigma_p = \log^* p + O(1)$ and $L_p = O(K_\ZZ^{\log^*p})$, we obtain the desired bound $\Cpolystar(r, p) = O(4^{\log^* r - \log^* p} K_\ZZ^{\log^* p})$ for $r > p^2$.
\end{proof}


\section{Faster integer multiplication}
\label{sec:int}

The goal of this section is to prove Theorem \ref{thm:int}.
We will describe a recursive routine \textsc{IntegerMultiply}, that takes as input positive integers $n$ and $t$, and integers $u_1, \ldots, u_t, v \in \ZZ/(2^n - 1)\ZZ$, and computes the products $u_1 v, \ldots, u_t v$.
We denote its running time by $\Cint(t, n)$.
As in Section \ref{sec:poly}, it is convenient to define the normalisation
 \[ \Cintstar(n) := \sup_{t \geq 1} \frac{\Cint(t, n)}{(2t+1) n \lg n}. \]
We certainly have $\MM(n) < \Cint(1, 2n) + O(n)$, so to prove Theorem \ref{thm:int} it is enough to prove that
\begin{equation}
\label{eq:Cintstar-goal}
 \Cintstar(n) = O((4\sqrt 2)^{\log^* n}).
\end{equation}

We begin by revisiting the polynomial multiplication algorithm from Section \ref{sec:poly}.
Recall that to handle a multiplication problem in $\FF_p[X]/(X^r - 1)$ for $r \leq p^2$, we used Kronecker substitution to convert it to an integer multiplication problem of size $O(r \lg p)$ (see \eqref{eq:Cpolystar-small-r}).
This approach is suboptimal because it ignores the cyclic structure of $\FF_p[X]/(X^r - 1)$.

To exploit this structure, we introduce new routines \textsc{RefinedPolynomialMultiply} and \textsc{RefinedAdmissibleMultiply}.
They have exactly the same interface as \textsc{PolynomialMultiply} and \textsc{AdmissibleMultiply}.
Their running times are denoted by $\Crefpoly(t, r, p)$ and $\Crefad(t, N, p)$, with corresponding normalisations $\Crefpolystar(r, p)$ and $\Crefadstar(N, p)$.
The implementation of \textsc{RefinedAdmissibleMultiply} is exactly the same as \textsc{AdmissibleMultiply}, except that calls to \textsc{PolynomialMultiply} are replaced by calls to \textsc{RefinedPolynomialMultiply}.
Similarly, the implementation of \textsc{RefinedPolynomialMultiply} for $r > p^2$ is exactly the same as \textsc{PolynomialMultiply}, except that calls to \textsc{AdmissibleMultiply} are replaced by calls to \textsc{RefinedAdmissibleMultiply}.
Therefore these routines satisfy analogues of Proposition~\ref{prop:Cpoly-bound} and Proposition~\ref{prop:Cad-bound}, with $\Cpolystar$ and $\Cadstar$ replaced by $\Crefpolystar$ and $\Crefadstar$.

Where the new routines differ is in the implementation of \textsc{RefinedPolynomialMultiply} for the case $r \leq p^2$, which is described in the proof of the following result.
The idea is to exploit the cyclic structure by using \textsc{IntegerMultiply} to handle the (cyclic) integer multiplication.
This device saves a constant factor at each recursion level of the main algorithm.

\begin{prop}
\label{prop:Cpoly-bound2}
For any prime $p$, and for any positive integer $r$ satisfying
 \[ (\lg \lg p)^2 < \lg r < (\lg p)^{1/2}, \]
there exists an integer $n$ in the interval
\begin{equation}
\label{eq:n-bound}
 2r \lg p < n < \left(1 + \frac{1}{\lg r}\right) 2 r \lg p
\end{equation}
such that
 \[ \Crefpolystar(r, p) < \left(2 + \frac{O(1)}{\lg r}\right) \Cintstar(n) + O(1). \]
\end{prop}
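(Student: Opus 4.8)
The plan is to implement \textsc{RefinedPolynomialMultiply}, in the stated regime, by Kronecker substitution into a \emph{cyclic} integer product, which is then delegated to \textsc{IntegerMultiply} (this replaces the plain Kronecker substitution that \textsc{PolynomialMultiply} uses for $r \leq p^2$). Given $U_1, \ldots, U_t, V \in \FF_p[X]/(X^r - 1)$ --- whose coefficients are already stored as integers in $[0, p)$ --- put $b := 2\lg p + \lg r + 1$ and $n := rb$, and build the integers $\tilde u_s := \sum_{i=0}^{r-1} U_{s,i}\, 2^{bi}$ and $\tilde v := \sum_{i=0}^{r-1} V_i\, 2^{bi}$, viewed in $\ZZ/(2^n - 1)\ZZ$. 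The point is that reduction modulo $2^n - 1 = 2^{br} - 1$ turns $2^{b(i+j)}$ into $2^{b((i+j)\bmod r)}$, so $\tilde u_s \tilde v \equiv \sum_{k=0}^{r-1} (U_s V)_k\, 2^{bk} \pmod{2^n - 1}$, where $(U_s V)_k$ is the $k$-th cyclic convolution coefficient of $U_s$ and $V$ computed over $\ZZ$ (without reducing mod $p$). Since $0 \leq (U_s V)_k \leq r(p-1)^2 < rp^2 \leq 2^{b-1}$, this sum is the canonical residue in $[0, 2^n - 1)$, so after calling \textsc{IntegerMultiply} on the inputs $\tilde u_1, \ldots, \tilde u_t, \tilde v$ to obtain $\tilde w_s := \tilde u_s \tilde v$, we recover $U_s V$ by reading off the base-$2^b$ digits of $\tilde w_s$ and reducing each modulo $p$.

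Before the cost analysis I would check \eqref{eq:n-bound}. From $n = rb = 2r\lg p + r(\lg r + 1)$ we get $n > 2r\lg p$ at once. For the upper bound, $\lg r < (\lg p)^{1/2}$ gives $(\lg r)(\lg r + 1) < (\lg r)^2 + \lg r < 2\lg p$ (using $\lg r < (\lg p)^{1/2} \leq \lg p$), whence $r(\lg r + 1) < 2r\lg p/\lg r$ and $n < (1 + 1/\lg r)\,2r\lg p$. This is exactly the tension the parameter $b$ must resolve: $b$ has to be roughly $2\lg p + \lg r$ for the Kronecker substitution to be valid (no wrap-around corrupts the convolution coefficients), yet small enough that $n = rb$ does not exceed the interval \eqref{eq:n-bound}, and the upper bound on $\lg r$ is what makes these demands compatible.

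For the running time, packing $U_1, \ldots, U_t, V$ into $\tilde u_1, \ldots, \tilde u_t, \tilde v$ and later splitting each $\tilde w_s$ into $r$ base-$2^b$ digits are elementary linear scans costing $O(t\,rb) = O(tr\lg p)$ bit operations; reducing the $tr$ digits (each of $O(\lg p)$ bits) modulo $p$ by Newton division costs $O(tr\,\MM(\lg p))$, which by \eqref{eq:ss-int} is $O(tr\lg p\lg\lg p\lg\lg\lg p)$. Thus $\Crefpoly(t,r,p) < \Cint(t,n) + O(tr\lg p\lg\lg p\lg\lg\lg p)$. Dividing through by $(2t+1)\,r\lg p\lg(r\lg p)$, using $\Cint(t,n) \leq (2t+1)\,n\lg n\,\Cintstar(n)$, and taking the supremum over $t$ gives
\[
 \Crefpolystar(r,p) < \frac{n\lg n}{r\lg p\lg(r\lg p)}\,\Cintstar(n) + O\!\left(\frac{\lg\lg p\lg\lg\lg p}{\lg(r\lg p)}\right).
\]
The error term is $O(1)$ precisely because of the \emph{lower} bound on $\lg r$: $\lg(r\lg p) \geq \lg r > (\lg\lg p)^2$ forces it into $O(\lg\lg\lg p/\lg\lg p)$. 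For the leading factor, \eqref{eq:n-bound} gives $n/(r\lg p) < 2(1 + 1/\lg r)$, while $n < 4r\lg p$ gives $\lg n \leq \lg(r\lg p) + 2$ and hence $\lg n/\lg(r\lg p) \leq 1 + 2/\lg r$; multiplying yields $n\lg n/(r\lg p\lg(r\lg p)) < 2 + O(1)/\lg r$, which gives the claimed bound.

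There is no serious obstacle here: the content is entirely in choosing the packing parameter $b$ so that $n$ lands inside \eqref{eq:n-bound} while the Kronecker substitution stays lossless, and in recognising that the two size constraints on $r$ exist to guarantee, respectively, that compatibility and the negligibility of the final modular reductions after normalisation. The one spot warranting care is confirming that $\sum_k (U_s V)_k\, 2^{bk}$ is genuinely the canonical residue modulo $2^n - 1$, not merely congruent to it --- the bound $(U_s V)_k < rp^2 \leq 2^{b-1}$, which relies on the extra slack built into the definition of $b$, settles this.
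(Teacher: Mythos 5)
Your proof is correct and takes essentially the same approach as the paper: implementing \textsc{RefinedPolynomialMultiply} for this range of $r$ by packing the coefficients into base $2^b$ and delegating the resulting cyclic integer product to \textsc{IntegerMultiply}. The only small deviation is your choice $b := 2\lg p + \lg r + 1$ (the paper uses $b := 2\lg p + \lg r$, which already suffices since the convolution coefficients are bounded by $r(p-1)^2 < rp^2 - 1 \leq 2^b - 1$); your extra bit is harmless and the bound verification still goes through.
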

(Note that this bound does not hold over the whole range $r \leq p^2$; to obtain the constant $2$, we need to restrict to a smaller range of $r$.)
\begin{proof}
We are given as input $U_1, \ldots, U_t, V \in \FF_p[X]/(X^r - 1)$, and we wish to compute the products $U_1 V, \ldots, U_t V$.

We use the following algorithm.
Lift the inputs to polynomials $U'_1, \ldots, U'_t, V' \in \ZZ[X]/(X^r - 1)$, whose coefficients lie in the interval $0 \leq x < p$.
Evaluate these polynomials at $X = 2^b$, where $b := 2 \lg p + \lg r$; that is, pack the coefficients together to obtain integers $u_s := U_s(2^b)$ and $v := V(2^b)$ in $\ZZ/(2^{rb} - 1)\ZZ$.
Call \textsc{IntegerMultiply} with $n := rb$ to compute the cyclic integer products $w_s := u_s v$.
Then we have $w_s = W'_s(2^b)$ where $W'_s := U'_s V' \in \ZZ[X]/(X^r - 1)$.
Observe that the coefficients of $W'_s$ lie in the interval $0 \leq x \leq r(p-1)^2 < rp^2 - 1$, and since $2^b \geq rp^2$, we may unpack $w_s$ to recover the coefficients of $W'_s$ unambiguously.
Finally, by reducing the coefficients of $W'_s$ modulo~$p$, we arrive at the desired products $W_s \in \FF_p[X]/(X^r - 1)$.

As $n = 2r \lg p + r \lg r$, the bound \eqref{eq:n-bound} follows by taking into account the hypothesis that $\lg r < (\lg p)^{1/2}$.
For the complexity we have
 \[ \Crefpoly(t, r, p) < \Cint(t, n) + O(tr \lg p \lg \lg p \lg \lg \lg p), \]
where the last term covers the divisions by $p$ at the end of the algorithm (and also the linear-time packing and unpacking steps).
Dividing by $(2t+1) r \lg p \lg(r \lg p)$ and taking suprema over $t \geq 1$, we obtain
 \[ \Crefpolystar(r, p) < \frac{n \lg n}{r \lg p \lg(r \lg p)} \Cintstar(n) + O\left( \frac{\lg \lg p \lg \lg \lg p}{\lg(r \lg p)}\right). \]
The last term lies in $O(1)$ thanks to the assumption $\lg r > (\lg \lg p)^2$.
Moreover, \eqref{eq:n-bound} implies that $\lg n \leq \lg(r \lg p) + 2$, so we find that
 \[ \frac{n \lg n}{r \lg p \lg(r \lg p)} < 2 \left(1 + \frac{1}{\lg r}\right)\left(1 + \frac{2}{\lg(r \lg p)}\right) < 2 + \frac{O(1)}{\lg r}. \qedhere \]
\end{proof}

Now we describe the implementation of \textsc{IntegerMultiply}.
It chooses one of two algorithms, depending on the size of $n$.
For small $n$, it calls any convenient basecase multiplication algorithm, such as the Sch\"onhage--Strassen algorithm.
For large $n$, it uses the algorithm described in the proof of Proposition \ref{prop:Cint-bound} below.
This algorithm reduces the problem to a collection of instances of \textsc{RefinedAdmissibleMultiply}, one for each prime $p \in \Pset(n)$, where $\Pset(n)$ is defined to be the set consisting of the smallest $\lg n$ primes that exceed $\frac12(\lg n)^2$ (we will see in the proof below that these primes satisfy $\lg p = 2 \lg \lg n + O(1)$).
For example, $\Pset(10^5) = \{149, 151, \ldots, 233\}$ (the first 14 primes after 144.5).

\begin{prop}
\label{prop:Cint-bound}
There is an absolute constant $z_7 > 0$ with the following property.
For all $n > z_7$, there exists an admissible length $N$ in the interval
\begin{equation}
\label{eq:N-bound-3}
 \frac{n}{\lg n \lg \lg n} < N < \left(1 + \frac{3}{\lg \lg n}\right) \frac{n}{\lg n \lg \lg n},
\end{equation}
such that $N$ is $p$-admissible for all $p \in \Pset(n)$, and such that
\begin{equation}
\label{eq:Cintstar-bound}
 \Cintstar(n) < \left(2 + \frac{O(1)}{\lg \lg n}\right) \sum_{p \in \Pset(n)} \frac{1}{\lg n}  \Crefadstar(N, p) + O(1).
\end{equation}
\end{prop}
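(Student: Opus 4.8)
The plan is to follow the reduction strategy of Section~\ref{sec:crandall-fagin}: given the $t$ cyclic integer products $u_s v$ in $\ZZ/(2^n - 1)\ZZ$, cut the operands into small chunks over $\ZZ/P\ZZ$ where $P := \prod_{p \in \Pset(n)} p$, use the Crandall--Fagin device to pass to $t$ cyclic polynomial products in $(\ZZ/P\ZZ)[X]/(X^N - 1)$, split these via the Chinese remainder theorem into one product in $\FF_p[X]/(X^N - 1)$ for each $p \in \Pset(n)$, and finally delegate each of those to a call of \textsc{RefinedAdmissibleMultiply} on $(t, N, p)$. I would begin by recording the relevant sizes. Since the interval $(\tfrac{1}{2}(\lg n)^2, (\lg n)^2)$ contains $\gg (\lg n)^2/\lg\lg n$ primes for large $n$ (Chebyshev), the set $\Pset(n)$ consists of exactly $\lg n$ primes, all lying in that interval; hence $\lg p = 2\lg\lg n + O(1)$ for every $p \in \Pset(n)$, while
\[ \log_2 P = \sum_{p \in \Pset(n)} \log_2 p \geq \lg n \cdot \log_2\big(\tfrac{1}{2}(\lg n)^2\big) \geq 2\lg n\lg\lg n - 3\lg n. \]

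Next I would construct $N$. Apply Proposition~\ref{prop:admissible} (with, say, $p$ the smallest member of $\Pset(n)$) to the target $n^* := \big(1 + \tfrac{5}{2\lg\lg n}\big)\,\tfrac{n}{\lg n\lg\lg n}$, obtaining an admissible length $N$ with $n^* < N < (1 + 1/\lg n^*)\,n^*$; for large $n$ this places $N$ inside the interval \eqref{eq:N-bound-3}. The prime factors $q_0, \dots, q_e$ of $N$ all exceed $(\lg N)^3 > (\lg n)^2$ for large $n$ (since $\lg N > \tfrac{1}{2}\lg n$), hence exceed every $p \in \Pset(n)$; together with $N > (\lg n)^4 > p^2$ this shows that $N$ is $p$-admissible for every $p \in \Pset(n)$ and that $\gcd(N, p - 1) = 1$. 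Because $N$ is coprime to $p - 1$ and $p > 2$, the $N$-th power map is a bijection of $\FF_p^\times$, so $2$ has a unique $N$-th root $\theta_p \in \FF_p$; by the Chinese remainder theorem these combine into $\theta \in \ZZ/P\ZZ$ with $\theta^N = 2$. Finally I would verify the hypothesis required in Section~\ref{sec:crandall-fagin}: from $N > n^*$ one gets $n/N < \lg n\lg\lg n - \tfrac{5}{2}\lg n + O(\lg n/\lg\lg n)$, so for large $n$
\[ 2\lceil n/N\rceil + \lg N + 1 < 2\lg n\lg\lg n - 3\lg n \leq \log_2 P \leq \lg P. \]

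With $N$, $\theta$ and the primes of $\Pset(n)$ in hand (a one-time precomputation of $2^{O((\lg\lg n)^2)}$ bit operations), the reduction runs as above. By Section~\ref{sec:crandall-fagin}, converting the $t$ cyclic integer products into $t$ products in $(\ZZ/P\ZZ)[X]/(X^N - 1)$ costs $O(tN)$ auxiliary multiplications in $\ZZ/P\ZZ$ together with $O(tN(\lg n)^2 + tN\lg P)$ further bit operations; applying the Chinese remainder theorem coefficientwise, using remainder trees built from $P = \prod_p p$, costs $O(tN\,\MM(\lg P)\lg\lg n)$ bit operations and produces the required instances of \textsc{RefinedAdmissibleMultiply}. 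Since $\lg P = O(\lg n\lg\lg n)$, $N = O(n/(\lg n\lg\lg n))$, and each multiplication in $\ZZ/P\ZZ$ costs $O(\lg P\lg\lg P\lg\lg\lg P)$ by Sch\"onhage--Strassen, all of these auxiliary contributions together are $O(tn\lg n/\lg\lg n) = o(tn\lg n)$. Hence
\[ \Cint(t, n) < \sum_{p \in \Pset(n)} \Crefad(t, N, p) + O\!\left(\frac{tn\lg n}{\lg\lg n}\right). \]

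It remains to normalise. Using $\Crefad(t,N,p) \leq (2t+1)N\lg p\,\lg(N\lg p)\,\Crefadstar(N,p)$, dividing the last display by $(2t+1)n\lg n$, and taking the supremum over $t$ gives
\[ \Cintstar(n) \leq \frac{N}{n\lg n}\sum_{p \in \Pset(n)}\lg p\,\lg(N\lg p)\,\Crefadstar(N,p) + O(1). \]
One then checks, uniformly in $p \in \Pset(n)$, that $\lg p \leq 2\lg\lg n + O(1)$, $\lg(N\lg p) \leq \lg n + O(\lg\lg\lg n)$, and $N < (1 + 3/\lg\lg n)\,n/(\lg n\lg\lg n)$, whence
\[ \frac{N\,\lg p\,\lg(N\lg p)}{n\lg n} \leq \frac{1}{\lg n}\Big(2 + \frac{O(1)}{\lg\lg n}\Big); \]
substituting this and using $|\Pset(n)| = \lg n$ yields \eqref{eq:Cintstar-bound}. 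The crux is the interlocking of the size estimates in the second paragraph: $N$ must be chosen large enough within \eqref{eq:N-bound-3} that the Crandall--Fagin inequality $\lg P > 2\lceil n/N\rceil + \lg N + 1$ holds, and this succeeds precisely because $\Pset(n)$ uses primes as small as $\tfrac{1}{2}(\lg n)^2$ (so $\lg P$ loses only $O(\lg n)$ relative to the idealised value $2\lg n\lg\lg n$) while \eqref{eq:N-bound-3} grants the multiplicative slack $1 + 3/\lg\lg n$. The remaining points---the existence of $\theta$, the overhead accounting, and pinning the normalisation factor at $2 + O(1/\lg\lg n)$---are routine.
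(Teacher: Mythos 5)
Your proposal is correct and follows essentially the same reduction as the paper: cut over $\ZZ/P\ZZ$ via Crandall--Fagin, split by CRT into $\FF_p[X]/(X^N-1)$ for $p \in \Pset(n)$, delegate to \textsc{RefinedAdmissibleMultiply}, and normalise. The one structural difference worth noting is how $N$ is constructed. The paper sets
\[ n' := \left\lceil \frac{2n}{\lg P - \lg n - 3} \right\rceil \]
and feeds this into Proposition~\ref{prop:admissible} (with $p=2$), so that the Crandall--Fagin condition $\lg P > 2\lceil n/N\rceil + \lg N + 1$ drops out almost automatically by construction, and \eqref{eq:N-bound-3} is then verified a posteriori. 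You go the opposite way: you pick the target $n^* := (1 + \tfrac{5}{2\lg\lg n})\,\tfrac{n}{\lg n\lg\lg n}$ so that \eqref{eq:N-bound-3} is essentially immediate, and then verify the Crandall--Fagin inequality by an explicit chain of estimates. Both routes land in the same place; yours is a little more transparent about why the multiplicative slack $1 + 3/\lg\lg n$ in \eqref{eq:N-bound-3} is exactly what makes the Crandall--Fagin condition hold, at the cost of a slightly longer verification. One small point you gloss over: in the Turing model, after the coefficientwise CRT one must rearrange the data so that the residues modulo a fixed $p$ are contiguous before calling \textsc{RefinedAdmissibleMultiply}; this amounts to transposing a $tN \times |\Pset(n)|$ array and costs $O(tn\lg\lg n)$ bit operations, which is comfortably $o(tn\lg n)$ and so harmless, but it should be mentioned explicitly since data layout is part of the cost model. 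Everything else — the $p$-admissibility of $N$ for all $p \in \Pset(n)$ (via $q_i > (\lg N)^3 > (\lg n)^2 > p$ and $N > (\lg n)^4 > p^2$), the construction of $\theta$ via $\gcd(N,p-1)=1$, and the normalisation giving the factor $2 + O(1/\lg\lg n)$ — matches the paper's argument.
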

\resetstep
\begin{proof}
We are given as input $u_1, \ldots, u_t, v \in \ZZ/(2^n - 1)\ZZ$, and we wish to compute the products $u_1 v, \ldots, u_t v$.

\step{choose parameters.}
In this preliminary step we compute a number of parameters that depend only on $n$.

The prime number theorem (see for example \cite[p.~9]{Apo-analytic}) implies that the number of primes between $\frac12(\lg n)^2$ and $(\lg n)^2$ is asymptotically
 \[ \frac{(\lg n)^2}{2 \log((\lg n)^2)} \gg \lg n. \]
Therefore, for large $n$ we certainly have
\begin{equation}
\label{eq:p-bound}
 \tfrac12(\lg n)^2 < p < (\lg n)^2
\end{equation}
for all $p \in \Pset(n)$.
Define $P := \prod_{p \in \Pset(n)} p$; then
 \[ (2 \log_2 \lg n - 1) \lg n \leq \log_2 P \leq (2 \log_2 \lg n) \lg n, \]
so
\begin{equation}
\label{eq:lgP-bound}
 2 \lg n \lg \lg n - 3 \lg n \leq \lg P \leq 2 \lg n \lg \lg n.
\end{equation}
Clearly we may compute $\Pset(n)$ and $P$ within $(\lg n)^{O(1)}$ bit operations.

Let
\begin{equation}
\label{eq:nprime-defn}
 n' := \left \lceil \frac{2n}{\lg P - \lg n - 3} \right \rceil;
\end{equation}
this makes sense for large $n$, as $\lg P \gg \lg n$.
Using Proposition \ref{prop:admissible} (with $p = 2$), construct an admissible length $N$ in the interval
 \[ n' < N < \left(1 + \frac{1}{\lg n'}\right) n'. \]
The invocation of Proposition \ref{prop:admissible} costs $2^{O((\lg \lg n')^2)} = O(n)$ bit operations.
Let us check that \eqref{eq:N-bound-3} holds for this choice of $N$.
In one direction, by \eqref{eq:lgP-bound} we have
\begin{equation}
\label{eq:aux1}
  N > n' \geq \frac{2n}{\lg P - \lg n - 3} \geq \frac{2n}{\lg P} \geq \frac{n}{\lg n \lg \lg n}.
\end{equation}
For the other direction we have
\begin{align*}
  N & <  \left(1 + \frac{1}{\lg n'}\right) \left(\frac{2n}{2 \lg n \lg \lg n - 4 \lg n - 3} + 1\right) \\
    & = \left(1 + \frac{1}{\lg n'}\right) \left(1 + \frac{4 \lg n + 3}{2 \lg n \lg \lg n - 4 \lg n - 3} + \frac{\lg n \lg \lg n}{n}\right) \frac{n}{\lg n \lg \lg n} \\
    & < \left(1 + \frac{1}{\lg n'}\right) \left(1 + \frac{2+o(1)}{\lg \lg n}\right) \frac{n}{\lg n \lg \lg n} \\
    & < \left(1 + \frac{3}{\lg \lg n}\right) \frac{n}{\lg n \lg \lg n}
\end{align*}
for large $n$.

Finally, let us verify that $N$ is $p$-admissible for all $p \in \Pset(n)$ (for large $n$).
First, by \eqref{eq:aux1} and \eqref{eq:p-bound} we have $N > (\lg n)^4 > p^2$.
Also, by \eqref{eq:q-bound} and \eqref{eq:aux1}, every prime divisor of $N = q_0 \cdots q_e$ satisfies $q_i > (\lg N)^3 > (\lg n)^2 > p$; in particular, $p \ndivides N$.

\step{convert to polynomial product modulo $P$.}
In this step we apply the Crandall--Fagin algorithm from Section \ref{sec:crandall-fagin}.
For this, we require that $\lg P > 2\lceil n/N \rceil + \lg N + 1$;
this follows from \eqref{eq:nprime-defn} as
 \[ \lg P \geq \frac{2n}{n'} + \lg n + 3 > \frac{2n}{N} + \lg n + 3 \geq 2 \left\lceil \frac{n}{N} \right\rceil + \lg N + 1. \]
We also require an element $\theta \in \ZZ/P\ZZ$ such that $\theta^N = 2$.
To construct $\theta$, we first compute $a_p := N^{-1} \pmod{p-1}$ for each $p \in \Pset(n)$.
This modular inverse exists because $N$ is a product of primes that are all greater than~$p$, and hence relatively prime to $p-1$.
Then we put $\theta_p := 2^{a_p} \pmod p$, so that $(\theta_p)^N = 2 \pmod p$.
Using the Chinese remainder theorem, we compute $\theta \in \ZZ/P\ZZ$ such that $\theta = \theta_p \pmod p$ for all $p \in \Pset(n)$; then $\theta^N = 2$ as desired.
All of this can be effected in $(\lg n)^{O(1)}$ bit operations.

According to Section \ref{sec:crandall-fagin}, the problem of computing $u_1 v, \ldots, u_t v$ reduces to computing $U_1 V, \ldots, U_t V$ for certain polynomials $U_1, \ldots, U_t, V \in (\ZZ/P\ZZ)[X]/(X^N - 1)$, plus auxiliary operations amounting to
 \[ O(t(N \lg P + N (\lg n)^2 + N \lg P \lg \lg P \lg \lg \lg P)) = O(t n \lg n / \lg \lg n) \]
bit operations.

\step{reduce to products modulo small primes.}
In this step we convert each multiplication problem in $(\ZZ/P\ZZ)[X]/(X^N - 1)$ into a collection of products in $\FF_p[X]/(X^N - 1)$, for $p \in \Pset(n)$.

We start with the isomorphism $\ZZ/P\ZZ \cong \oplus_{p \in \Pset(n)} \FF_p$, which may be computed in either direction in $O(\lg P (\lg \lg P)^2 \lg \lg \lg P)$ bit operations using fast simultaneous modular reduction and fast Chinese remaindering algorithms \cite[\S10.3]{vzGG-compalg3}.
It induces an isomorphism
 \[ (\ZZ/P\ZZ)[X]/(X^N - 1) \cong \bigoplus_{p \in \Pset(n)} \FF_p[X]/(X^N - 1), \]
which may be computed in either direction, for all $s = 1, \ldots, t$, in
 \[ O(tN \lg P (\lg \lg P)^2 \lg \lg \lg P) = O(tn (\lg \lg n)^2 \lg \lg \lg n) \]
bit operations.
Note that the isomorphism $\ZZ/P\ZZ \cong \oplus_p \FF_p$ must be applied to the coefficient of each $X^i$ independently, but the subroutine for multiplying in $\FF_p[X]/(X^N - 1)$ needs sequential access to all of the residues for a single prime~$p$.
The required data rearrangement corresponds to transposing a $tN \times |\Pset(n)|$ array, which costs only $O(tN |\Pset(n)| \lg |\Pset(n)| \max_p \lg p) = O(tn \lg \lg n)$ bit operations.
Finally, for each $p \in \Pset(n)$, the products in $\FF_p[X]/(X^N - 1)$ may be computed by calling \textsc{RefinedAdmissibleMultiply}, since $N$ is a $p$-admissible length.

Combining the contributions from Steps 1, 2 and 3, we obtain
 \[ \Cint(t, n) < \sum_{p \in \Pset(n)} \Crefad(t, N, p) + O(t n \lg n / \lg \lg n). \]
Dividing by $(2t+1) n \lg n$ and taking suprema over $t \geq 1$ yields
 \[ \Cintstar(n) < \sum_{p \in \Pset(n)} \Crefadstar(N, p) \frac{N \lg p \lg(N \lg p)}{n \lg n} + O(1). \]
By \eqref{eq:p-bound} we have $\lg p \leq 2 \lg \lg n$, so \eqref{eq:N-bound-3} implies that
 \[ N \lg p < \left(2 + \frac{O(1)}{\lg \lg n} \right) \frac{n}{\lg n}, \]
and also $\lg(N \lg p) \leq \lg n$, for large $n$.
The bound \eqref{eq:Cintstar-bound} follows immediately.
\end{proof}

We may now glue together the various pieces to obtain a doubly-exponential recurrence for $\Cintstar(n)$.
\begin{prop}
\label{prop:int-bound2}
There are absolute constants $z_9 > z_8 > 0$ and $C_5, C_6 > 0$, and a logarithmically slow function $\Psi : (z_8,\infty) \to \RR$ such that $\Psi(z_9) > z_8$, with the following property.
For any $n > z_9$, there exist positive integers $n_1, \ldots, n_d < \Psi(\Psi(n))$, and weights $\gamma_1, \ldots, \gamma_d > 0$ with $\sum_i \gamma_i = 1$, such that
\begin{equation}
\label{eq:Cint-bound}
 \Cintstar(n) < \left(32 + \frac{C_5}{\lg \lg \lg n}\right) \sum_{i=1}^d \gamma_i \Cintstar(n_i) + C_6.
\end{equation}
\end{prop}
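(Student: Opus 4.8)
The plan is to compose a chain of five size-reducing reductions, each costing a factor essentially equal to $2$, so that together they contribute the factor $2^5 = 32$ in \eqref{eq:Cint-bound}. Starting from an \textsc{IntegerMultiply} instance of size $n$: the first reduction is Proposition \ref{prop:Cint-bound}, passing to \textsc{RefinedAdmissibleMultiply} instances over $\FF_p[X]/(X^N-1)$ for each $p \in \Pset(n)$, with $N$ as in \eqref{eq:N-bound-3} (so $\lg\lg N = \lg\lg n + O(1)$ and $\lg p = 2\lg\lg n + O(1)$), at cost a factor $2 + O(1/\lg\lg n)$ and the uniform weights $1/\lg n$ over $\Pset(n)$. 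The second reduction is the refined analogue of Proposition \ref{prop:Cad-bound}, replacing each $\Crefadstar(N,p)$ by a weighted average of terms $\Crefpolystar(r_i,p)$ with $2^{(\lg\lg N)^6} < r_i < 2^{(\lg\lg N)^7}$, at cost $2 + O(1/\lg\lg N)$. Since $\lg r_i \gg \lg\lg n \gg \lg p$ forces $r_i > p^2$ for large $n$, the third reduction is the refined analogue of Proposition \ref{prop:Cpoly-bound}, replacing $\Crefpolystar(r_i,p)$ by $\Crefadstar(N_i,p)$ for a $p$-admissible $N_i$ with $2r_i < N_i < (1 + 1/\lg r_i)2r_i$, at cost $2 + O(1/\lg r_i)$. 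The fourth reduction is again the refined analogue of Proposition \ref{prop:Cad-bound}, replacing $\Crefadstar(N_i,p)$ by a weighted average of terms $\Crefpolystar(r_{ij},p)$ with $2^{(\lg\lg N_i)^6} < r_{ij} < 2^{(\lg\lg N_i)^7}$, at cost $2 + O(1/\lg\lg N_i)$. The fifth reduction is Proposition \ref{prop:Cpoly-bound2}, which converts each $\Crefpolystar(r_{ij},p)$ into $(2 + O(1/\lg r_{ij}))\Cintstar(n_k) + O(1)$ with $2r_{ij}\lg p < n_k < (1 + 1/\lg r_{ij})2r_{ij}\lg p$.

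The step requiring genuine care — and where I expect the main obstacle — is verifying the fifth reduction is legitimate: Proposition \ref{prop:Cpoly-bound2} demands the narrow window $(\lg\lg p)^2 < \lg r_{ij} < (\lg p)^{1/2}$, not just $r_{ij} > p^2$, and this is exactly what pins the number of intervening polynomial recursion levels to two. One tracks sizes: from $\lg N \leq \lg n$ one gets $\lg r_i < (\lg\lg n)^7$, hence $\lg N_i < (\lg\lg n)^8$ and $\lg\lg N_i = \Theta(\lg\lg\lg n)$, hence $\lg r_{ij}$ lies between constant multiples of $(\lg\lg\lg n)^6$ and $(\lg\lg\lg n)^7$; for large $n$ this comfortably exceeds $(\lg\lg p)^2 = O((\lg\lg\lg n)^2)$ and is comfortably below $(\lg p)^{1/2} = \Theta((\lg\lg n)^{1/2})$. (With one polynomial level, $\lg r_i = \Theta((\lg\lg n)^6)$ would sit far above the window; with three, $\lg r_{ij}$ would fall below it.) The remaining hypotheses — $n > z_7$ for the first reduction, $N > z_4$ for the second, $r_i > \max(z_3,p^2)$ for the third, $N_i > z_4$ for the fourth — all hold once $n$ exceeds a suitable absolute constant.

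Composing the five reductions yields \eqref{eq:Cint-bound}. The leading constant is the product of the five factors; since the error term $O(1/\lg\lg N_i) = O(1/\lg\lg\lg n)$ from the fourth reduction dominates the others (which are $O(1/\lg\lg n)$, $O(1/\lg\lg n)$, $O(1/(\lg\lg n)^6)$, and $O(1/(\lg\lg\lg n)^6)$ respectively), this product equals $32(1 + O(1/\lg\lg\lg n)) = 32 + C_5/\lg\lg\lg n$. The additive $O(1)$ contributions, each scaled only by a bounded product of outer factors, accumulate to an absolute constant $C_6$. The weights $\gamma_k$ attached to the final instances $\Cintstar(n_k)$ are products of the uniform weights from the first reduction and the $\gamma$-weights from the two applications of the refined Proposition \ref{prop:Cad-bound}, each family summing to $1$, so $\sum_k \gamma_k = 1$.

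Finally, to package the $n_k$: continuing the size bookkeeping, $n_k < 4 r_{ij}\lg p < 2^{(\lg\lg\lg n)^9}$ for large $n$. I would take $\Psi(x) := 2^{(\log\log x)^{10}}$, which is logarithmically slow with $\ell = 3$ (as in the examples of Section \ref{sec:slow}) and, after enlarging the left endpoint of its domain, satisfies $\Psi(x) \leq x - 1$; since $\log\log\Psi(n) = 10\log\log\log n + O(1)$, we get $\Psi(\Psi(n)) \geq 2^{(9\log\log\log n)^{10}} > 2^{(\lg\lg\lg n)^9} > n_k$ for large $n$. Choosing $z_8$ large enough for the conditions on $\Psi$, then $z_9 > z_8$ large enough that $\Psi(z_9) > z_8$ and all of the above "large $n$" requirements are met, completes the proof.
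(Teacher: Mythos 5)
Your proposal is correct and follows essentially the same five-step chain as the paper's own proof: Proposition \ref{prop:Cint-bound}, then the refined analogues of Propositions \ref{prop:Cad-bound} and \ref{prop:Cpoly-bound} (twice each, interleaved), then Proposition \ref{prop:Cpoly-bound2}, with the same size bookkeeping to verify the narrow window $(\lg\lg p)^2 < \lg r_{ij} < (\lg p)^{1/2}$ and the same identification of $O(1/\lg\lg\lg n)$ as the dominant error term. The only differences are cosmetic: the paper chooses $\Psi(x) = 2^{(\log\log x)^8}$ and bounds $n_{p,i,j} < \Psi(N_{p,i}) < \Psi(\Psi(n))$ in two steps, whereas you take $\Psi(x) = 2^{(\log\log x)^{10}}$ and bound $n_k$ directly; both work.
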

\begin{proof}
\resetstep
\step{top-level call to \textsc{IntegerMultiply}.}
Applying Proposition \ref{prop:Cint-bound}, we obtain an admissible $N$ in the interval
\begin{equation}
\label{eq:N-bound-repeat}
 \frac{n}{\lg n \lg \lg n} < N < \left(1 + \frac{3}{\lg \lg n}\right) \frac{n}{\lg n \lg \lg n},
\end{equation}
such that $N$ is $p$-admissible for all $p \in \Pset(n)$, and such that
\begin{equation}
\label{eq:Cintstar-bound-repeat}
 \Cintstar(n) < \left(2 + \frac{O(1)}{\lg \lg n}\right) \sum_{p \in \Pset(n)} \frac{1}{\lg n} \Crefadstar(N, p) + O(1).
\end{equation}

In what follows, we frequently use the estimates $\lg N = \lg n + O(\lg \lg n)$ and $\lg \lg N = \lg \lg n + O(1)$, which follow from \eqref{eq:N-bound-repeat}.
Also, from \eqref{eq:p-bound} we have $\lg p = 2 \lg \lg n + O(1)$ for all $p \in \Pset(n)$.

\step{first call to \textsc{RefinedAdmissibleMultiply}.}
In this step we use (the refined analogue of) Proposition \ref{prop:Cad-bound} to estimate the $\Crefadstar(N, p)$ term in \eqref{eq:Cintstar-bound-repeat}, for a fixed $p \in \Pset(n)$.
We obtain integers $r_{p,1}, \ldots, r_{p,d_p}$ such that
\begin{equation}
\label{eq:rpi-bound}
 2^{(\lg \lg N)^6} < r_{p,i} < 2^{(\lg \lg N)^7}
\end{equation}
and weights $\gamma_{p,1}, \ldots, \gamma_{p,d_p} > 0$ with $\sum_i \gamma_{p,i} = 1$, such that
 \[ \Crefadstar(N, p) < \left(2 + \frac{O(1)}{\lg \lg N}\right) \sum_{i=1}^{d_p} \gamma_{p,i} \Crefpolystar(r_{p,i}, p) + O(1). \]
Substituting into \eqref{eq:Cintstar-bound-repeat}, and using $\lg \lg N = \lg \lg n + O(1)$, yields
\begin{equation}
\label{eq:Cintstar-bound2}
 \Cintstar(n) < \left(4 + \frac{O(1)}{\lg \lg n}\right) \sum_{p \in \Pset(n)} \sum_{i=1}^{d_p} \frac{\gamma_{p,i}}{\lg n} \Crefpolystar(r_{p,i}, p) + O(1).
\end{equation}

\step{first call to \textsc{RefinedPolynomialMultiply}.}
In this step we use (the refined analogue of) Proposition \ref{prop:Cpoly-bound} to estimate the $\Crefpolystar(r_{p,i}, p)$ term in \eqref{eq:Cintstar-bound2}, for a fixed $p \in \Pset(n)$ and $i \in \{1, \ldots, d_p\}$.
The precondition $r_{p,i} > \max(z_3, p^2)$ holds for large $n$, because by \eqref{eq:rpi-bound} we have
 \[ \lg(p^2) = 4 \lg \lg n + O(1) < (\lg \lg N)^6 \leq \lg r_{p,i}. \]
Thus there exists a $p$-admissible length $N_{p,i}$ in the interval
\begin{equation}
\label{eq:Npi-bound}
  2 r_{p,i} < N_{p,i} < \left(1 + \frac{1}{\lg \lg n}\right) 2 r_{p,i}
\end{equation}
such that
 \[ \Crefpolystar(r_{p,i}, p) < \left(2 + \frac{O(1)}{\lg \lg n}\right) \Crefadstar(N_{p,i},p) + O(1). \]
Substituting into \eqref{eq:Cintstar-bound2} yields
\begin{equation}
\label{eq:Cintstar-bound3}
 \Cintstar(n) < \left(8 + \frac{O(1)}{\lg \lg n}\right) \sum_{p \in \Pset(n)} \sum_{i=1}^{d_p} \frac{\gamma_{p,i}}{\lg n} \Crefadstar(N_{p,i}, p) + O(1).
\end{equation}

\step{second call to \textsc{RefinedAdmissibleMultiply}.}
In this step we use Proposition~\ref{prop:Cad-bound} again, to estimate the $\Crefadstar(N_{p,i}, p)$ term in \eqref{eq:Cintstar-bound3}, for a fixed $p \in \Pset(n)$ and $i \in \{1, \ldots, d_p\}$.
We obtain integers $r_{p,i,1}, \ldots, r_{p,i,d_{p,i}}$ such that
\begin{equation}
\label{eq:rpij-bound}
 2^{(\lg \lg N_{p,i})^6} < r_{p,i,j} < 2^{(\lg \lg N_{p,i})^7}
\end{equation}
and weights $\gamma_{p,i,1}, \ldots, \gamma_{p,i,d_{p,i}} > 0$ with $\sum_j \gamma_{p,i,j} = 1$, such that
 \[ \Crefadstar(N_{p,i}, p) < \left(2 + \frac{O(1)}{\lg \lg N_{p,i}}\right) \sum_{j=1}^{d_{p,i}} \gamma_{p,i,j} \Crefpolystar(r_{p,i,j}, p) + O(1). \]
We have $\lg \lg N_{p,i} \geq \lg \lg r_{p,i} \geq \lg \lg \lg n$, so substituting into \eqref{eq:Cintstar-bound3} yields
\begin{equation}
\label{eq:Cintstar-bound4}
 \Cintstar(n) < \left(16 + \frac{O(1)}{\lg \lg \lg n}\right) \sum_{p \in \Pset(n)} \sum_{i=1}^{d_p} \sum_{j=1}^{d_{p,i}} \frac{\gamma_{p,i} \gamma_{p,i,j}}{\lg n} \Crefpolystar(r_{p,i,j}, p) + O(1).
\end{equation}

\step{second call to \textsc{RefinedPolynomialMultiply}.}
In this step we use Proposition~\ref{prop:Cpoly-bound2} to estimate the $\Crefpolystar(r_{p,i,j}, p)$ term in \eqref{eq:Cintstar-bound4}, for a fixed $p \in \Pset(n)$, $i \in \{1, \ldots, d_p\}$ and $j \in \{1, \ldots, d_{p,i}\}$.
The precondition
 \[ (\lg \lg p)^2 < \lg r_{p,i,j} < (\lg p)^{1/2} \]
holds for large $n$, as \eqref{eq:rpij-bound}, \eqref{eq:Npi-bound} and \eqref{eq:rpi-bound} imply that
 \[ (6 \lg \lg \lg n + O(1))^6 < \lg r_{p,i,j} < (7 \lg \lg \lg n + O(1))^7, \]
whereas $(\lg \lg p)^2 = (\lg \lg \lg n + O(1))^2$ and $(\lg p)^{1/2} = (2 \lg \lg n + O(1))^{1/2}$.
We thus obtain an integer $n_{p,i,j}$ in the interval
\begin{equation}
\label{eq:npij-bound}
 2r_{p,i,j} \lg p < n_{p,i,j} < \left(1 + \frac{1}{\lg \lg \lg n}\right) 2 r_{p,i,j} \lg p
\end{equation}
such that
 \[ \Crefpolystar(r_{p,i,j}, p) < \left(2 + \frac{O(1)}{\lg \lg \lg n}\right) \Cintstar(n_{p,i,j}) + O(1). \]
Substituting into \eqref{eq:Cintstar-bound4} produces
 \[ \Cintstar(n) < \left(32 + \frac{O(1)}{\lg \lg \lg n}\right) \sum_{p \in \Pset(n)} \sum_{i=1}^{d_p} \sum_{j=1}^{d_{p,i}} \frac{\gamma_{p,i} \gamma_{p,i,j}}{\lg n} \Cintstar(n_{p,i,j}) + O(1). \]
The weights $\gamma_{p,i} \gamma_{p,i,j} / \lg n$ sum to $1$, so after appropriate reindexing we obtain the desired bound \eqref{eq:Cint-bound}.

Finally, for the logarithmically slow function $\Psi(x) := 2^{(\log \log x)^8}$, let us verify that for large $n$, we have $n_{p,i,j} < \Psi(\Psi(n))$ for all $p$, $i$ and $j$. First, since $\lg \lg N_{p,i} \geq \lg \lg \lg n$, we have $\lg p < 3 \lg \lg n \leq 3 \cdot 2^{\lg \lg N_{p,i}}$, and hence, by \eqref{eq:npij-bound} and \eqref{eq:rpij-bound},
 \[ n_{p,i,j} < 3 r_{p,i,j} \lg p < 9 \cdot 2^{(\lg \lg N_{p,i})^7 + \lg \lg N_{p,i}} < \Psi(N_{p,i}). \]
Then by \eqref{eq:Npi-bound} and \eqref{eq:rpi-bound} we have
 \[ N_{p,i} < 3 r_{p,i} < 3 \cdot 2^{(\lg \lg N)^7} \leq 3 \cdot 2^{(\lg \lg n)^7} < \Psi(n). \]
Since $\Psi(x)$ is increasing, we get the desired inequality $n_{p,i,j} < \Psi(\Psi(n))$.
\end{proof}

Now we may prove the main theorem for integer multiplication.
\begin{proof}[Proof of Theorem \ref{thm:int}]
We have already noted that it suffices to establish that $\Cintstar(n) = O((4\sqrt 2)^{\log^* n})$ (see \eqref{eq:Cintstar-goal}).
Let $z_8$, $z_9$, $C_5$, $C_6$ and $\Psi(x)$ be as in Proposition \ref{prop:Cpoly-bound2}.
Increasing $z_8$ if necessary, we may assume that $z_8 > \exp(\exp(\exp(1)))$ and that $\Psi(x) \leq x-1$ for all $x > z_8$.
Applying Proposition \ref{prop:master} with $K = 32$, $B = C_5/32$, $\Scal = \{1, 2, \ldots\}$, $\ell = 3$, $\kappa = 2$, $x_0 = z_8$, $x_1 = \sigma = z_9$, $L = \max(C_6, \max_{1 \leq n \leq z_9} \Cintstar(n))$, and $T(n) = \Cintstar(n)$ leads immediately to the desired bound.
\end{proof}


\section*{Acknowledgments}

The authors thank Gr\'egoire Lecerf for his comments on a draft of this paper.
The first author was supported by the Australian Research Council (DP150101689 and FT160100219).

\bibliographystyle{amsplain}
\bibliography{cyclomult}

\end{document}